\newtheorem{theorem}{Theorem}[section]
\newtheorem{proposition}[theorem]{Proposition}
\newtheorem{lemma}[theorem]{Lemma}
\newtheorem{corollary}[theorem]{Corollary}
\theoremstyle{definition}
\newtheorem{defn}[theorem]{Definition}
\theoremstyle{remark}
\newtheorem*{remark}{Remark}
\newcommand{\BC}{\mathbb C}
\newcommand{\BE}{\mathbb E}
\newcommand{\BN}{\mathbb N}
\newcommand{\BP}{\mathbb P}
\newcommand{\BR}{\mathbb R}
\newcommand{\BZ}{\mathbb Z}
\newcommand{\eps}{\varepsilon}
\renewcommand{\arg}{\text{arg }}
\renewcommand{\Re}{\text{Re }}
\renewcommand{\Im}{\text{Im }}
\title{Improved Algorithms for Population Recovery from the Deletion Channel}
\author{Shyam Narayanan\thanks{Email: \texttt{shyamsn@mit.edu}. Research supported by the MIT Akamai Fellowship.}}
\affil{Massachusetts Institute of Technology}
\date{\today}
\begin{document}

\maketitle

\begin{abstract}
    The \emph{population recovery} problem asks one to recover an unknown distribution over $n$-bit strings given access to independent noisy samples of strings drawn from the distribution. Recently, Ban et al. \cite{BanCFSS19} studied the problem where the noise is induced through the deletion channel. This problem generalizes the famous \emph{trace reconstruction} problem, where one wishes to learn a single string under the deletion channel.

    Ban et al. showed how to learn $\ell$-sparse distributions over strings using $\exp\big(n^{1/2} \cdot (\log n)^{O(\ell)}\big)$ samples. In this work, we learn the distribution using only $\exp\big(\tilde{O}(n^{1/3}) \cdot \ell^2\big)$ samples, by developing a higher-moment analog of the algorithms of \cite{DeOS17, NazarovP17}, which solve trace reconstruction in $\exp\big(\tilde{O}(n^{1/3})\big)$ samples. We also give the first algorithm with a \emph{runtime} subexponential in $n$, solving population recovery in $\exp\big(\tilde{O}(n^{1/3}) \cdot \ell^3\big)$ samples and time.

    Notably, our dependence on $n$ nearly matches the upper bound of \cite{DeOS17, NazarovP17} when $\ell = O(1)$, and we reduce the dependence on $\ell$ from doubly to singly exponential. Therefore, we are able to learn large mixtures of strings: while Ban et al.'s algorithm can only learn a mixture of $O(\log n/\log \log n)$ strings with a subexponential number of samples, we are able to learn a mixture of $n^{o(1)}$ strings in $\exp\big(n^{1/3 + o(1)}\big)$ samples and time.
\end{abstract}

\newpage

\section{Introduction} \label{Introduction}

\emph{Population Recovery} is an unsupervised learning problem that has recently become of great interest in theoretical computer science \cite{DvirRWY12, MoitraS13, BatmanIMP13, LovettZ15, DeST16, WigdersonY16, PolyanskiySW17, DeOS17b, BanCFSS19, BanCSS19}. In population recovery, there exists some unknown distribution $\mathcal{D}$ over some set (or ``population'') of strings in $\{0, 1\}^n$, which must be learned accurately through queries. For each query, a string $x = x_1 x_2 \cdots x_n$ is drawn from the distribution $\mathcal{D}$, and some noisy version of $x$ is returned, which notably makes determining $\mathcal{D}$ substantially more difficult. The primary noisy versions that have been studied are the \emph{bit-flip} noise model, where each $x_i$ is replaced with $1-x_i$ independently with some probability $q < \frac{1}{2}$, and the \emph{erasure} noise model, where each $x_i$ is independently replaced with some other symbol, such as `?', with some probability $q < 1.$

Recently, Ban et al. \cite{BanCFSS19} considered a version of population recovery under the \emph{deletion channel} noise model. In this model, the noisy version of $x$ is created by removing each $x_i$ independently with probability $0 < q < 1$, but rather than replacing $x_i$ with `?', $x_i$ is simply deleted and the remaining bits are concatenated together. As an example, if $x$ were the string $11001110,$ and if a query of $x$ from the erasure noise model returned $?1?0???0$, the deletion channel would instead return $100,$ without any $?$ symbols. Since the deletion channel throws away information about the initial coordinate of each of the bits in $x$, learning $\mathcal{D}$ in the deletion channel model is noticeably harder than learning $\mathcal{D}$ in the erasure model.

In this work, like in \cite{BanCFSS19}, we consider $\mathcal{D}$ to be a sparse distribution. In other words, there is some integer $\ell$ such that we are promised that $\mathcal{D}$ is a distribution over at most $\ell$ strings. This matches the variant of population recovery studied in \cite{LovettZ15, WigdersonY16, DeST16, BanCFSS19, BanCSS19}, though notably only \cite{BanCFSS19, BanCSS19} considers the deletion channel as opposed to bit-flip or erasure. We now formally define the problem we study in this paper.

\textbf{Population Recovery Problem:} Let $n, \ell$ be positive integers, and let $0 < \eps, p < 1$ be real numbers. Suppose that $\mathcal{D}$ is an unknown distribution over at most $\ell$ unknown strings $x_1, x_2, \dots, x_\ell \in \{0, 1\}^{n}$. Suppose that we are given access to $K$ independent traces, where each trace is formed by first choosing $x$ from the distribution $\mathcal{D}$ over $\{x_1, \dots, x_\ell\}$, and then returning $\tilde{x}$, which is a noisy copy of $x$ in the deletion channel, where each bit of $x$ is independently retained with probability $p$ (i.e., deleted with probability $q = 1-p$). Then, how small can $K$ be so that there is an algorithm that, given the $K$ traces, can reconstruct some distribution $\mathcal{D}'$ so that the total variation distance $d_{\text{TV}}(\mathcal{D}, \mathcal{D}') \le \eps$ with probability at least $2/3$?

Even in the case where $\ell = 1,$ this problem is notably difficult, and is equivalent to the \emph{trace reconstruction problem}, which has been studied extensively over the past two decades \cite{Levenshtein01a, Levenshtein01b, BatuKKM04, KannanM05, HolensteinMPW08, ViswanathanS08, McGregorPV14, DeOS17, NazarovP17, PeresZ17, HoldenPP18, HartungHP18, HoldenL18, Chase19, ChenDLSS20, Chase20}. The trace reconstruction problem asks for the minimum number of traces $K$ such that if $x$ is a single unknown string in $\{0, 1\}^n$, an algorithm can recover $x$ seeing only $K$ independent random traces of $x$, where each trace is a noisy copy of $x$ in the deletion channel, i.e., each trace is formed by independently retaining each bit of $x$ with probability $p$. While this problem is of much active interest, even in this case the bounds are very poorly understood. The state-of-the-art upper bound on $K$ is $\exp\left(\tilde{O}(n^{1/5})\right)$ \cite{Chase20} (which very recently improved on $K = \exp\left(O(n^{1/3})\right)$ \cite{DeOS17, NazarovP17}) for $0 < p < 1$, whereas the state-of-the-art lower bound is significantly smaller, at $\Omega(n^{3/2}/\log^{16} n)$ for constant $0 < p < 1$ \cite{Chase19}. Consequently, one cannot prove polynomial upper bounds or even an $\exp\left(o(n^{1/5})\right)$ upper bound for the population recovery problem under the deletion channel without also improving the upper bound for trace reconstruction. Given the difficulty of trace reconstruction, population recovery may seem almost impossible. It appears that ``this is getting out of hand - now, there are $\ell$ of them!'' But as we will prove, our algorithms can recover a mixture of up to $n^{\eps}$ strings for any $\eps = o(1)$ without either the sample complexity or the runtime suffering by much. 

We remark that apart from population recovery, many variants of trace reconstruction have also recently been studied. These include coded trace reconstruction \cite{CheraghchiGMR19, BrakensiekLS19}, trace reconstruction under cyclic shifts \cite{NR20}, trace reconstruction over matrices \cite{KrishnamurthyMMP19} and over trees \cite{DaviesRR19}, and trace reconstruction under restricted hypotheses on the initial strings \cite{KrishnamurthyMMP19}.

\subsection{Our Results}

As before, let $n$ be the length of the unknown string, $\ell$ be the population size, $p$ be the retention probability (so $q = 1-p$ is the deletion probability), and $\eps$ be the allowed error in total variation distance. Ban et al. \cite{BanCFSS19} was the first paper to study the population recovery problem from the deletion channel. We begin by stating their results.

\begin{theorem} \cite{BanCFSS19} \label{old}
    For parameters $n, \ell, p, \eps,$ there exists an algorithm that can solve the population recovery problem with
\[K = \frac{1}{\eps^2} \cdot \left(\frac{2}{p}\right)^{\sqrt{n} \cdot (\log n)^{O(\ell)}}\]
    samples. However, for any constant $0 < p < 1$ and $\eps = 0.49,$ and for $\ell \le \sqrt{n},$ any algorithm must use at least
\[K = \frac{\Omega(n/\ell^2)^{(\ell+1)/2}}{\ell^{3/2}}\]
    samples. If $\ell \le n^{0.499},$ note that this means any algorithm must use at least $n^{\Omega(\ell)}$ samples.
\end{theorem}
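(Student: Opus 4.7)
My starting point is the single-string mean-based algorithm of \cite{DeOS17, NazarovP17}: estimating the profile $\Pr[\tilde x_j = 1]$ along the trace essentially evaluates the generating polynomial $P_x(z) = \sum_i x_i z^i$ along a curve in $\BC$, and the Nazarov--Peres contour lemma guarantees that any two distinct $\{0,1\}^n$ strings are separated by $(p/2)^{O(\sqrt n)}$ somewhere in this profile, yielding $\exp(\tilde O(n^{1/2}))$ samples. For an $\ell$-sparse mixture $\mathcal{D} = \sum_i w_i \delta_{x_i}$, the single-bit profile only reveals $\sum_i w_i P_{x_i}$, which is insufficient to recover the individual components. I would therefore also estimate the $k$-wise joint trace statistics $F_{\mathcal{D}}(j_1, \ldots, j_k) = \Pr[\tilde x_{j_1} = \cdots = \tilde x_{j_k} = 1]$ for $k$ up to $O(\ell)$; these are multilinear in the $x_i$, and a standard mixture-identifiability argument shows that the first $\ell$ joint moments determine an $\ell$-sparse mixture of indicator strings uniquely.

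\textbf{From estimates to the claimed sample count.} To convert joint statistics into a sample bound, I would consider the signature function $Q_{\mathcal{D}}(z_1, \ldots, z_k) = \sum_i w_i \prod_r P_{x_i}(z_r)$ and apply the Nazarov--Peres contour estimate coordinatewise after fixing a Kruskal-style witness. A direct coordinate-by-coordinate argument gives roughly $\exp(\tilde O(\ell \sqrt n))$ samples, which is weaker than the target. To recover the stated $(\log n)^{O(\ell)}$ factor I would partition the string into overlapping windows of size $w = \Theta(\log n)$, brute-force over the $2^{w\ell} = n^{O(\ell)}$ possible mixture configurations within each window, and stitch consistent windows together; the per-window enumeration contributes the additional $(\log n)^{O(\ell)}$ inflation without multiplying the $\sqrt n$ exponent by $\ell$.

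\textbf{Lower bound plan and main obstacle.} For the lower bound, I would construct two $\ell$-sparse distributions of macroscopic total-variation distance whose induced trace distributions are $K^{-1/2}$-close, then apply Le Cam's two-point method. The tool is moment matching: the space of $\ell$-sparse mixtures is $O(\ell n)$-dimensional, while the first $\lfloor (\ell+1)/2 \rfloor$ joint trace moments impose only $O(n^{\ell/2})$ linear constraints, so by dimension counting there exists a nontrivial signed difference of two $\ell$-sparse mixtures annihilated by all low-order moment constraints; explicit constructions reminiscent of the Holden--Lyons lower bound for $\ell = 1$ should sharpen this dimension argument so that the residual discrepancy is $n^{-(\ell+1)/2}$ in total variation, matching the claimed exponent up to the $\ell^{3/2}$ denominator. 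The main obstacle I foresee is the multivariate extension of Nazarov--Peres in the upper bound: the original one-dimensional contour argument relies on Chebyshev-like extremal behaviour and has no obvious multivariate analog, so obtaining $(\log n)^{O(\ell)}$ rather than the cruder $\ell \cdot \sqrt n$ in the exponent plausibly requires the windowing decomposition above together with a structural fact about the deletion-channel kernel on short windows that I do not currently see how to prove cleanly.
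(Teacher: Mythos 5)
This theorem is a citation of Ban et al.\ \cite{BanCFSS19}; the present paper does not prove it, and only sketches Ban et al.'s approach in Section~2. That sketch makes clear your upper-bound plan diverges from theirs in a fundamental way. Ban et al.\ explicitly observe that mean-based statistics (and more generally, coordinate-wise trace marginals) fail for population recovery even at $\ell = 2$, and they instead work with the \emph{$k$-deck} of the trace --- the vector counting how many times each length-$k$ string occurs as a subsequence of $\tilde x$ --- for $k \approx \sqrt{n} \cdot (\log n)^{O(\ell)}$. The $(\log n)^{O(\ell)}$ factor in their exponent comes out of the deck-size analysis, not from a windowed enumeration. Your proposal of estimating $k$-wise joint trace statistics $\Pr[\tilde x_{j_1} = \cdots = \tilde x_{j_k} = 1]$ and applying a multivariate Nazarov--Peres-style contour bound is, in fact, much closer in spirit to the improved method developed in \emph{this} paper (which constructs unbiased estimators for $P(z;x)^k$ and gets a $\ell^2$-type dependence in the exponent), not to Ban et al.'s $k$-deck method. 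So the proposal is not a reconstruction of the cited proof; it is a sketch of a genuinely different (and in some respects stronger) strategy that the surrounding paper is built around.

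More importantly, you flag the gap yourself: you say you ``do not currently see how to prove cleanly'' the structural fact about the deletion-channel kernel on short windows that your windowing/stitching step requires to recover the $(\log n)^{O(\ell)}$ inflation. That step is load-bearing --- without it your argument only gives an exponent like $\ell \cdot \sqrt n$, which does not match the claimed form (note $(\log n)^{O(\ell)}$ is exponential in $\ell$, so neither bound dominates the other across all regimes; they are simply different). A proof attempt with a confessed missing lemma in the critical step does not establish the stated bound. On the lower bound, your Le Cam/moment-matching plan is plausible at the level of dimension counting, but you do not exhibit the explicit pair of $\ell$-sparse mixtures with matching low-order trace moments whose TV separation is $\Omega(1)$, nor do you carry through the trace-distribution Hellinger or $\chi^2$ calculation needed to actually obtain the stated $\Omega(n/\ell^2)^{(\ell+1)/2} / \ell^{3/2}$ rate, so that half is also only a plan rather than a proof.
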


In this paper, we strongly improve the upper bound of Theorem \ref{old} with our main theorem. 

\begin{theorem} \label{main}
    For parameters $n, \ell, p, \eps,$ there exists an algorithm that can solve the population recovery problem with
\begin{equation} \label{MainEq1}
K = \eps^{-2} \cdot \exp\left(O\left(n^{1/3} (\log n)^{2/3} \ell^{2} p^{-2/3}\right)\right)
\end{equation}
    samples. There also exists an algorithm that can solve the population recovery problem with
\begin{equation} \label{MainEq2}
    K = \eps^{-2} \cdot \exp\left(O\left(n^{1/3} (\log n)^{2/3} \ell^{7/3} p^{-1/3} + (\log n) \ell^3 p^{-1} \right)\right)
\end{equation}
    samples.
\end{theorem}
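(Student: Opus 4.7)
The plan is to develop an $\ell$-th moment analog of the single-variable complex-analytic approach of De--O'Donnell--Servedio and Nazarov--Peres. For each tuple of trace positions $k_1 < k_2 < \cdots < k_\ell$, I would estimate the empirical probability $S_{k_1,\ldots,k_\ell}$ that the trace simultaneously has a $1$ at each of these positions. Single-bit statistics are insufficient when $\ell \geq 2$, since distinct mixtures of two strings can share all single-bit marginals; $\ell$-fold joint statistics are the natural object for disentangling a mixture of $\ell$ strings.

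For a single source string $x \in \{0,1\}^n$, these statistics assemble into an $\ell$-variate generating function $F_x(z_1,\ldots,z_\ell)$ which, after the substitution $w_i = (1-p) + p z_i$, factors as
\[
p^\ell \sum_{1 \le j_1 < \cdots < j_\ell \le n} x_{j_1}\cdots x_{j_\ell}\, w_1^{j_1-1}\, w_2^{j_2-j_1-1}\cdots w_\ell^{j_\ell-j_{\ell-1}-1}.
\]
For a mixture $\mathcal{D} = \sum_i \pi_i \delta_{x^{(i)}}$ the corresponding generating function is the convex combination $F_{\mathcal{D}} = \sum_i \pi_i F_{x^{(i)}}$, and for two candidate mixtures $\mathcal{D}, \mathcal{D}'$ the difference $F_{\mathcal{D}} - F_{\mathcal{D}'}$ is a polynomial whose coefficients are $O(\ell)$-bounded signed combinations of the mixture weights.

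The core technical step is a multivariate analog of the Borwein--Erdélyi-type lower bound used in \cite{NazarovP17, DeOS17}: if $d_{\text{TV}}(\mathcal{D},\mathcal{D}') > \eps$, then $F_{\mathcal{D}} - F_{\mathcal{D}'}$ attains magnitude at least $\eps \cdot \exp(-O(\ell \cdot n^{1/3} (\log n)^{2/3} p^{-2/3}))$ at some point on the image of the polytorus under $z_i \mapsto w_i$. I would try to establish this by specializing $z_i = e^{\sqrt{-1}\,c_i\theta}$ for carefully chosen integer frequencies $c_1,\ldots,c_\ell$ on a short arc around $\theta = 0$, reducing the multivariate problem to a univariate Borwein--Erdélyi bound on a polynomial of degree $O(\ell n)$; the extra factor of $\ell$ relative to the $\ell=1$ case reflects both the larger degree after substitution and the $O(\ell)$ coefficient scale of the difference polynomial. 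The $c_i$ must be chosen so that distinct $\ell$-tuples $(j_1,\ldots,j_\ell)$ do not collapse into the same univariate monomial, but also not so spread-out that the degree blows up by $n^\ell$.

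Given this lower bound, the algorithm simply estimates each $\ell$-joint statistic empirically and outputs any mixture from a sufficiently fine net over candidate mixtures whose induced statistics match the empirical ones to within the proven lower bound. $K$ samples yield additive accuracy $\tilde O(K^{-1/2})$ per statistic, so separating mixtures differing by $\eps$ in TV distance requires $K^{-1/2} \lesssim \eps \cdot \exp(-O(\ell \cdot n^{1/3}(\log n)^{2/3} p^{-2/3}))$; squaring and paying an extra $\ell$ factor (which I expect arises from tracking the $O(\ell)$ coefficient scale through the net argument and the multivariate bound) yields $K = \eps^{-2}\exp(O(\ell^2 \cdot n^{1/3} (\log n)^{2/3} p^{-2/3}))$, matching \eqref{MainEq1}. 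The second bound \eqref{MainEq2} should follow from an iterative peeling procedure that achieves genuinely subexponential runtime at the cost of a slightly worse $\ell$-dependence. The hard part will be the multivariate extremal polynomial inequality itself: naive univariate reductions either produce monomial collisions (so the lower bound fails entirely) or inflate the degree by factors exponential in $\ell$, and choosing a substitution that balances these effects while maintaining only polynomial $\ell$-dependence in the exponent will be the most delicate part of the analysis.
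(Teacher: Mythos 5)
The high-level intuition — use higher-order joint statistics and a Borwein--Erd\'elyi transfer — matches the paper, but your execution diverges from it in ways that create at least two genuine gaps.

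\textbf{Identifiability of exactly-$\ell$-tuple statistics.} You restrict to $\ell$-tuples $j_1 < \cdots < j_\ell$ and their corresponding generating function $F_x(z_1,\ldots,z_\ell)$, whose coefficients are products of exactly $\ell$ distinct bits. These statistics do not identify an $\ell$-sparse mixture even information-theoretically. For any $a \in (0,1)$, $n \ge 2$, and $\ell \ge 2$, compare $\mathcal{D} = a\,\delta_{1^n} + (1-a)\,\delta_{0^n}$ with $\mathcal{D}' = a\,\delta_{1^n} + (1-a)\,\delta_{e_1}$ (where $e_1 = 10\cdots 0$). Since $e_1$ has a single $1$, every product $(e_1)_{j_1}\cdots(e_1)_{j_\ell}$ over distinct positions vanishes for $\ell \ge 2$, exactly as for $0^n$; thus the $\ell$-tuple statistics of $\mathcal{D}$ and $\mathcal{D}'$ agree identically, yet $d_{\text{TV}}(\mathcal{D},\mathcal{D}') = 1-a$. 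The trace $\ell$-tuple statistics are linear transforms of the source $\ell$-tuple statistics, so $F_{\mathcal{D}} - F_{\mathcal{D}'} \equiv 0$ and no extremal polynomial bound can save you. The paper avoids this by estimating the moments $\BE[P(z;x)^k]$ for \emph{all} $k \le 2\ell - 1$; expanding $P(z;x)^k$ over $\{0,1\}^n$ yields products of up to $k$ distinct bits, which combined with a Vandermonde/Prony-type identifiability lemma (Lemma~\ref{Vandermonde}) give exactly the $2\ell - 1$ moments needed.

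\textbf{The multivariate extremal inequality.} You correctly identify that a multivariate Borwein--Erd\'elyi lower bound would be the crux, but leave it open, and your proposed reduction $z_i = e^{\sqrt{-1}\,c_i\theta}$ has exactly the collision-versus-degree tension you flag. Moreover the substitution is in the $z_i$'s, while the polynomial lives in $w_i = (1-p) + p z_i$, which are not pure exponentials in $\theta$, so you do not cleanly land on a univariate polynomial on an arc of the unit circle. The paper never faces this problem: Lemma~\ref{Vandermonde} reduces the distinguishing question to a lower bound on $\prod_{i>j}\lvert P(z;x^{(i)}) - P(z;x^{(j)})\rvert$ for a single $z$, which is a univariate polynomial of degree $O(\ell^2 n)$ with coefficients bounded by $n^{\ell^2}$ and nonzero constant term after dividing by a power of $z$, so Theorem~\ref{Littlewood} applies directly (Lemma~\ref{ComplexAnalysis}). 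The $\ell^2$ in the exponent of \eqref{MainEq1} comes from this degree blowup, not from coefficient tracking.

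\textbf{The second bound.} Your explanation of \eqref{MainEq2} as arising from ``iterative peeling'' with a runtime tradeoff is not what happens. Both \eqref{MainEq1} and \eqref{MainEq2} are sample-complexity bounds for the same brute-force hypothesis test; they differ only in the region of $\BC$ from which $z$ is chosen — the unit circle with small argument for \eqref{MainEq1}, versus the disk $\lvert z - (1 - p/m)\rvert \le p/m$ for \eqref{MainEq2}. The faster algorithm (with subexponential runtime and Prony's method / linear programming) is Theorem~\ref{MainAlg}, a separate result.
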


Note that Equation \eqref{MainEq1} gives a better upper bound when $p^{-1} < \ell$ or $\frac{n}{\log n} < p^{-1} \cdot \ell^3,$ whereas Equation \eqref{MainEq2} gives a better bound otherwise.

One issue with the algorithms for both Theorem \ref{old} and Theorem \ref{main} is that both algorithms have runtime exponential in $n$. This is because both algorithms, while having query complexity subexponential in $n$, essentially check all possible $\ell$-sparse distributions $\mathcal{D}$ to find a suitable match. Therefore, a natural question is whether one can achieve a faster algorithm as well. In this paper, we provide the first algorithm for population recovery under the deletion channel with runtime subexponential in $n$. Namely, we prove the following theorem:

\begin{theorem} \label{MainAlg}
     Let parameters $n, \ell, p, \eps$ be as usual. Suppose that $0 < \alpha \le 1$ is some parameter so that for all $x$ in the support of our unknown distribution $\mathcal{D},$ $\BP_{y \sim \mathcal{D}}(y = x) \ge \alpha$. Then, there exists an algorithm that can solve the population recovery problem with samples and runtime bounded by
\begin{equation} \label{MainEq3}
    (\eps^{-2} + \alpha^{-2} \log \alpha^{-1}) \exp\left(O\left(n^{1/3} (\log n)^{2/3} \ell^{2} p^{-2/3}\right)\right).
\end{equation}
    This algorithm can also be modified so that even if $\alpha$ is unknown and possibly arbitrarily small, the algorithm can solve the population recovery problem with samples and runtime bounded by
\begin{equation} \label{MainEq4}
    \eps^{-2} \log \eps^{-1} \exp\left(O\left(n^{1/3} (\log n)^{2/3} \ell^{3} p^{-2/3}\right)\right).
\end{equation}
\end{theorem}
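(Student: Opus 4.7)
My plan is to upgrade the $\exp(\tilde O(n^{1/3}))$-sample algorithm underlying Equation~\eqref{MainEq1} of Theorem~\ref{main}---whose runtime was exponential in $n$ because it exhaustively enumerates all $\ell$-sparse candidate distributions---into one with subexponential runtime via iterative support-building with hypothesis testing. From the proof of Equation~\eqref{MainEq1} I would first extract a \emph{testing primitive}: given a proposed $\ell$-sparse distribution $\mathcal{D}'$ and a batch of $\tilde K = \gamma^{-2}\exp(\tilde O(n^{1/3})\ell^2 p^{-2/3})$ fresh traces, the higher-moment statistics built in that proof can be evaluated in $\mathrm{poly}(\tilde K,\ell,n)$ time both on the samples and on $\mathcal{D}'$, and compared to decide with high probability whether $d_{\mathrm{TV}}(\mathcal{D},\mathcal{D}') \le \gamma$ or $> 2\gamma$.

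The algorithm then builds the distribution prefix by prefix. Maintain, for each $k = 0,1,\dots,n$, a set $C_k$ of candidate $\ell$-sparse distributions on $\{0,1\}^k$ that pass a prefix-level version of the test at parameter $\gamma := \alpha/\mathrm{poly}(n,\ell)$, obtained by applying the testing primitive to the initial segment of each trace consisting of bits most likely to have come from the first $k$ original positions. Form $C_{k+1}$ by enumerating, for every $\mathcal{D}_k \in C_k$, all extensions: each of the $\le \ell$ support strings is extended by a $0$, a $1$, or has its mass split across both, with masses laid out on a grid of resolution $\alpha/\mathrm{poly}(n,\ell)$. Each parent generates at most $2^\ell (1/\alpha)^{O(\ell)}$ children, each tested and kept only on acceptance.

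The key structural claim is $|C_k| \le (1/\alpha)^{O(\ell)}$ uniformly in $k$. Every surviving candidate is within total variation $\alpha/4$ of the true $k$-bit prefix distribution; since each true support element has mass $\ge \alpha$, each survivor must share the truth's support, so its mass vector lies in a bounded region of the $\ell$-simplex intersected with the grid, of size $(1/\alpha)^{O(\ell)}$. Combined with the per-iteration enumeration cost and $n$ iterations, the total runtime matches Equation~\eqref{MainEq3}; the additional $\alpha^{-2}\log\alpha^{-1}$ sample factor provides the Chernoff and union-bound slack needed for the testing primitive to succeed across all $n \cdot (1/\alpha)^{O(\ell)}$ tests simultaneously. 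The main obstacle will be (i) constructing the prefix-level testing primitive rigorously---showing that the statistic of Equation~\eqref{MainEq1}, designed for full-length traces, admits a clean reduction to $k$-bit prefix distributions---and (ii) ensuring prefix-level $\gamma$-approximations compose into a final $\eps$-approximation of $\mathcal{D}$ without errors compounding across iterations.

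For Part~2 (unknown $\alpha$), I would apply Part~1 with the threshold $\alpha := \eps/(2\ell)$, since any support element of mass below this contributes at most $\eps/2$ in total to TV distance and may be safely discarded. The resulting leading sample factor $(\ell/\eps)^2\log(\ell/\eps)$ breaks into a $\log\eps^{-1}$ piece kept out front as in Equation~\eqref{MainEq4}, while the residual $\ell^2\log\ell$ is absorbed into the exponent by the passage from $\ell^2$ to $\ell^3$.
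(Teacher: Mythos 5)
Your approach is genuinely different from the paper's, and it has two gaps that I believe are fatal.

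\textbf{The prefix-level testing primitive does not exist cleanly.} The deletion channel destroys position information: the first $j$ bits of a trace $\tilde{x}$ do not correspond in any clean way to the first $k$ bits of the source string $x$, since you do not know how many of $x_1,\dots,x_k$ survived. Your proposal to use ``the initial segment of each trace consisting of bits most likely to have come from the first $k$ original positions'' is exactly where this fails---there is no unbiased estimator of prefix statistics built from a prefix of the trace, because the map from trace-position to source-position is itself random and unobserved. The single-string LP approach of Holenstein et al.\ and De--O'Donnell--Servedio that you are generalizing does \emph{not} look at prefix traces: it uses the full trace to estimate coefficients of a polynomial $P(z;x)=\sum x_i z^i$, and exploits that the degree-$i$ coefficient is the integer $x_i\in\{0,1\}$. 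The paper's proof of Theorem~\ref{MainAlg} follows that coefficient-by-coefficient route, but the crucial new idea is \emph{which} polynomials to solve for: not $\sum_i a_i P(z;x^{(i)})^k$ (whose coefficients are real, so errors compound exactly as you worry about in your point (ii)), but the elementary symmetric polynomials $\sigma_j(P(z;x^{(1)}),\dots,P(z;x^{(\ell)}))$, whose coefficients are nonnegative integers bounded by $n^{O(\ell)}$. These are recovered from the power-sum estimates $\sum_i a_i P(z;x^{(i)})^k$ via a robust version of Prony's method (Lemmas~\ref{Alg1}, \ref{UsingWoodbury}, \ref{Alg2}), then coefficient-by-coefficient via LP (Lemma~\ref{Alg3}), then factored to recover the strings (Lemma~\ref{Factor}), and finally the mixture weights $a_i$ come from a single LP. Your proposal does not identify this family of integer-coefficient polynomials, which is the missing idea that makes coefficient-by-coefficient recovery possible.

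\textbf{The runtime does not match Equation~\eqref{MainEq3}.} Your enumeration of children lays mass vectors on a grid of resolution $\alpha/\mathrm{poly}(n,\ell)$ in the $\ell$-simplex, so each parent spawns $(\mathrm{poly}(n,\ell)/\alpha)^{\Theta(\ell)}$ children before pruning, and even the surviving set $C_k$ is stated to have size $(1/\alpha)^{O(\ell)}$. This gives a runtime of at least $(1/\alpha)^{\Omega(\ell)}\cdot\exp(O(n^{1/3}(\log n)^{2/3}\ell^2 p^{-2/3}))$, which is exponentially worse than the claimed $\alpha^{-2}\exp(\cdots)$ whenever $\ell>2$ and $\alpha$ is small (e.g.\ $\alpha = 2^{-\sqrt{n}}$). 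The paper avoids mass-vector enumeration entirely: the support-recovery phase never enumerates weights, and the weights are found at the very end by solving a single linear program against the estimated moments. The only place $\alpha$ enters the paper's complexity is through the guesses $m_1\le\log\alpha^{-1}$ of a dyadic bracket for $\min a_i$ (needed by Lemma~\ref{Alg2}), and there are only $O(\ell^2(\log\alpha^{-1})^2)$ such guesses, yielding the $\alpha^{-2}\log\alpha^{-1}$ factor.

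For Part~2, your threshold $\alpha:=\eps/(2\ell)$ is natural, but the paper's unknown-$\alpha$ argument (proof of Equation~\eqref{MainEq4}) is subtler: it finds a cut point $k$ where the tail mass $a_{k+1}+\cdots+a_{\ell'}$ is both $\le\eps/4$ and negligibly small relative to $a_k$ in the sense of Proposition~\ref{Modification2}, so that the low-mass strings can be treated as adversarial noise that the Prony subroutine tolerates. The extra factor of $\ell$ in the exponent of Equation~\eqref{MainEq4} comes from bounding $\log\alpha^{-1}$ via the lower bound on $a_k$, not from folding $\ell^2\log\ell$ into the exponent as you suggest.
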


\begin{remark}
    We note that the bounds of Theorem \ref{MainAlg} correspond to Equation \eqref{MainEq1} of Theorem \ref{main}: we either replace the multiplicative factor of $\eps^{-2}$ with $\eps^{-2} + \alpha^{-2} \log \alpha^{-1}$ (as in Equation \eqref{MainEq3}) or both a multiplicative factor of $\log \eps^{-1}$ and an additional factor of $\ell$ in the exponent (as in Equation \eqref{MainEq4}). One can also produce analogous bounds that correspond to Equation \eqref{MainEq2}, though the proof is almost identical, so in this paper we do not prove the bounds corresponding to Equation \eqref{MainEq2}.
\end{remark}

Our results have three significant improvements over the previous upper bound of Ban et al. First, we reduce the exponent's dependence on $n$ from $\tilde{O}(n^{1/2})$ to $\tilde{O}(n^{1/3})$. Second, the dependence on $\ell$ is reduced from doubly exponential to nearly singly exponential, which is more in line with the lower bound in Theorem \ref{old}. Note that the upper bound of Theorem \ref{old} shows that for $\ell = o\left(\frac{\log n}{\log \log n}\right),$ $p$ constant, and $\eps^{-1}$ polynomially bounded, only $\exp\left(n^{1/2 + o(1)}\right)$ queries are needed. However, both Theorems \ref{main} and \ref{MainAlg} allow the mixture to have up to $\ell = n^{o(1)}$ distinct strings and still only $\exp\left(n^{1/3+o(1)}\right)$ queries are needed for constant $p$, which is a major improvement over Theorem \ref{old}. The final improvement is that we now also have a much faster algorithm, which has runtime exponential in $n^{1/3}$ rather than exponential in $n$ as in \cite{BanCFSS19}.

We note that our new bounds may be weaker than the previous bounds when $p = o(n^{-1/2}),$ i.e. when the deletion probability is sufficiently close to $1$. However, we show a simple reduction to the $p = n^{-1/2}$ case, and as a result prove the following theorem as well.
    
\begin{theorem} \label{Smallp}
    For parameters $n, \ell, p, \eps,$ if $p \le \frac{1}{2} \cdot n^{-1/2},$ there exists an algorithm that can solve the population recovery problem with
\[K = \eps^{-O(\log p^{-1})} \cdot \exp\left(O\left(\sqrt{n} \log n \cdot \ell^3 \cdot \log p^{-1}\right)\right)\]
    samples.
\end{theorem}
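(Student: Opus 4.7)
The plan is to reduce the case $p \le \tfrac{1}{2} n^{-1/2}$ to the borderline case $\hat{p} = n^{-1/2}$, where a refinement of Theorem~\ref{main} (combining the moment-based $\sqrt{n}$-exponent analysis of Ban et al.\ with the improved $\ell^3$-dependence we develop in the proof of Theorem~\ref{main}) gives sample complexity $\exp(O(\sqrt{n}\log n \cdot \ell^3))$. This borderline bound is what we want because it avoids the $p^{-2/3}$ factor that makes a direct application of Theorem~\ref{main} weak when $p$ is very small.

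The reduction rests on a coupling: a trace at retention $p$ is distributed identically to a trace at retention $\hat p$ followed by an independent per-bit deletion step with retention $p/\hat p \in (0, 1/2]$. This means $p$-traces are noisier versions of $\hat p$-traces, so any moment of the $\hat p$-trace distribution can be estimated from $p$-data with a variance inflation of $(\hat p/p)^{O(d)}$, where $d$ is the polynomial degree of the moment in the trace indicators. The moment-based algorithm at $\hat p = n^{-1/2}$ uses statistics of effective degree $d = \tilde{O}(\sqrt{n})$, so the overhead is $(\hat p/p)^{\tilde{O}(\sqrt{n})} = \exp(\tilde{O}(\sqrt{n}\log p^{-1}))$, giving the claimed exponent $\exp(O(\sqrt{n}\log n \cdot \ell^3 \cdot \log p^{-1}))$.

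The prefactor $\eps^{-O(\log p^{-1})}$ arises from an iterated boosting: the reduction splits naturally into $O(\log p^{-1})$ stages (progressively doubling the effective retention from $p$ toward $\hat p$), each introducing a constant-factor loss in total-variation error. Compensating for this loss by standard amplification requires per-stage accuracy $\eps^{1+O(1/\log p^{-1})}$, leading to the stated polynomial-in-$\eps^{-1}$ dependence with exponent $O(\log p^{-1})$.

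The main obstacle is controlling the variance of the $\hat{p}$-moment estimators when evaluated on $p$-traces: a crude bound gives variance $(\hat{p}/p)^{2d}$, which is quadratic in the overhead we need. To obtain the tighter $(\hat p/p)^{O(d)}$ scaling, I would revisit the polynomial constructions from the proof of Theorem~\ref{main} and verify that they can be re-optimized for degree $d = \tilde{O}(\sqrt{n})$, which is the right choice at $\hat p = n^{-1/2}$ rather than the smaller $\tilde{O}(n^{1/3})$ optimal at constant $p$. Showing that such a tuning exists for every $\ell$-sparse distribution, and that the associated second moments can be controlled uniformly, is the technical heart of the argument.
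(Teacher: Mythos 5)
Your high-level goal of reducing to $\hat p = n^{-1/2}$ and invoking Equation \eqref{MainEq2} of Theorem \ref{main} to get a base complexity of $\exp\bigl(O(\sqrt{n}\log n\cdot \ell^3)\bigr)$ is the right idea, but the mechanism you propose for the reduction goes the wrong way and would not work. You observe that a $p$-trace is a $\hat p$-trace followed by further independent deletion (correct), and you conclude that ``any moment of the $\hat p$-trace distribution can be estimated from $p$-data with variance inflation $(\hat p/p)^{O(d)}$.'' This is information-theoretically backwards: the $p$-trace is a coarsening of the $\hat p$-trace, so a generic statistic of $\hat p$-traces does not admit an unbiased estimator from $p$-data at all (in the extreme $\hat p = 1$, this would require unbiasedly recovering a function of $x$ itself). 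The paper's reduction exploits the opposite direction: it \emph{waits} for the rare event that the $p$-trace is long (length $\ge t$), and conditionally on that event \emph{subsamples} it to simulate a $\hat p$-trace conditioned on having length $\le t$, which is a TV-small conditioning when $t$ is chosen large enough. The overhead is then the reciprocal of the rare-event probability, which Lemma \ref{Binomial} relates to the analogous $\hat p$-probability via an $O(\log p^{-1})$ power; this is also where the $\eps^{-O(\log p^{-1})}$ prefactor and the product $\ell^3 \log p^{-1}$ in the exponent come from, since the target rare-event probability already carries a factor $\eps^2 \exp(-\Theta(\ell^3\sqrt{n}\log n))$.

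Two further problems. First, even granting a variance-inflation estimator, variance is not the bottleneck that causes Theorem \ref{main} to degrade at small $p$: the $p^{-1/3}$ and $p^{-1}$ terms in Equation \eqref{MainEq2} come from the Littlewood-type bound in Lemma \ref{ComplexAnalysis} (the arc of the unit circle, or the disk $|z-(1-p/m)|\le p/m$, shrinks with $p$), so an estimator-variance argument alone cannot recover the $\sqrt{n}$-scaling. Second, your iterated doubling of the effective retention over $O(\log p^{-1})$ stages, each losing a constant TV factor, would compound to a $\mathrm{poly}(p^{-1})$ multiplicative loss rather than producing the specific $\eps^{-O(\log p^{-1})}$ form; and your final arithmetic (base $\exp(O(\sqrt{n}\log n\cdot\ell^3))$ times overhead $\exp(\tilde O(\sqrt{n}\log p^{-1}))$) gives a \emph{sum} $\ell^3 + \log p^{-1}$ in the exponent, not the stated product $\ell^3\cdot\log p^{-1}$. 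These are symptoms of the reduction mechanism being left unspecified; the paper's subsampling argument resolves all of them at once.
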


Theorem \ref{Smallp} gets matching bounds with respect to $n$ and $p$ for $p \le n^{-1/2},$ but with significantly better bounds with respect to $\ell$. Thus, our upper bounds are stronger than Theorem \ref{old} for all parameter regimes, except when both $p \le \frac{1}{2} \cdot n^{-1/2}$ and $\eps \le e^{-\ell^3 \cdot \sqrt{n} \log n}$.

\subsection{Comparison to Trace Reconstruction and Other Population Recovery Models}

As we noted before, our bounds for constant $\ell$ nearly match the known bounds of \cite{DeOS17, NazarovP17} when $\ell = 1,$ which require $\exp\left(O(p^{-1/3}n^{1/3})\right)$ samples when $p \ge n^{-1/2}$, so we lose only a $(\log n)^{2/3}$ factor in our exponent. 
After this paper was initially released, however, Chase \cite{Chase20} improved the sample complexity of trace reconstruction to $\exp\left(\tilde{O}(n^{1/5})\right)$, as stated above.
Since the bounds of \cite{DeOS17, NazarovP17} were the best known for trace reconstruction at the time of this paper's (as well as Ban et al.'s \cite{BanCFSS19}) initial release, a natural open question is whether our bounds for population recovery (for both query and time) can be improved to $\exp\left(\tilde{O}(n^{1/5} \cdot \text{poly}(\ell))\right)$.

Finally, we compare our results to population recovery under the erasure and bit-flip noise models. We remark that population recovery for an $\ell$-sparse distribution under these two models has significantly better upper bounds for constant $0 < p < 1$. Indeed, in both the bit-flip and the erasure model, there exist algorithms that can reconstruct $\mathcal{D}$ up to total variation distance $\eps$ in samples polynomial in $n, \ell,$ and $1/\eps$ \cite{MoitraS13, DeST16}. However, we note that bounds like these cannot even be possible in the deletion channel model, since if $\eps = 0.49$ and $\ell = n^{0.49}$, any algorithm must take $n^{\Omega(\ell)}$ queries by Theorem \ref{old}. Moreover, note that in the case of $\ell = 1$ for the bit-flip or erasure noise models, reconstruction is easily achievable in $O(\log n)$ traces, whereas we cannot hope for anything better than $\tilde{\Omega}(n^{3/2})$ in the deletion channel noise model, due to the trace reconstruction lower bound of \cite{Chase19}.

\subsection{Outline of Paper}

We briefly outline the rest of this paper. In Section \ref{Outline}, we explain the methods used in proving our main theorems, and compare our techniques to those of previous papers. In Section \ref{Preliminaries}, we explain some useful preliminary results. In Section \ref{PopulationRecoveryBounds}, we prove Theorem \ref{main}. In Section \ref{AlgSection}, we prove Theorem \ref{MainAlg}. Finally, in Section \ref{SmallpSection}, we prove Theorem \ref{Smallp} by a simple modification of Theorem \ref{main}.

\section{Proof Outline} \label{Outline}

Our approach can be viewed as a higher-moment generalization of the technique of \cite{DeOS17, NazarovP17}, which proved trace reconstruction was doable in $\exp\left(O\left(n^{1/3}\right)\right)$ samples. First, we explain the ideas used in previous papers. We then outline the ideas that we develop to prove Theorems \ref{main} and \ref{MainAlg}.

\subsection{Techniques of Previous Papers}

The proofs in \cite{DeOS17, NazarovP17} are very similar, but we will explain the idea of Nazarov and Peres \cite{NazarovP17} as it is slightly simpler. We will also show only how an algorithm can distinguish between two distinct strings $x, x'$ for any $x \neq x'$ rather than reconstruct $x$, as there exists a simple reduction from the former to the latter. The main idea of Nazarov and Peres was to choose a complex number $z$, and for the unknown string $x$, construct an unbiased estimator for $P(z; x):= \sum_{i = 1}^{n} x_i z^i,$ which is a degree $n$ polynomial in $z$, using only the trace $\tilde{x}$. They showed, using a simple combinatorial argument, that if one was given a trace $\tilde{x} = \tilde{x}_1 \cdots \tilde{x}_n$ of $x$, where we have padded the trace with additional $0$'s, then $\sum \tilde{x}_i w^i$ is an unbiased estimator of $P(z; x)$ for $w = \frac{z-q}{p},$ up to a scaling factor. Recall that $q$ is the deletion probability and $p = 1-q$ is the retention probability. They then used a result of Peter Borwein and Tam\'{a}s Erd\'{e}lyi \cite{BorweinE97}, which proves that for any strings $x \neq x' \in \{0, 1\}^n$ and any $\eps < 1,$ there is some complex number $z$ with magnitude $1$ and with argument at most $O(\eps)$ such that $|P(z; x)-P(z; x')| \ge \exp\left(-\eps^{-1}\right).$ Thus, the unbiased estimators of $P(z; x)$ can be used to successfully distinguish between $x$ and $x',$ assuming that we can bound the variance of $\sum \tilde{x}_i w^i.$ However, assuming that $p$ is a constant between $0$ and $1$, one can easily verify that if $|z| = 1$ and $|\arg z| = O(\eps),$ then $w = 1+O(\eps^2),$ which means $|\sum \tilde{x}_i w^i|$ is uniformly bounded by $e^{O(\eps^2 \cdot n)}.$ Choosing $\eps = n^{-1/3}$ and using the Chebyshev inequality bound shows that $\exp\left(O\left(n^{1/3}\right)\right)$ samples of our unbiased estimator is sufficient to distinguish $P(z; x)$ from $P(z; x'),$ and thus $x$ from $x'.$

The above method is known as a mean-based algorithm. This is because the algorithm above only uses the sample means of $\sum \tilde{x}_i w^i,$ and therefore the algorithms of \cite{DeOS17, NazarovP17} reconstruct $x$ as a function of only the \emph{sample means} of $\tilde{x}_i$ for each $i$. Ban et al. \cite{BanCFSS19} note that a mean based algorithm cannot work for population recovery, even in the case $\ell = 2.$ For instance, if one considers $\mathcal{D}_0$ as a uniform mixture of the strings $0^n = 00 \cdots 0$ and $1^n = 11 \cdots 1,$ and $\mathcal{D}_1$ as a uniform mixture of the strings $0^{n/2} 1^{n/2}$ and $1^{n/2} 0^{n/2},$ then $\BE_{x \sim \mathcal{D}_0}[\tilde{x}_i] = \BE_{x \sim \mathcal{D}_1}[\tilde{x}_i]$ for all $i$. In other words, for all $i \le n,$ the expectation of $\tilde{x}_i$ is the same regardless of whether $x$ is drawn from $\mathcal{D}_0$ or from $\mathcal{D}_1.$ To avoid this issue, Ban et al. uses a very different approach, based on what is called the the \emph{$k$-deck} of $\tilde{x},$ which counts the number of times that each string $s$ of length $k$ appears as a subsequence of $\tilde{x}$. They need $k$ to be approximately $\sqrt{n} \cdot (\log n)^{O(\ell)},$ and need exponential in $k$ traces to get a sufficiently good approximation of the $k$-deck of the original distribution, which counts the expected number of times that each $s$ appears as a subsequence of $x$ drawn from $\mathcal{D}$. This allows them to get a sample complexity bound of approximately $\exp\left(\sqrt{n} \cdot (\log n)^{O(\ell)}\right)$.

Finally, for the trace reconstruction problem, \cite{DeOS17} briefly noted that one could use the ideas of \cite{HolensteinMPW08} to reconstruct the original string using both $\exp\left(O\left(n^{1/3}\right)\right)$ queries and time. The idea in \cite{HolensteinMPW08} for a fast algorithm was to use linear programming to reconstruct the bits of the original string $x$ one at a time. The natural problem we are attempting to solve is an integer linear program, as $x \in \{0, 1\}^n$ and $P(z; x) = \sum x_i z^i$ is a linear function of $x$. However, if we know the values of $x_1, \dots, x_{i-1}$ and are trying to determine $x_i,$ it turns out that we can solve for $x_i \in \{0, 1\}$ even if we relax $x_{i+1}, \dots, x_n$ to be in the interval $[0, 1]$, so one can use linear programming to solve for $x$ efficiently. This idea will be a useful step in proving Theorem \ref{MainAlg}.

\subsection{Our Techniques: Theorem \ref{main}}

The ideas of \cite{DeOS17, NazarovP17} serve as our starting point for tackling the population recovery problem. However, to overcome the limitations of mean-based algorithms, we utilize higher moments. A simple motivation for this is: in the example of $\mathcal{D}_0$ a mixture of $0^n$ and $1^n$ and $\mathcal{D}_1$ a mixture of $0^{n/2}1^{n/2}$ and $1^{n/2} 0^{n/2},$ there is more covariance between $\tilde{x}_i$ and $\tilde{x}_j$ for any $i \neq j$ if $x$ were drawn from $\mathcal{D}_0$ than if $x$ were drawn from $\mathcal{D}_1.$ Given a random trace $\tilde{x}$ of some string $x$, our first step will be to construct an unbiased estimator of $P(z; x)^k$ only based on $\tilde{x}$ instead of $x$, rather than just an unbiased estimator of $P(z; x)$. (Recall that $P(z; x) := \sum_{i = 1}^{n} x_i z^i$.) As a result, if we have an $\ell$-sparse distribution $\mathcal{D}$ over $\{0, 1\}^n$ that equals $x^{(i)} \in \{0, 1\}^n$ with probability $a_i$ for $1 \le i \le \ell,$ our estimator will have expectation $\sum_{i = 1}^{\ell} a_i P(z; x^{(i)})^k$.

Even after noting this, we are still left with two major technical challenges that are more difficult in the case of general $k$ than in the $k = 1$ case. First, we must successfully produce an unbiased estimator of $P(z; x)^k$ using just random traces of $x$. Second, we must prove that as long as $\mathcal{D}_0$ and $\mathcal{D}_1$ differ sufficiently, the averages of the $P(x; z)^k$'s (weighted according to the mixture weights) also differ significantly.

How do we actually construct an unbiased estimator of $P(z; x)^k$? Doing so is more difficult for general $k$ than $k = 1,$ where we can use $\sum \tilde{x}_i w^i$ for some $w \in \BC.$ For general $k$, the rough idea is to consider complex numbers $w_1, w_2, \dots, w_k \in \BC$ and look at $\sum_{1 \le i_1 < \dots < i_k \le n} \tilde{x}_{i_1} \cdots \tilde{x}_{i_k} w_1^{i_1} w_2^{i_2} \cdots w_k^{i_k}.$ One can provide an explicit formula for the expectation of this sum, and if $w_1, \dots, w_k$ are carefully chosen in terms of $z,$ one can get an unbiased estimator for $\sum_{1 \le i_1 < \dots < i_k \le n} x_{i_1} \cdots x_{i_k} z^{i_1+\dots+i_k},$ which looks very similar to the expansion of $P(z; x)^k,$ with the exception of ignoring all terms where some of the $x_i$'s are equal. However, the remaining terms in $P(z; x)^k$ can be grouped into sums of the form $\sum_{1 \le i_1 < \cdots < i_{k'} \le n} x_{i_1} \cdots x_{i_{k'}} z^{b_1 i_1 + \cdots + b_{k'} i_{k'}}$ for some $k' \le k$ and some positive integers $b_1, \dots, b_{k'}$ that add to $k$. We can get unbiased estimators for these terms from $\sum_{1 \le i_1 < \dots < i_{k'} \le n} \tilde{x}_{i_1} \cdots \tilde{x}_{i_k} w_1^{i_1} w_2^{i_2} \cdots w_{k'}^{i_{k'}}$ for possibly varying choices of $w_1, w_2, \dots, w_{k'},$ but overall by adding all of these sums, we can get an unbiased estimator for $P(z; x)^k.$

Our unbiased estimators give us good approximations for $\sum a_i P(z; x^{(i)})^k$ - we will construct such estimators for all integers $k$ between $0$ and $2 \ell.$ Now, let $\mathcal{D}_0$ be a distribution that equals $x^{(i)}$ with probability $a_i$ for some strings $x^{(i)}$ and probabilities $a_i,$ and let $\mathcal{D}_1$ be a distribution that equals $y^{(i)}$ with probability $b_i$ for some strings $y^{(i)}$ and probabilities $b_i$. The goal is to show that if $\mathcal{D}_0$ and $\mathcal{D}_1$ are $\ell$-sparse distributions over $\{0, 1\}^n$ with total variation distance at least $\eps$,
\begin{equation} \label{Quantity}
    \left|\sum_{i = 1}^{\ell} a_i P(z; x^{(i)})^k - \sum_{i = 1}^{\ell} b_i P(z; y^{(i)})^k\right|
\end{equation}
is not too small for some $k \le 2 \ell$ and some $z \in \BC$. This suffices to distinguish between distributions $\mathcal{D}_0$ and $\mathcal{D}_1,$ since we can average the unbiased estimators of $\sum a_i P(z; x^{(i)})^k$ over sufficiently many samples to successfully distinguish $\mathcal{D}_0$ from $\mathcal{D}_1.$ To actually prove this, we reduce this problem to providing a lower bound on $\prod_{x \neq x'} |P(z; x) - P(z; x')|$ for all pairs of distinct strings $x, x' \in \{0, 1\}^n$ such that $x, x' \in \{x^{(1)}, \dots, x^{(\ell)}, y^{(1)}, \dots, y^{(\ell)}\}.$ We do this by constructing a particular \emph{Vandermonde matrix} using the complex numbers $P(z; x^{(i)})$ and $P(z; y^{(i)})$ that has determinant $\prod_{x \neq x'} (P(z; x) - P(z; x')),$ which is also a polynomial in $z$. We finally use the results from \cite{BorweinE97} to provide a lower bound on the magnitude of this polynomial for some $z$ with magnitude $1$ and argument approximately bounded by $n^{-1/3}.$ We can use this to provide a lower bound on the least singular value of the Vandermonde matrix, which will give us a lower bound for the quantity in Equation \eqref{Quantity}. This will turn out to be sufficient for distinguishing between any $\mathcal{D}_0$ and $\mathcal{D}_1$ with total variation distance at least $\eps$.

\subsection{Our Techniques: Theorem \ref{MainAlg}}

In the trace reconstruction problem, we can recover the polynomial $P(z; x)$ by recovering the coefficients one at a time using linear programming. A natural follow-up idea is to recover the polynomials $\sum_{i = 1}^{\ell} a_i P(z; x^{(i)})^k$ for all $k \le O(\ell)$ by determining the coefficients one at a time. However, this will turn out to be problematic, as in the trace reconstruction algorithm, recovering a coefficient of $P(z; x)$ means we just have to determine if the coefficient is $1$ or $0$. Thus, if we can approximate the degree $i$ coefficient of $P(z; x)$, we can determine $x_i$ exactly. However, when the $a_i$'s are arbitrary reals, we can only approximately determine coefficients of $\sum a_i P(z; x^{(i)})$, so if our guess for the degree $1$ coefficient is slightly off, it may cause our guesses for the degree $i$ coefficients to be far off for larger values of $i$.

To fix this issue, our main insight is to not attempt to solve for the polynomials $\sum_{i = 1}^{\ell} a_i P(z; x^{(i)})^k$ but instead solve for the elementary symmetric polynomials of $P(z; x^{(1)}), \dots, P(z; x^{(\ell)}).$ The motivation for this is that these polynomials, in terms of $z$, have integer coefficients, so approximately determining their coefficients is just as good as exactly determining their coefficients. Given exact values for $\sum a_i P(z; x^{(i)})^k$, one can reconstruct the $j$th elementary symmetric polynomial,
\[\sigma_j\left(P(z; x^{(1)}), \dots, P(z; x^{(\ell)})\right) := \sum\limits_{1 \le i_1 < i_2 < \cdots < i_j \le \ell} P(z; x^{(i_1)}) \cdots P(z; x^{(i_j)}),\]
based on a linear algebraic technique called Prony's method. However, we will only receive good estimates for $\sum a_i P(z; x^{(i)})^k$ rather than their exact values. One can likewise return estimates for $\sigma_j\left(P(z; x^{(1)}), \dots, P(z; x^{(\ell)})\right)$, assuming our estimates for $\sum a_i P(z; x^{(i)})^k$ are sufficiently accurate, though determining how good our estimates for $\sum a_i P(z; x^{(i)})^k$ need to be will require significant linear algebra. In other words, we must show that recovering the elementary symmetric polynomials from the values $\sum a_i P(z; x^{(i)})^k$ is robust to error in our estimates for $\sum a_i P(z; x^{(i)})^k$. Proving this will turn out to be a key technical challenge.

Once we have approximately established the values $\sigma_j\left(P(z; x^{(1)}), \dots, P(z; x^{(\ell)})\right)$ for several values of $z$ and for all $1 \le j \le \ell$, we can reconstruct the actual coefficients of the polynomials $\sigma_j\left(P(z; x^{(1)}), \dots, P(z; x^{(\ell)})\right)$. We use a similar method to that of \cite{HolensteinMPW08}, by using linear programming to construct the coefficients one at a time. We know that the coefficients are all nonnegative integers, and it is easy to see that all coefficients are bounded by $n^{O(\ell)}.$ Thus, if we are trying to solve for some polynomial $\sigma_j\left(P(z; x^{(1)}), \dots, P(z; x^{(\ell)})\right) =: \sum t_i z^i$, and we know the coefficients $t_1, \dots, t_{i-1},$ we will set $t_i$ to be a nonnegative integer at most $n^{O(\ell)}$ and relax the constraints for $t_j$ to be in $[0, n^{O(\ell)})]$ for all $j > i$. Then, using linear programming, we can solve for $t_i,$ and by induction return all coefficients of the polynomial.

Once we have determined the coefficients of the polynomials $\sigma_j\left(P(z; x^{(1)}), \dots, P(z; x^{(\ell)})\right)$ for all $j$, we can use standard factoring methods to recover the original polynomials $P(z; x^{(i)})$ for all $1 \le i \le \ell.$ It remains to determine the values $a_1, \dots, a_\ell,$ where we recall that $a_i = \BP(x = x^{(i)})$ when $x$ is drawn from our distribution $\mathcal{D}$. This, however, will turn out to be quite simple, as we can use linear programming to solve for the $a_i$'s.

\section{Preliminaries} \label{Preliminaries}

We will need some simple results about complex numbers, as well as a ``Littlewood-type'' result about bounds on polynomials on arcs of the unit circle \cite{BorweinE97}. We note that the latter result requires complex analysis, though we will not have to use any knowledge of complex analysis besides this result as a black box. Finally, we will need two matrix formulas: the Sherman-Morrison-Woodbury matrix identity and Weyl's matrix inequality for non-Hermitian matrices.

First, we explain a basic definition we will use involving complex numbers.

\begin{defn}
    For $z \in \BC,$ let $|z|$ be the \emph{magnitude} of $z$, and if $z \neq 0$, let $\arg z$ be the \emph{argument} of $z$, which is the value of $\theta \in (-\pi, \pi]$ such that $\frac{z}{|z|} = e^{i \theta}$.
\end{defn}

Next, we prove three simple propositions about complex numbers.

\begin{proposition} \label{Complex1}
    Suppose that $0 < p, q < 1$ are real numbers with $p+q < 1.$ Then, if $z, w \in \BC$ such that $|z - (1-p)| \le p$ and $|w-(1-q)| \le q,$ then $|zw - (1-p-q)| \le p+q.$
\end{proposition}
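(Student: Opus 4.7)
The plan is to exploit the fact that the hypotheses $|z-(1-p)|\le p$ and $|w-(1-q)|\le q$ describe closed disks, each of which can be parameterized in a way that makes the product easy to bound. Specifically, I would write $z = (1-p) + p\zeta$ and $w = (1-q) + q\eta$ for some $\zeta, \eta \in \BC$ with $|\zeta|, |\eta| \le 1$. The goal is then to establish the identity
\[
zw - (1-p-q) \;=\; pq + (1-p)q\,\eta + (1-q)p\,\zeta + pq\,\zeta\eta,
\]
which follows by direct expansion together with the cancellation $(1-p)(1-q) - (1 - p - q) = pq$. This algebraic rewriting is the only substantive step; once it is in place, everything reduces to a triangle-inequality estimate.

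The final step would be to apply the triangle inequality to the right-hand side, using $|\zeta|, |\eta|, |\zeta\eta| \le 1$, to obtain
\[
\bigl|zw - (1-p-q)\bigr| \;\le\; pq + (1-p)q + (1-q)p + pq,
\]
and then to observe that the right-hand side collapses:
\[
pq + (1-p)q + (1-q)p + pq = pq + (q - pq) + (p - pq) + pq = p + q,
\]
which is exactly the conclusion.

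There is no real obstacle here; the hypothesis $p+q<1$ is not even needed for the bound itself (it presumably serves merely to keep the target disk around $1-p-q$ inside the unit disk for later applications). The only mild subtlety worth flagging is that one must choose the ``convex'' parameterization centered at $1-p$ (respectively $1-q$) rather than, say, writing $z = 1 - u$ with $|u - p|\le p$: with the centered parameterization, the four cross-term magnitudes telescope cleanly to $p+q$, whereas other parameterizations obscure this telescoping.
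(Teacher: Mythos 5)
Your proof is correct and is essentially the same as the paper's: the paper writes $z = 1-p+s$, $w = 1-q+t$ with $|s|\le p$, $|t|\le q$, expands $zw$, and applies the triangle inequality — identical in substance to your parameterization with $s=p\zeta$, $t=q\eta$.
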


\begin{proof}
    Write $z = 1-p+s$ and $w = 1-q+t$ for $|s| \le p$ and $|t| \le q.$ Then, $zw$ can be expanded as $1-p-q + (pq + t(1-p) + s(1-q) + st).$ However,
\[|pq + t(1-p) + s(1-q) + st| \le pq + |t| \cdot (1-p) + |s| \cdot (1-q) + |s| \cdot |t| \le pq + q(1-p) + p(1-q) + pq = p+q.\]
    This completes the proof.
\end{proof}

\begin{proposition} \label{Complex2}
    Let $0 < p < 1$ be a real number, and let $k$ be a positive integer. Then, if $z$ is a complex number satisfying $\left|z - (1 - \frac{p}{k})\right| \le \frac{p}{k},$ then $|z^k - (1 - p)| \le p.$
\end{proposition}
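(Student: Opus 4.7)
The plan is to prove this by induction on $k$, using Proposition \ref{Complex1} as the key multiplicative step. The base case $k = 1$ is immediate since the hypothesis then reads exactly $|z - (1-p)| \le p$.

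For the inductive step, suppose the statement holds for the exponent $k-1$ (for every valid choice of retention parameter). Given $z$ with $|z - (1 - p/k)| \le p/k$, I would apply the inductive hypothesis with parameter $p' := (k-1)p/k$ in place of $p$: its hypothesis $|z - (1 - p'/(k-1))| \le p'/(k-1)$ is identical to ours, so the conclusion gives
\[
\bigl|z^{k-1} - \bigl(1 - \tfrac{(k-1)p}{k}\bigr)\bigr| \le \tfrac{(k-1)p}{k}.
\]

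Now I would invoke Proposition \ref{Complex1} with the two parameters $\tfrac{(k-1)p}{k}$ and $\tfrac{p}{k}$, which are both in $(0,1)$ and sum to $p < 1$, so Proposition \ref{Complex1}'s hypotheses are satisfied. The proposition applied to $z^{k-1}$ and $z$ then yields
\[
\bigl| z^{k-1} \cdot z - (1 - p) \bigr| \le \tfrac{(k-1)p}{k} + \tfrac{p}{k} = p,
\]
which is exactly $|z^k - (1-p)| \le p$, completing the induction.

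There is no real obstacle here; the only thing to be careful about is to set up the induction so that the parameter inside the inductive hypothesis has the shape $p'/(k-1)$ matching the given bound $p/k$ on $|z - (1-p/k)|$, and to verify that the sum of the two parameters in the application of Proposition \ref{Complex1} is still strictly less than $1$, which is ensured by $p < 1$.
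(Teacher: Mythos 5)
Your proof is correct and takes essentially the same route as the paper: the paper reparametrizes (setting $p \mapsto pk$) so that the claim reads ``if $0 < p < 1/k$ and $|z-(1-p)| \le p$ then $|z^k - (1-pk)| \le pk$'' and then cites Proposition~\ref{Complex1} with induction on $k$, which is exactly the induction you carry out explicitly. Your write-up just makes the parameter bookkeeping and the application of Proposition~\ref{Complex1} (with parameters $\tfrac{(k-1)p}{k}$ and $\tfrac{p}{k}$) fully explicit.
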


\begin{proof}
    Equivalently, it suffices to show that if $0 < p < \frac{1}{k},$ then if $|z - (1-p)| \le p,$ then $|z^k - (1-pk)| \le pk.$ But this follows immediately from Proposition \ref{Complex1}, by inducting on $k$.
\end{proof}

\begin{proposition} \label{Complex3}
    Let $z$ be a complex number with $|z| = 1$ and $|\arg z| \le \theta.$ Then, for any $0 < p < 1,$ $\left|\frac{z-(1-p)}{p}\right| \le 1 + \frac{\theta^2}{p^2}.$
\end{proposition}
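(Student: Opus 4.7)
The plan is to parametrize $z$ on the unit circle and compute $|z-(1-p)|^2$ explicitly. Write $z = e^{i\phi}$ where $\phi = \arg z$ satisfies $|\phi| \le \theta$. Then $z - (1-p) = (\cos\phi - 1 + p) + i\sin\phi$, so
\[
|z - (1-p)|^2 = (p - (1-\cos\phi))^2 + \sin^2\phi.
\]
Expanding and using $\sin^2\phi + \cos^2\phi = 1$, the cross-term $-2p(1-\cos\phi)$ combines with $(1-\cos\phi)^2 + \sin^2\phi = 2(1-\cos\phi)$ to give
\[
|z - (1-p)|^2 = p^2 + 2(1-\cos\phi)(1-p).
\]

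Dividing by $p^2$ yields $\left|\frac{z-(1-p)}{p}\right|^2 = 1 + \frac{2(1-\cos\phi)(1-p)}{p^2}$. Now I would use the elementary inequality $1-\cos\phi \le \phi^2/2 \le \theta^2/2$ (which follows from the Taylor expansion of cosine, or equivalently from $1-\cos\phi = 2\sin^2(\phi/2) \le \phi^2/2$), together with $1-p \le 1$, to bound
\[
\left|\frac{z-(1-p)}{p}\right|^2 \le 1 + \frac{\theta^2}{p^2}.
\]

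Finally, to pass from the squared bound to the stated bound, I would invoke $\sqrt{1+x} \le 1+x$ for all $x \ge 0$ with $x = \theta^2/p^2$, which gives $\left|\frac{z-(1-p)}{p}\right| \le 1 + \frac{\theta^2}{p^2}$. There is no real obstacle here; the only thing to be slightly careful about is the expansion of $|z-(1-p)|^2$ and the choice of the right trigonometric inequality, both of which are routine.
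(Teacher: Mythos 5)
Your proof is correct and follows essentially the same route as the paper: both expand $\bigl|\tfrac{z-(1-p)}{p}\bigr|^2$, exploit the favorable sign (you via the explicit factor $1-p<1$, the paper by dropping the nonpositive cross term $2\,\Re(z-1)/p$), bound the arc-related quantity ($1-\cos\phi\le\theta^2/2$, equivalently $|z-1|\le\theta$), and finish with $\sqrt{1+x}\le 1+x$. The only difference is bookkeeping — you parametrize $z=e^{i\phi}$ explicitly while the paper works directly with $|z-1|$ and $\Re(z-1)$.
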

\begin{proof}
    Write $\frac{z-(1-p)}{p}$ as $1 + \frac{z-1}{p}.$ Then, $\left|1 + \frac{z-1}{p}\right|^2 = 1 + \frac{|z-1|^2}{p^2} + 2 \cdot \frac{\Re (z-1)}{p}.$ However, since $z$ is on the unit circle, $\Re (z-1) \le 0$ and $|z-1| \le \theta.$ Thus, $\left|1 + \frac{z-1}{p}\right|^2 \le 1 + \frac{\theta^2}{p^2}.$ Therefore, $\left|\frac{z-(1-p)}{p}\right| = \left|1 + \frac{z-1}{p}\right| \le \sqrt{1 + \frac{\theta^2}{p^2}} \le 1 + \frac{\theta^2}{p^2}.$
\end{proof}

Next, we state the Littlewood-type result.

\begin{theorem} \textup{\cite{BorweinE97}} \label{Littlewood}
    Let $f(x) = \sum_{j = 0}^{n} a_j x^j$ be a polynomial of degree $n$ with complex coefficients. Suppose there is some positive real number $M$ such that $|a_0|= 1$ and $|a_j| \le M$ for all $0 \le j \le n$. Let $A$ be a subarc of the unit circle $|z| = 1$ in the complex plane with length $0 < a < 2 \pi$. Then, there exists some absolute constant $c_1 > 0$ such that
\[\sup\limits_{z \in A} |f(z)| \ge \exp\left(\frac{-c_1 (1 + \log M)}{a}\right).\]
\end{theorem}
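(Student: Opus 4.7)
The plan is to use complex analysis to turn the two quantitative hypotheses — the normalization $|f(0)|=|a_0|=1$ and the coefficient bound $|a_j|\le M$ — into a lower bound on $\sup_A|f|$. The natural framework is the two-constants theorem, i.e.\ the maximum principle for $\log|f|$ with different upper bounds on two complementary pieces of the boundary, applied together with the $n$-free growth estimate $|f(z)|\le M/(1-|z|)$ on the open unit disk (a direct consequence of $|a_j|\le M$).

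After rotating so that $A = \{e^{i\theta}:|\theta|\le a/2\}$ is symmetric about $z=1$, I would first run the textbook two-constants argument on $\mathbb D$ as a warm-up. Since the harmonic measure of $A$ at the origin equals $a/(2\pi)$, writing $S:=\sup_{A}|f|$ gives
\[
0 \;=\; \log|f(0)| \;\le\; \tfrac{a}{2\pi}\log S + \bigl(1-\tfrac{a}{2\pi}\bigr)\log\sup_{A^c}|f|,
\]
and the trivial bound $\sup_{A^c}|f|\le (n+1)M$ yields $S\ge\exp(-O(\log(nM)/a))$ — of the right shape, but carrying a spurious $\log n$ factor that the claimed estimate avoids.

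The main obstacle is removing this $n$-dependence, and this is where the Littlewood-type refinement enters. The idea is to replace the trivial circle bound $(n+1)M$ by the degree-independent estimate $M/\delta$ that holds on any subcircle $|z|=1-\delta$. Concretely, I would apply two-constants on a lens-shaped domain $\Omega\subset\mathbb D$ whose outer boundary is the arc $A$ (where $|f|\le S$) and whose inner boundary is an arc of $|z|=1-\delta$ lying above the complementary angular range (where $|f|\le M/\delta$), joined by short radial segments. Writing $\omega$ for the harmonic measure of $A$ in $\Omega$ at the origin, two-constants then gives $0\le\omega\log S + (1-\omega)\log(M/\delta)$; optimizing $\delta\asymp a$ should balance the two terms and produce $\log S \ge -c_1(1+\log M)/a$.

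The delicate step — and where I expect the main technical difficulty — is the conformal-geometric estimate of $\omega$ on this lens: I need a lower bound of the form $\omega\gtrsim 1$ (independent of $a$) to avoid reintroducing an $a$-dependent penalty inside the exponent. The natural tool is an explicit conformal map from $\Omega$ to a rectangle or half-strip; the modulus of the resulting quadrilateral, computed via elliptic integrals or via a symmetry-reduction to the halfplane through $z\mapsto i(1+z)/(1-z)$, determines $\omega$. This is the step that genuinely requires complex analysis beyond the two-constants template; the rest of the argument is rearrangement and optimization over $\delta$.
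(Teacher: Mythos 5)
The paper does not prove this theorem; it cites it as a black box from Borwein--Erd\'elyi \cite{BorweinE97} (see the remark in Section~\ref{Preliminaries}: ``we will not have to use any knowledge of complex analysis besides this result as a black box''). So there is no internal proof to compare against, and I will assess your argument on its own terms.

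Your overall strategy---two-constants plus the degree-free growth bound $|f(z)|\le M/(1-|z|)$ to kill the spurious $\log n$---is the right instinct, but the harmonic-measure estimate you need is impossible. In any domain $\Omega\subset\mathbb D$ that contains the origin and has $A\subset\partial\Omega$ (and your ``keyhole'' $\Omega=\{|z|<1-\delta\}\cup\{re^{i\theta}:1-\delta\le r<1,\ |\theta|<a/2\}$ is such a domain, since you evaluate the harmonic measure at the origin), monotonicity of harmonic measure under domain inclusion gives $\omega=\omega(0,A,\Omega)\le\omega(0,A,\mathbb D)=a/(2\pi)$. So $\omega\gtrsim1$ cannot hold for small $a$. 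It is also clear on a priori grounds that it must fail: if it held, the two-constants inequality would give $\sup_A|f|\ge(\delta/M)^{O(1)}$, a bound only polynomial in $a$, whereas the correct answer is genuinely exponentially small in $1/a$.

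The best one can do on your keyhole is $\omega\asymp a$, and that forces $\delta\lesssim a$: if $\delta$ were a fixed constant, the corridor $\{1-\delta<r<1,\ |\theta|<a/2\}$ has conformal modulus $\asymp\delta/a$, so the probability of Brownian motion traversing it from the inner circle to $A$ is $e^{-\Theta(\delta/a)}$ and $\omega$ is \emph{exponentially} small in $1/a$. Taking $\delta\asymp a$ (so $\omega\asymp a$ and $\log(M/\delta)\asymp\log M+\log(1/a)$), the two-constants inequality yields only
\[
\log\sup_{A}|f|\;\ge\;-O\!\left(\frac{\log M+\log(1/a)}{a}\right),
\]
which is weaker than the stated $-c_1(1+\log M)/a$ by an extra $\log(1/a)/a$ in the exponent. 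This weaker form would in fact suffice for the paper's downstream use (in Lemma~\ref{ComplexAnalysis} one has $a=\Theta(1/L)$ and $M=n^{\Theta(\ell^2)}$, so the $\log(1/a)=O(\log n)$ term is dominated), but it does not prove the theorem as stated. Recovering the sharp Borwein--Erd\'elyi bound requires a genuinely finer argument than a single two-constants estimate on a slit disk; a harmonic-measure constant cannot be squeezed up to $\Theta(1)$, so the missing idea is not in the conformal computation you defer but in the choice of comparison scheme itself.
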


Next, we state the Sherman-Morrison-Woodbury matrix identity.

\begin{theorem} \label{Woodbury}
    Let $A, B \in \BC^{k \times k}$ be complex-valued square matrices. Then, if $A$ and $A+B$ are invertible, 
\[(A+B)^{-1} = A^{-1} - A^{-1} B (I + A^{-1} B)^{-1} A^{-1}\]
    where $I$ is the $k \times k$ identity matrix.
\end{theorem}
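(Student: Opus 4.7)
The plan is to prove the identity by direct verification: show that the proposed expression on the right-hand side, when multiplied by $A+B$, yields the identity matrix. First I would check that every piece of the formula is well-defined. Since $A$ is invertible we may factor $A+B = A(I + A^{-1}B)$, and since $A+B$ is invertible by hypothesis, so is the factor $I + A^{-1}B$; indeed $(I + A^{-1}B)^{-1} = (A+B)^{-1} A$. This justifies the appearance of $(I + A^{-1}B)^{-1}$ in the claimed expression.

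Next I would carry out the following right-multiplication by $A+B$:
\begin{align*}
\bigl[A^{-1} - A^{-1} B (I + A^{-1} B)^{-1} A^{-1}\bigr](A+B)
&= A^{-1}(A+B) - A^{-1} B (I + A^{-1} B)^{-1} A^{-1}(A+B) \\
&= (I + A^{-1}B) - A^{-1} B (I + A^{-1} B)^{-1} (I + A^{-1}B) \\
&= (I + A^{-1}B) - A^{-1}B \\
&= I.
\end{align*}
The key cancellation uses $A^{-1}(A+B) = I + A^{-1}B$ appearing twice, so the middle factor $(I + A^{-1}B)^{-1}(I + A^{-1}B)$ collapses to $I$. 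This shows that the proposed formula is a right inverse of $A+B$; since $A+B$ is a square invertible matrix, any right inverse equals the two-sided inverse, which finishes the proof.

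The main (and essentially only) obstacle is the bookkeeping step of verifying that $I + A^{-1}B$ is invertible so the stated expression makes sense; the remainder is a routine matrix computation. An alternative and perhaps more illuminating route is to derive rather than verify: write $(A+B)^{-1} = (I + A^{-1}B)^{-1} A^{-1}$ and then apply the push-through identity $(I + A^{-1}B)^{-1} = I - A^{-1}B(I + A^{-1}B)^{-1}$, which itself follows from expanding $(I + A^{-1}B)^{-1}(I + A^{-1}B) = I$. Either route is short, and I would present the direct verification above as the most efficient.
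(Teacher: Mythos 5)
Your proof is correct. The paper states this identity as a known preliminary result without proof, so there is nothing to compare against; your direct verification — checking that the proposed expression is a right inverse of $A+B$, together with the observation that $I + A^{-1}B = A^{-1}(A+B)$ is invertible so the expression is well-defined — is a clean and complete argument.
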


Finally, we state the Weyl matrix inequality.

\begin{theorem} \label{Weyl} \textup{\cite[Exercise 1.3.22]{TaoBook}}
    Let $A, B \in \BC^{k \times k}$ be complex-valued square matrices. Then, if $\sigma_1 \ge \sigma_2 \ge \cdots \ge \sigma_k \ge 0$ are the $k$ singular values of $A$ and $\tau_1 \ge \tau_2 \ge \cdots \ge \tau_k \ge 0$ are the $k$ singular values of $A+B,$ then for all $1 \le i \le k,$ $|\sigma_i-\tau_i| \le \|B\|,$ where $\|B\|$ is the operator norm of $B$.
\end{theorem}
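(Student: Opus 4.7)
The plan is to derive the inequality from the Courant--Fischer style min-max characterization of singular values together with the operator-norm triangle inequality. Recall that for any matrix $M \in \BC^{k \times k}$ with singular values $\mu_1 \ge \cdots \ge \mu_k$, one has
$$\mu_i(M) \;=\; \min_{\substack{V \subseteq \BC^k \\ \dim V = k - i + 1}} \; \max_{\substack{x \in V \\ \|x\| = 1}} \|M x\|.$$
This formula is standard; I would either cite it or derive it by applying the usual Courant--Fischer min-max theorem to the Hermitian positive semidefinite matrix $M^* M$ (whose ordered eigenvalues are $\mu_i(M)^2$) and then taking square roots. The key elementary observation driving the proof is that for every unit vector $x$, the triangle inequality gives $\|(A+B)x\| \le \|Ax\| + \|Bx\| \le \|Ax\| + \|B\|$, where the last step is the definition of the operator norm $\|B\| = \sup_{\|x\|=1} \|Bx\|$.

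Next I would fix an index $i \in \{1, \dots, k\}$ and choose a subspace $V^\star \subseteq \BC^k$ of dimension $k - i + 1$ attaining the minimum for $\sigma_i = \sigma_i(A)$, so that $\max_{x \in V^\star, \|x\|=1} \|A x\| = \sigma_i$. Plugging $V^\star$ as a candidate subspace into the min-max formula for $\tau_i = \sigma_i(A+B)$ and using the elementary observation above yields
$$\tau_i \;\le\; \max_{\substack{x \in V^\star \\ \|x\|=1}} \|(A+B)x\| \;\le\; \max_{\substack{x \in V^\star \\ \|x\|=1}} \|Ax\| \;+\; \|B\| \;=\; \sigma_i + \|B\|.$$
Swapping the roles of $A$ and $A+B$ (and noting $A = (A+B) + (-B)$ with $\|-B\| = \|B\|$) gives the reverse bound $\sigma_i \le \tau_i + \|B\|$, and combining the two inequalities yields $|\sigma_i - \tau_i| \le \|B\|$ as required.

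The only step requiring any real work is justifying the min-max formula itself; once that is in hand the rest is a two-line triangle-inequality manipulation, so there is no serious technical obstacle. Since the result is stated with a textbook citation, I would expect the paper to invoke it as a black box rather than include the proof, but the sketch above provides a self-contained derivation in a few lines.
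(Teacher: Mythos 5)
Your proof is correct. The paper itself does not prove this statement; it is invoked as a black box with the citation to Exercise 1.3.22 of Tao's textbook, exactly as you anticipated. Your derivation via the Courant--Fischer min-max characterization of singular values combined with the triangle inequality and symmetrization is the standard one and is sound.
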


\section{Sample Complexity Bound} \label{PopulationRecoveryBounds}

In this section, we will prove Theorem \ref{main}.

First, consider a single string $x \in \{0, 1\}^n$ that we can query traces from. Let $x = x_1 \cdots x_n$, and let $\tilde{x}_1 \cdots \tilde{x}_n$ be a trace of $x$, where we pad the trace with $0$'s until the string becomes length $n$.

First, we create a function $g_m(\tilde{x}, z),$ that takes as input the trace $\tilde{x},$ some complex number $z$, and some positive integer $m.$ It will also depend on $p$ and $n$, but we treat these as fixed. We will show that this function is an unbiased estimator of a certain polynomial of $z$, depending on $x$ and $m$, and provide uniform bounds on $g_m(\tilde{x}, z)$.

\begin{lemma} \label{CreatingG}
    Fix $n$ as the length of $x$, $p$ as the retention probability, and $q = 1-p$ as the deletion probability. Then, for any integer $m \ge 1,$ there exists some function $g_m(\tilde{x}, z)$ such that for all $z \in \BC,$
\[\BE_{\tilde{x}}[g_m(\tilde{x}, z)] = \left(\sum\limits_{i = 1}^{n} x_i z^i\right)^m,\]
    where the expectation is over traces drawn from $x$. Moreover, for any integer $L \ge 1$ and for all $\tilde{x} \in \{0, 1\}^n$ and all $z$ with magnitude $1$ and argument at most $\frac{2 \pi}{L}$ in absolute value,
\[\left|g_m(\tilde{x}, z)\right| \le (p^{-1} m n)^{O(m)} \cdot e^{O(m^2 n/(p^2 L^2))}.\]
    Also, for all $\tilde{x}$ and all $z$ such that $|z - (1-\frac{p}{m})| \le \frac{p}{m},$ 
\[|g_m(\tilde{x}, z)| \le (p^{-1} m n)^{O(m)}.\]
    Finally, $g_m(\tilde{x}, z)$ can be computed in $n^{O(m)}$ time.
\end{lemma}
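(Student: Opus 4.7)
The plan is to expand $\left(\sum_{i=1}^n x_i z^i\right)^m$ by grouping each tuple $(i_1,\ldots,i_m)$ according to its sorted list of distinct coordinates: such a tuple has some $k'\le m$ distinct values $j_1<\cdots<j_{k'}$ with positive multiplicities $(a_1,\ldots,a_{k'})$, $\sum_l a_l=m$, and exactly $\binom{m}{a_1,\ldots,a_{k'}}$ tuples produce any given such assignment. Using $x_j\in\{0,1\}\Rightarrow x_j^{a_l}=x_j$, this yields
\[
\left(\sum_{i=1}^n x_i z^i\right)^m \;=\; \sum_{\vec a}\binom{m}{\vec a}\,T_{\vec a}(x,z),\qquad T_{\vec a}(x,z) := \sum_{1\le j_1<\cdots<j_{k'}\le n} x_{j_1}\cdots x_{j_{k'}}\, z^{a_1 j_1+\cdots+a_{k'} j_{k'}},
\]
where $\vec a$ ranges over all compositions of $m$ with positive parts. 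I will construct a trace-based unbiased estimator $h_{\vec a}$ of each $T_{\vec a}$ and set $g_m := \sum_{\vec a}\binom{m}{\vec a}\,h_{\vec a}$.

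For a fixed composition $\vec a=(a_1,\ldots,a_{k'})$, with suffix sums $S_l:=a_l+\cdots+a_{k'}$, define
\[
V_l := \frac{z^{S_l}-q}{p},\qquad c_l := \frac{z^{S_l}-q}{z^{S_l}},\qquad w_l := V_l/V_{l+1}\ (l<k'),\quad w_{k'}:=V_{k'},
\]
so that $w_l w_{l+1}\cdots w_{k'}=V_l$ and $pV_l+q=z^{S_l}$. The candidate is
\[
h_{\vec a}(\tilde x, z) := \left(\prod_l c_l^{-1}\right) \sum_{1\le u_1<\cdots<u_{k'}\le n} \tilde x_{u_1}\cdots \tilde x_{u_{k'}}\, w_1^{u_1}\cdots w_{k'}^{u_{k'}}.
\]
To check $\BE[h_{\vec a}]=T_{\vec a}$, for each source choice $j_1<\cdots<j_{k'}$ and each trace tuple $u_1<\cdots<u_{k'}$, I compute the probability that $j_l$ becomes the $u_l$-th surviving bit; setting $e_l:=j_l-j_{l-1}$ and $d_l:=u_l-u_{l-1}$, this factors as $\prod_l\binom{e_l-1}{d_l-1}p^{d_l}q^{e_l-d_l}$. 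Telescoping gives $\prod_l w_l^{u_l}=\prod_l V_l^{d_l}$, and the identity $\sum_{d=1}^{e}\binom{e-1}{d-1}(pV)^d q^{e-d}=pV(pV+q)^{e-1}$ reduces the inner sum to $\prod_l pV_l(pV_l+q)^{e_l-1}=\prod_l(z^{S_l}-q)z^{S_l(e_l-1)} = \left(\prod_l c_l\right) z^{a_1 j_1+\cdots+a_{k'}j_{k'}}$; multiplying by $\prod_l c_l^{-1}$ yields $T_{\vec a}$ as required.

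The key simplification for magnitude bounds is that each summand of $h_{\vec a}$ carries the prefactor $\prod_l c_l^{-1}V_l^{d_l} = \prod_l z^{S_l}(z^{S_l}-q)^{d_l-1}/p^{d_l}$. In the disk $|z-(1-p/m)|\le p/m$, applying Proposition \ref{Complex2} with exponent $S_l\le m$ and scaled retention $pS_l/m$ gives $|z^{S_l}-q|\le p$ and $|z^{S_l}|\le 1$, so each such factor is $\le 1/p$; hence $|h_{\vec a}|\le (n/p)^{k'}$, and summing over the $\le 2^m$ compositions with multinomial weights $\le m^m$ gives $|g_m|\le (p^{-1}mn)^{O(m)}$. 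For $z$ on the unit circle with $|\arg z|\le 2\pi/L$, Proposition \ref{Complex3} gives $|V_l|\le 1+(2\pi S_l/L)^2/p^2\le \exp(O(S_l^2/(p^2L^2)))$ and $|c_l|=|z^{S_l}-q|\ge p$, so $\prod_l|V_l|^{d_l}\le\exp(O(\sum_l S_l^2 d_l/(p^2L^2)))\le\exp(O(m^2 n/(p^2L^2)))$ using $S_l\le m$ and $\sum_l d_l=u_{k'}\le n$. Combining with $\prod_l|c_l|^{-1}\le p^{-m}$, the $\le n^m$ tuples, and the $2^m m^m$ composition/multinomial factors yields the claimed bound. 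The $n^{O(m)}$ runtime follows from enumerating $\le 2^m$ compositions and at most $n^m$ index tuples each.

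The step I expect to be most delicate is the unit-circle bound: the telescoping $\prod_l w_l^{u_l}=\prod_l V_l^{d_l}$ is essential because it converts a product weighted by $u_l$ (potentially as large as $n$) into one weighted by $d_l$, whose sum is also bounded by $n$. This is what makes the aggregation $\sum_l S_l^2 d_l\le m^2 n$ (rather than $m^3 n$), thereby yielding the correct exponent $m^2$ in $\exp(O(m^2 n/(p^2 L^2)))$ rather than $m^3$.
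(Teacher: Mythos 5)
Your proof is correct and follows essentially the same route as the paper: decompose $\left(\sum x_i z^i\right)^m$ over compositions of $m$ with multinomial weights, build for each composition an unbiased estimator of $T_{\vec a}$ from the trace via suffix-sum evaluations $V_l = (z^{S_l}-q)/p$, verify unbiasedness by the same hypergeometric/binomial-theorem computation, and then bound magnitudes with Propositions \ref{Complex2} and \ref{Complex3}. The only cosmetic difference is parametrization: the paper writes $f$ directly with $w_1^{i_1} w_2^{i_2-i_1}\cdots w_k^{i_k-i_{k-1}}$ and sets $w_{B,r}=V_r$, whereas you write $w_1^{u_1}\cdots w_{k'}^{u_{k'}}$ and choose $w_l = V_l/V_{l+1}$ so that the telescoping reproduces the same $\prod_l V_l^{d_l}$; these are identical, so the "key simplification'' you highlight at the end is already built into the paper's definition of $f$ rather than a separate step.
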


\begin{proof}
For some $1 \le k \le m,$ fix some complex numbers $w_1, \dots, w_k$ and consider the random variable
\[f(\tilde{x}, w) := \sum\limits_{1 \le i_1 < i_2 < \dots < i_k \le n} \tilde{x}_{i_1} \cdots \tilde{x}_{i_k} w_1^{i_1} w_2^{i_2-i_1} \cdots w_k^{i_k-i_{k-1}}\]
    for $w = (w_1, \dots, w_k)$, which is a random variable since $\tilde{x}$ is random. Given $(w_1, \dots, w_k)$ and $\tilde{x},$ since there are at most ${n \choose k}$ terms, this can be computed in time $n^{O(k)}.$
    
    We first describe $\BE[f(\tilde{x}, w)]$ and choose appropriate values for $w_1, \dots, w_k$. First, we can rewrite
\[f(\tilde{x}, w) = \sum\limits_{\substack{i_1, \dots, i_k \ge 1\\ i_1 + \dots + i_k \le n}} \tilde{x}_{i_1} \tilde{x}_{i_1+i_2} \cdots \tilde{x}_{i_1+i_2+\dots+i_k} w_1^{i_1} w_2^{i_2} \cdots w_k^{i_k}.\]
    For any $j_1, \dots, j_k$, note that $\tilde{x}_{i_1}$ coming from $x_{j_1}$, $\tilde{x}_{i_1+i_2}$ coming from $x_{j_1+j_2}$, etc. means that $j_1 \ge i_1, j_2 \ge i_2, \dots, j_k \ge i_k$. Moreover, even in this case, this will only happen with probability
\[\prod\limits_{r = 1}^{k} \left(p \cdot {j_r-1 \choose i_r-1} p^{i_r-1} q^{j_r-i_r} \right) = p^{\sum i_r} q^{\sum (j_r-i_r)} \prod\limits_{r = 1}^{k} {j_r - 1 \choose i_r - 1}.\]
    Therefore, we have that
\begin{align*}
\BE[f(\tilde{x}, w)] &= \sum\limits_{\substack{i_1, \dots, i_k \ge 1 \\ j_r \ge i_r \\ j_1 + \dots + j_k \le n}} \prod\limits_{r = 1}^{k} \left({j_r - 1 \choose i_r - 1} p^{i_r} q^{j_r-i_r} x_{j_1+\dots+j_r} w_r^{i_r}\right)\\
&=\sum\limits_{\substack{j_1, \dots, j_k \ge 1 \\ j_1 + \dots + j_k \le n}} \prod\limits_{r = 1}^{k} \left(p w_r x_{j_1+\dots+j_r} \cdot \sum\limits_{i_r = 1}^{j_r} {j_r - 1 \choose i_r - 1} p^{i_r-1} q^{j_r-i_r} w_r^{i_r-1}\right) \\
&=\sum\limits_{\substack{j_1, \dots, j_k \ge 1 \\ j_1 + \dots + j_k \le n}} \prod\limits_{r = 1}^{k} \left(p w_r x_{j_1+\dots+j_r} \cdot (p w_r + q)^{j_r-1} \right) \\
&= p^k \frac{w_1 \cdots w_k}{(pw_1+q) \cdots (pw_k+q)} \cdot \sum\limits_{\substack{j_1, \dots, j_k \ge 1 \\ j_1 + \dots + j_k \le n}} x_{j_1} \cdots x_{j_1+\dots+j_k} (pw_1+q)^{j_1} \cdots (pw_k+q)^{j_k}.
\end{align*}

    Now, fix some $z \in \BC$ and $B = (b_1, \dots, b_k)$ an ordered $k$-tuple of positive integers such that $b_1+\dots+b_k \le m.$ For all $1 \le r \le k$, let $w_{B, r} = \frac{z^{b_r + b_{r+1} + \dots + b_k}-q}{p}$. Then, letting $w_B = (w_{B, 1}, \dots, w_{B, k}),$
\begin{align*}
\BE\left[f(\tilde{x}, w_B)\right] &= p^k \cdot \frac{w_{B, 1} \cdots w_{B, k}}{z^{b_1 + 2b_2 + \dots + kb_k}} \cdot \sum\limits_{\substack{j_1, \dots, j_k \ge 1 \\ j_1 + \dots + j_k \le n}} x_{j_1} \cdots x_{j_1+\dots+j_k} z^{(b_1+\dots+b_k)j_1+(b_2+\dots+b_k)j_2+\dots+b_kj_k} \\
&= p^k \cdot \frac{w_{B, 1} \cdots w_{B, k}}{z^{b_1 + 2b_2 + \dots + kb_k}} \cdot \sum\limits_{1 \le i_1 < i_2 < \dots < i_k \le n} x_{i_1} \cdots x_{i_k} z^{b_1 i_1 + \dots + b_k i_k},
\end{align*}
    where we have written $i_r = j_1 + j_2 + \dots + j_r$ for all $1 \le r \le k$. This implies that
\begin{align*}
\left(\sum\limits_{i = 1}^{n} x_i z^i\right)^{m} &= \sum\limits_{\substack{1 \le k \le m \\ b_1, \dots, b_k \ge 1 \\ b_1 + \dots + b_k = m}} {m \choose b_1, b_2, \dots, b_k} \cdot \sum\limits_{1 \le i_1 < i_2 < \dots < i_k \le n} x_{i_1} \cdots x_{i_k} z^{b_1 i_1 + \dots + b_k i_k} \\
&= \sum\limits_{\substack{B = (b_1, \dots, b_k) \\ b_1 + \dots + b_k = m}} {m \choose b_1, b_2, \dots, b_k} \cdot p^{-k} \cdot \frac{z^{b_1+2b_2+\dots+kb_k}}{w_{B, 1} \cdots w_{B, k}} \cdot \BE\left[f(\tilde{x}, w_B)\right].
\end{align*}
    Above, we used the fact that $x_i = x_i^{b_i}$ as $x \in \{0, 1\}$ and $b_i \in \BN.$ 
    
    Now, let 
\[g_m(\tilde{x}, z) := \sum\limits_{\substack{B = (b_1, \dots, b_k) \\ b_1 + \dots + b_k = m}} {m \choose b_1, b_2, \dots, b_k} \cdot p^{-k} \cdot \frac{z^{b_1+2b_2+\dots+kb_k}}{w_{B, 1} \cdots w_{B, k}} \cdot f(\tilde{x}, w_B).\]
    Since the number of tuples $(b_1, b_2, \dots, b_k)$ that add to $m$ is $2^{O(m)}$ and since $k \le m$ for all tuples $B$, $g_m(\tilde{x}, z)$ can be computed in $n^{O(m)}$ time. For fixed $p, q, n,$ note that $g_m(\tilde{x}, z)$ is indeed only a function of $\tilde{x},$ $z$, and $m$, as the $w_{B, r}$'s are determined given $z$. Importantly, there is no dependence of $g$ on $x$. Then,
\[\BE[g_m(\tilde{x}, z)] = \left(\sum\limits_{i = 1}^{n} x_i z^i\right)^{m}.\]

    Finally, we provide uniform bounds on $f(\tilde{x}, w)$ that will give us our bounds on $g_m(\tilde{x}, z).$ If $|z| = 1$ and $|\arg z| \le \frac{2 \pi}{L}$, then by Proposition \ref{Complex3}, $\left|\frac{z-q}{p}\right| = 1 + O\left(\frac{1}{p^2 L^2}\right) = e^{O(1/(p^2 L^2))}.$ Therefore, since $b_r+\dots+b_k \le m,$ $|w_{B, r}| = e^{O(m^2/(p^2 L^2))},$ so $\left|w_{B, 1}^{i_1} \cdots w_{B, k}^{i_k-i_{k-1}}\right| = e^{O(m^2 n/(p^2 L^2))}$ whenever $1 \le i_1 < i_2 < \cdots < i_k \le n$. Since each $\tilde{x}_i$ is bounded by $1$ in absolute value, we have that
\[|f(\tilde{x}, w_B)| \le {n \choose k} \cdot e^{O(m^2 n/(p^2 L^2))} \le n^k \cdot e^{O(m^2 n/(p^2 L^2))},\]
    by the definition of $f$. Now, since $|z| = 1$ and $|w_{B, r}| \ge 1,$ we have that
\[|g_m(\tilde{x}, z)| \le \sum_{B} m! \cdot p^{-k} \cdot n^{k} \cdot e^{O(m^2 n/(p^2 L^2))} \le (p^{-1} mn)^{O(m)} \cdot e^{O(m^2 n/(p^2 L^2))},\]
    where we used the facts that ${m \choose b_1, \dots, b_k} \le m!$ and the number of tuples $B$ that we are summing over is at most $2^m$.

    If $|z - (1-\frac{p}{m})| \le \frac{p}{m},$ then by Proposition \ref{Complex2}, $|z^{b_r + \dots + b_k} - (1 - \frac{b_r + \dots + b_k}{m} \cdot p)| \le \frac{b_r + \dots + b_k}{m} \cdot p$ for all $|B| = k$ with $\sum b_i = m$ and all $1 \le r \le k.$ As $\frac{b_r + \dots + b_k}{m} \le 1,$ this implies that $|z^{b_r + \dots + b_k} - (1-p)| \le p,$ so $|w_{B, r}| \le 1$ for all $B$ and all $1 \le r \le k.$ Therefore, by the definition of $f$, $|f(\tilde{x}, (w_{B, 1}, \dots, w_{B, k}))| \le {n \choose k} \cdot |w_{B, 1} \cdots w_{B, k}|$ (since $i_1, i_2-i_1, \dots$ are all at least $1$ and $|\tilde{x}_i| \le 1$ for all $\tilde{x}$ and for all $i$). Moreover, note that $|z| \le 1,$ since $|z - (1-\frac{p}{m})| \le \frac{p}{m}.$ Thus,
\begin{align*}
|g_m(\tilde{x}, z)| &\le \sum\limits_{B} {m \choose b_1, \dots, b_k} \cdot p^{-k} \cdot \frac{1}{|w_{B, 1} \cdots w_{B, k}|} \cdot {n \choose k} \cdot |w_{B, 1} \cdots w_{B, k}| \\
&\le 2^m \cdot m! \cdot p^{-k} \cdot n^{m} \le (p^{-1} m n)^{O(m)},
\end{align*}
    as desired.
\end{proof}

Now, consider the problem of distinguishing between traces of strings $x^{(1)}, \dots, x^{(\ell)}$ that appear with probabilities $a_1, \dots, a_\ell,$ respectively, and traces of strings $y^{(1)}, \dots, y^{(\ell)}$ that appear with probabilities $b_1, \dots, b_\ell$, respectively. For a string $x = x_1 \cdots x_n,$ define $P(z; x) := \sum_{i = 1}^{n} x_i z^i$, which is a degree $n$ polynomial in $z$. We note that if $x$ came from the distribution over $x^{(1)}, \dots, x^{(\ell)}$,
\[\BE_{\tilde{x}}[g_k(\tilde{x}, z)] = \sum\limits_{i = 1}^{\ell} a_i P(z; x^{(i)})^k.\]
If $x$ came from the distribution over $y^{(1)}, \dots, y^{(\ell)}$, then
\[\BE_{\tilde{x}}[g_k(\tilde{x}, z)] = \sum\limits_{i = 1}^{\ell} b_i P(z; y^{(i)})^k.\]
We will prove that for some values of $k$ and $z$, $g_k(\tilde{x}, z)$ drawn from the former distribution and $g_k(\tilde{x}, z)$ drawn from the latter distribution have expected values that differ by a significant amount.

To do this, we first show the following result.

\begin{lemma} \label{Vandermonde}
    Let $a_1, \dots, a_\ell \in \BR$ be real numbers and $u_1, \dots, u_\ell \in \BC$ be distinct complex numbers. Also, fix some $\eps > 0$. Then, if $\sum |a_i| \ge \eps$, there must exist some $0 \le k < \ell$ such that 
\[\left|\sum_{i = 1}^{\ell} a_i u_i^k\right| \ge \frac{\eps}{\ell} \cdot \frac{\prod_{i > j} |u_i - u_j|}{\left(\sum_{i = 1}^{\ell} \sum_{k = 0}^{\ell-1} |u_i|^{2k}\right)^{(\ell-1)/2}}.\]
\end{lemma}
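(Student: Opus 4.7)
The plan is to recognize the quantities $\sum_{i=1}^{\ell} a_i u_i^k$ for $k = 0, 1, \dots, \ell-1$ as the coordinates of the vector $Va$, where $V \in \BC^{\ell \times \ell}$ is the Vandermonde matrix with entries $V_{ki} = u_i^k$. The lemma then reduces to a standard chain of linear-algebraic inequalities: a lower bound on $\sigma_{\min}(V)$ followed by norm comparisons on $\BC^\ell$.

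First, I would lower-bound $\sigma_{\min}(V)$. Since $\prod_{i=1}^{\ell} \sigma_i(V) = |\det V|$ and each $\sigma_i(V) \le \|V\|$ (the operator norm), we get the standard bound $\sigma_{\min}(V) \ge |\det V|/\|V\|^{\ell-1}$. The $u_i$ being distinct gives the classical Vandermonde determinant $|\det V| = \prod_{i>j} |u_i - u_j|$. Bounding the operator norm by the Frobenius norm $\|V\|_F$ and observing that $\|V\|_F^2 = \sum_{i=1}^{\ell} \sum_{k=0}^{\ell-1} |u_i|^{2k}$, I obtain
\[
\sigma_{\min}(V) \ge \frac{\prod_{i>j} |u_i - u_j|}{\left(\sum_{i=1}^{\ell} \sum_{k=0}^{\ell-1} |u_i|^{2k}\right)^{(\ell-1)/2}}.
\]

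Second, I would translate this singular-value bound into the desired $\infty$-norm bound on $Va$. The hypothesis $\sum_i |a_i| \ge \eps$ means $\|a\|_1 \ge \eps$, and Cauchy--Schwarz on $\BC^\ell$ gives $\|a\|_2 \ge \|a\|_1/\sqrt{\ell} \ge \eps/\sqrt{\ell}$. By definition of the smallest singular value, $\|Va\|_2 \ge \sigma_{\min}(V) \|a\|_2$, and the standard inequality $\|x\|_\infty \ge \|x\|_2/\sqrt{\ell}$ for $x \in \BC^\ell$ yields
\[
\max_{0 \le k < \ell} \Bigl|\sum_{i=1}^{\ell} a_i u_i^k\Bigr| = \|Va\|_\infty \ge \frac{\|Va\|_2}{\sqrt{\ell}} \ge \frac{\sigma_{\min}(V)\,\|a\|_2}{\sqrt{\ell}} \ge \frac{\sigma_{\min}(V) \cdot \eps}{\ell}.
\]
Combining the two displayed bounds gives exactly the claimed inequality, so some $k \in \{0, 1, \dots, \ell-1\}$ achieves it.

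There is no real obstacle here: the proof is entirely a matter of invoking the right standard facts (Vandermonde determinant, $\det = \prod \sigma_i$, Frobenius dominates operator norm, and the norm comparisons on $\BC^\ell$). The only nontrivial observation is to set up the problem as a singular-value lower bound for the Vandermonde matrix, which is natural because the quantity we want to lower bound is a coordinate of $Va$.
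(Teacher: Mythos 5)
Your proof is correct and follows essentially the same route as the paper's: set up the relevant Vandermonde matrix, lower-bound its smallest singular value by $|\det V|/\|V\|_F^{\ell-1}$, evaluate the determinant and Frobenius norm explicitly, and convert $\|a\|_1 \ge \eps$ into the desired $\infty$-norm bound via Cauchy--Schwarz. The only differences are cosmetic (you work with $Va$ where the paper uses $a^T V$, and you spell out the $\|\cdot\|_2 \to \|\cdot\|_\infty$ step that the paper leaves implicit).
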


\begin{proof}
    Let $V \in \BC^{\ell \times \ell}$ be the matrix with $V_{i, k},$ the entry in the $i$th row and $k$th column of $V$, is $u_i^{k-1}.$ Then, note that
\[\sum_{i = 1}^{\ell} a_i u_i^k = \left(\left(\begin{matrix} a_1 & a_2 & \cdots & a_\ell \end{matrix}\right) \cdot V\right)_{k+1}\]
    for all $0 \le k \le \ell-1,$ where we note that $\left(\begin{matrix} a_1 & a_2 & \cdots & a_\ell \end{matrix}\right) \cdot V$ is a row vector. Therefore, it suffices to show that
\[\left\|\left(\begin{matrix} a_1 & a_2 & \cdots & a_\ell \end{matrix}\right) \cdot V\right\|_2 \ge \frac{\eps}{\sqrt{\ell}} \cdot \frac{\prod_{i > j} |u_i - u_j|}{\left(\sum_{i = 1}^{\ell} \sum_{k = 0}^{\ell-1} |u_i|^{2k}\right)^{(\ell-1)/2}}.\]
    Note that $\|a\|_2 \ge \frac{1}{\sqrt{\ell}} \sum |a_i| \ge \frac{\eps}{\sqrt{\ell}},$ so it suffices to show that the the smallest singular value of $V$ is at least 
\[\frac{\prod_{i > j} |u_i - u_j|}{\left(\sum_{i = 1}^{\ell} \sum_{k = 0}^{\ell-1} |u_i|^{2k}\right)^{(\ell-1)/2}}.\]
    However, note that the smallest singular value of $V$ is $\sqrt{\lambda_{\min}(V^\dagger V)},$ which is at least
\[\frac{\sqrt{\det (V^\dagger V)}}{\|V\|_F^{\ell-1}},\]
    since $\|V\|_F = \sqrt{tr(V^\dagger V)} \ge \sqrt{\lambda_{\max}(V^\dagger V)}.$ But $\sqrt{\det(V^\dagger V)} = |\det(V)| = \prod_{i > j} |u_i-u_j|$ (a well-known property of Vandermonde matrices) and $\|V\|_F^2 = \sum_{i = 1}^{\ell} \sum_{k = 0}^{\ell-1} |u_i|^{2k},$ so we are done.
\end{proof}

We note that Lemma \ref{Vandermonde} will be especially useful if we replace $u_i$ with $P(z; x^{(i)}).$ This motivates attempting to provide a lower bound for $\prod_{i > j} \left|P(z; x^{(i)}) - P(z; x^{(j)})\right|,$ which is precisely what we do in the next lemma.

\begin{lemma} \label{ComplexAnalysis}
    Fix $0 < p < 1$ and $m, L \ge 1.$ Let $x^{(1)}, \dots, x^{(\ell)}$ be distinct strings in $\{0, 1\}^n.$ Then, there exists some $z$ with norm $1$ and argument at most $\frac{2 \pi}{L}$ in magnitude such that 
\[\prod_{i > j} |P(z; x^{(i)}) - P(z; x^{(j)})| \ge n^{-\Theta(L \cdot \ell^2)}.\]
    Also, there exists some $z$ satisfying $|z - \left(1 - \frac{p}{m}\right)| \le \frac{p}{m},$ such that 
\[\prod_{i > j} |P(z; x^{(i)}) - P(z; x^{(j)})| \ge \exp\left(-\Theta\left(\ell^2 m^{1/3} p^{-1/3} n^{1/3} (\log n)^{2/3} + \ell^2 m p^{-1} \log n\right)\right).\]
\end{lemma}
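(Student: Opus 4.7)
\emph{Proof plan.} Define the polynomial
\[ F(z) \;:=\; \prod_{i > j}\bigl(P(z;x^{(i)}) - P(z;x^{(j)})\bigr). \]
Each factor is a Littlewood polynomial (coefficients in $\{-1,0,1\}$) of degree at most $n$ and is nonzero since the $x^{(i)}$ are distinct, so $F \in \mathbb{Z}[z]\setminus\{0\}$ has degree at most $D := n\binom{\ell}{2}$ and coefficients of magnitude at most $N := n^{O(\ell^2)}$. Write $F(z) = z^d\,G(z)$ with $G \in \mathbb{Z}[z]$ and $G(0) = G_0 \neq 0$; then $|G_0|\ge 1$ and the coefficients of $G/G_0$ still have magnitude at most $N$. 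The product in the lemma equals $|F(z)|$ up to complex conjugation of factors, so it suffices to bound $|F(z)|$ from below.

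For the first bound I apply Theorem~\ref{Littlewood} directly to $G/G_0$. Take $A$ to be the subarc $\{e^{i\theta} : |\theta|\le 2\pi/L\}$ of length $a = 4\pi/L$; since the normalized polynomial has constant coefficient $1$ and remaining coefficients of magnitude at most $N$, the theorem yields $\sup_{z \in A}|G(z)/G_0| \ge \exp\bigl(-c_1(1 + \log N)\cdot L/(4\pi)\bigr) = n^{-\Theta(L\ell^2)}$. Since $|z|^d = 1$ on the unit circle and $|G_0|\ge 1$, multiplying through gives $\sup_{z \in A}|F(z)| \ge n^{-\Theta(L\ell^2)}$, as claimed.

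For the second bound the main obstacle is that Theorem~\ref{Littlewood} produces points on the unit circle, whereas we need a point in $\overline D := \{z : |z-(1-p/m)|\le p/m\}$, which is internally tangent to the unit circle only at $z=1$. The plan is to apply the first bound on an arc centered at $z=1$, obtaining a point $z^\ast = e^{i\theta^\ast}$ with $|\theta^\ast|\le 2\pi/L$ and $|F(z^\ast)|\ge n^{-\Theta(L\ell^2)}$, and then transfer this lower bound to the nearest point of $\overline D$ using a Bernstein--Lipschitz estimate. Concretely, Bernstein's inequality gives $\max_{|z|\le 1}|F'(z)|\le D\cdot \max_{|z|=1}|F(z)|\le n^{O(\ell^2)}$; and a direct computation shows $\mathrm{dist}(z^\ast,\overline D)\le m(\theta^\ast)^2/(2p) = O(m/(pL^2))$, so the transfer costs $O(n^{O(\ell^2)}\cdot m/(pL^2))$.

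The two additive terms in the target exponent correspond to two regimes of $L$. Choosing $L \asymp (mn/p)^{1/3}(\log n)^{-1/3}$ balances Part~1's exponent $L\ell^2 \log n$ against the transfer geometry and produces the first term $T_1 := \ell^2 m^{1/3}p^{-1/3}n^{1/3}(\log n)^{2/3}$. Choosing $L \asymp m/p$ makes the arc's angular extent comparable to the disk's geometry near $z=1$, so the transfer step is essentially cost-free at the cost of a weaker Part~1 bound $\exp(-\Theta(\ell^2 mp^{-1}\log n))$, giving the second term $T_2$. Taking the better of the two bounds gives $\exp(-\Theta(T_1 + T_2))$. The hardest step is the first regime: one must verify that the Bernstein loss $n^{O(\ell^2)}$ is absorbed into the exponent by the two-regime optimization and does not swamp the Part~1 lower bound; this comes down to a careful balancing calculation using that $\log n$ appears in both the Part~1 exponent and the denominator of the transfer distance.
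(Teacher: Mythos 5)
Your first part is correct and matches the paper's argument: factor out the power of $z$, normalize the constant coefficient, and apply Theorem~\ref{Littlewood} directly on the unit arc of angular width $O(1/L)$; the resulting bound $n^{-\Theta(L\ell^2)}$ is right.

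The second part has a genuine gap in the Bernstein--Lipschitz transfer step. That transfer is additive: $|F(\hat z)| \ge |F(z^\ast)| - \max |F'|\cdot \mathrm{dist}(z^\ast,\overline D)$. You have $|F(z^\ast)| \ge n^{-\Theta(L\ell^2)}$, $\max|F'| \le n^{O(\ell^2)}$, and $\mathrm{dist} = O(m/(pL^2))$. For the subtraction to preserve a nontrivial lower bound you would need
\[
n^{O(\ell^2)}\cdot \frac{m}{pL^2} \;\ll\; n^{-\Theta(L\ell^2)},
\]
but the left side decays only polynomially in $L$ while the right side decays exponentially in $L$ (it is $\exp(-\Theta(L\ell^2\log n))$), so the inequality fails for every $L$ of interest. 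Concretely, at your first-regime choice $L \asymp (mn/p)^{1/3}(\log n)^{-1/3}$, the transfer loss is roughly $n^{O(\ell^2)}\cdot n^{-2/3}$ (polynomial in $n$), whereas the Part~1 lower bound is $\exp\bigl(-\Theta(\ell^2 n^{1/3}(\log n)^{2/3})\bigr)$, which is super-polynomially smaller; the loss swamps the signal. The same problem occurs in the second regime $L \asymp m/p$: the transfer distance is $\Theta(p/m)$, and $n^{O(\ell^2)}\cdot p/m$ is not below $\exp(-\Theta(\ell^2 m p^{-1}\log n))$ in general. No additive Bernstein estimate can bridge an exponentially small target across a region where the polynomial is allowed to vary at polynomial scale, so the deferred ``careful balancing calculation'' cannot be made to work.

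The paper avoids the transfer entirely. Instead of starting on $|z|=1$ and walking into the disk, it works directly on a circle of radius $1 - p/(mL')$ that lies inside $\overline D$, and applies Theorem~\ref{Littlewood} to the rescaled polynomial $Q'(z') := Q\bigl(z'\,(1 - p/(mL'))\bigr)$. Rescaling by a factor in $(0,1)$ only shrinks coefficients and leaves the constant term alone, so the Littlewood hypotheses still hold for $Q'$ on $|z'| = 1$; the only extra cost is the $\prod |z|^{r_{ij}}$ factor, which is $\exp(-\Theta(p\ell^2 n/(mL')))$ and is then balanced against the Littlewood exponent $\Theta(\ell^2 m p^{-1} \sqrt{L'}\log n)$ by the choice of $L'$. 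You should replace your Bernstein step with this rescaling argument; the rest of your plan (two regimes, the form of the final exponent) then falls out of the optimization over $L'$, just as in the paper.
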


\begin{proof}
    For each $i > j,$ let $Q_{ij}(z) = \frac{P(z; x^{(i)}) - P(z; x^{(j)})}{z^r},$ where we choose $r := r_{ij}$ as a nonnegative integer so that $Q_{ij}(z)$ is a polynomial but has nonzero constant coefficient. Note that $Q_{ij}$ has degree at most $n$ and has all coefficients in $\{-1, 0, 1\}$, with constant coefficient nonzero. Also, let $Q(z) = \prod_{i > j} Q_{ij}(z).$
    
    For the case $|z| = 1$ and $|\arg z| \le \frac{2 \pi}{L},$ it suffices to show that there is some $z$ in our range with $|Q(z)| \ge n^{-O(L \cdot \ell^2)}$, since $|z^r| = 1$. However, note that $Q_{ij}$ has degree at most $n$, all coefficients at most $1$ in absolute value, and constant coefficient $1$ in absolute value. Thus, $Q(z)$ has degree at most $\frac{\ell (\ell-1)}{2} \cdot n,$ all coefficients at most $n^{\ell(\ell-1)/2}$ in absolute value, and constant coefficient $1$ in absolute value. Thus, the lemma follows from Theorem \ref{Littlewood}.
    
    For the case of $|z - (1 - \frac{p}{m})| \le \frac{p}{m},$ choose some integer $L' \ge 3$ and consider the set of $z$ satisfying $|z| = 1 - \frac{p}{m \cdot L'}$ and $|\arg z| \le \frac{p}{m \cdot \sqrt{L'}}$. First, notice that that any such $z$ satisfies $\Re z \le 1 - \frac{p}{m \cdot L'}$ but $\Re z \ge 1 - \frac{p}{m \cdot L'} - \frac{p}{m \cdot \sqrt{L'}} \ge 1 - \frac{p}{m}.$ Moreover, note that $|\Im z| \le \frac{p}{m \cdot \sqrt{L'}}.$ Therefore,
\[\left|z - \left(1 - \frac{p}{m}\right)\right|^2 \le \left(\frac{p}{m} - \frac{p}{m \cdot L'}\right)^2 + \left(\frac{p}{m \cdot \sqrt{L'}}\right)^2 = \left(\frac{p}{m}\right)^2 \cdot \left[\left(1 - \frac{1}{L'}\right)^2 + \frac{1}{L'}\right] \le \left(\frac{p}{m}\right)^2.\]
    Thus, we only have to provide a lower bound in the range $|z| = 1 - \frac{p}{m \cdot L'}, |\arg z| \le \frac{p}{m \cdot \sqrt{L'}}$ for an appropriate $L'$. If we choose $z' := z/(1 - \frac{p}{m \cdot L'}),$ then $|z'| = 1$ and $|\arg z'| \le \frac{p}{m \cdot \sqrt{L'}}.$ Defining $Q'(z') := Q(z) = Q\left(z' \cdot (1 - \frac{p}{m \cdot L'})\right)$ means that $Q'$ has degree at most $\frac{\ell (\ell-1)}{2} \cdot n,$ all coefficients at most $n^{\ell(\ell-1)/2}$ in absolute value, and constant coeficient $1$ in absolute value. Thus, by Theorem \ref{Littlewood}, $|Q(z)| = |Q'(z')| \ge n^{-\Theta(\ell^2 \cdot m \cdot p^{-1} \cdot \sqrt{L'})}$ for some $z$ in the appropriate range. Finally, note that $|z| \ge e^{-\Theta(p/(m \cdot L'))},$ which means that $\prod |z^{r_{ij}}| \ge e^{-\Theta(p \cdot m^{-1} \cdot (L')^{-1} \cdot \ell^2 \cdot n)}.$ Thus, we have that
\[\prod\limits_{i > j} \left|P(z;x^{(i)})-P(z;x^{(j)})\right| \ge \exp\left(-\Theta(\ell^2 \cdot m \cdot p^{-1} \cdot \log n \cdot \sqrt{L'} + p \cdot m^{-1} \cdot \ell^2 \cdot n \cdot (L')^{-1})\right).\]
    Note that if $L' = \left((p^2 n)/(m^2 \log n)\right)^{2/3},$ then $\ell^2 \cdot m \cdot p^{-1} \cdot \log n \cdot \sqrt{L'}$ and $p \cdot m^{-1} \cdot \ell^2 \cdot n \cdot (L')^{-1}$ will be equal. However, it is possible that this means $L' \le 3$ in this case, in which case we choose $L' = 3.$ In the case where $L' = \left((p^2 n)/(m^2 \log n)\right)^{2/3} \ge 3,$ we have that 
\[\ell^2 \cdot m \cdot p^{-1} \cdot \log n \cdot \sqrt{L'} = p \cdot m^{-1} \cdot \ell^2 \cdot n \cdot (L')^{-1} = \ell^2 m^{1/3} p^{-1/3} n^{1/3} (\log n)^{2/3}.\]
    Otherwise, $L' = 3,$ and 
\[\ell^2 \cdot m \cdot p^{-1} \cdot \log n = \Theta\left(\ell^2 \cdot m \cdot p^{-1} \cdot \log n \cdot \sqrt{L'}\right) = \Omega\left(p \cdot m^{-1} \ell^2 n \cdot (L')^{-1}\right).\]
    This completes the case for $|z - \left(1 - \frac{p}{m}\right)| \le \frac{p}{m}.$
\end{proof}

We also note the following improvement to Lemma \ref{ComplexAnalysis}.

\begin{proposition} \label{Modification}
    Let $p, m, L, x^{(i)}$ be as in Lemma \ref{ComplexAnalysis}. Then, there exists some universal constant $C > 1$ and some $z \in \BC$ such that $|z| = 1, |\arg z| \le \frac{2 \pi}{L},$ $\arg z$ an integer multiple of $n^{-C \cdot L \cdot \ell^2}$, and
\[\prod_{i > j} |P(z; x^{(i)}) - P(z; x^{(j)})| \ge n^{-C \cdot L \cdot \ell^2}.\]
    Also, there exists $z \in \BC$ such that $|z - \left(1 - \frac{p}{m}\right)| \le \frac{p}{m},$ $\Re z, \Im z$ are both integer multiples of $\delta := \exp\left(-C\left(\ell^2 m^{1/3} p^{-1/3} n^{1/3} (\log n)^{2/3} + \ell^2 m p^{-1} \log n\right)\right)$, and
\[\prod_{i > j} |P(z; x^{(i)}) - P(z; x^{(j)})| \ge \delta = \exp\left(-C\left(\ell^2 m^{1/3} p^{-1/3} n^{1/3} (\log n)^{2/3} + \ell^2 m p^{-1} \log n\right)\right).\]
\end{proposition}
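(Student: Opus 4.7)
The plan is to deduce Proposition \ref{Modification} from Lemma \ref{ComplexAnalysis} by a rounding argument: take the witness $z_0$ produced there, round it to the nearest grid point $z$, and show that the quantity we are lower bounding changes by much less than its value at $z_0$, provided the grid is fine enough. The key tool is a Lipschitz estimate on the polynomial
\[R(z) := \prod_{i > j}\bigl(P(z;x^{(i)}) - P(z;x^{(j)})\bigr),\]
whose modulus equals exactly $\prod_{i>j}|P(z;x^{(i)})-P(z;x^{(j)})|$. Each factor has degree at most $n$ and coefficients in $\{-1,0,1\}$, so $R$ has degree at most $\binom{\ell}{2}n$ and integer coefficients bounded in absolute value by $n^{O(\ell^2)}$. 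A coefficient-wise estimate then gives
\[|R'(z)| \le n^{O(\ell^2)} \qquad \text{whenever } |z| \le 1,\]
which is the Lipschitz bound that drives everything.

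For the first assertion, let $z_0 = e^{i\theta_0}$ be the witness from Lemma \ref{ComplexAnalysis}. Round $\theta_0$ toward zero to the nearest integer multiple of $\eta := n^{-CL\ell^2}$, obtaining $\theta$ with $|\theta|\le|\theta_0|\le 2\pi/L$, and set $z := e^{i\theta}$. Then $z$ satisfies all stated constraints and $|z-z_0|\le |\theta-\theta_0|\le \eta$. The Lipschitz bound gives $|R(z)-R(z_0)| \le n^{O(\ell^2)}\eta$, which is at most $\tfrac{1}{2}\,n^{-c_1 L\ell^2}$ once $C$ is chosen large enough, where $c_1$ is the absolute constant hidden in Lemma \ref{ComplexAnalysis}. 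Combined with $|R(z_0)|\ge n^{-c_1 L\ell^2}$ from the lemma, we conclude $|R(z)|\ge n^{-CL\ell^2}$, as required.

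For the second assertion, take $z_0$ in the disk $|z_0-(1-p/m)|\le p/m$ from Lemma \ref{ComplexAnalysis}, and round $\Re z_0$ and $\Im z_0$ separately to the nearest multiples of $\delta$, producing $z$ with $|z-z_0|\le \delta\sqrt{2}$. A delicate point is to verify $z$ still lies in the disk: the proof of Lemma \ref{ComplexAnalysis} actually exhibits $z_0$ in the strictly smaller region $|z_0|=1-p/(mL')$ and $|\arg z_0|\le p/(m\sqrt{L'})$ which sits at distance $\Omega(p/m)$ from the boundary of the disk, and $\delta$ is exponentially small in our parameters, so the rounding is harmless. The Lipschitz estimate then gives $|R(z)-R(z_0)| \le n^{O(\ell^2)}\delta$. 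Since the exponent in Lemma \ref{ComplexAnalysis} already contains an $\ell^2 m p^{-1}\log n$ term that dominates the $O(\ell^2)\log n$ correction coming from $n^{O(\ell^2)}$, taking $C$ a large enough absolute constant ensures $|R(z)|\ge \tfrac{1}{2}|R(z_0)|\ge \delta$.

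The main technical obstacle is just careful bookkeeping of constants so that the Lipschitz perturbation is dominated by the Lemma \ref{ComplexAnalysis} lower bound; no new complex-analytic input is needed beyond that lemma. The most subtle step conceptually is confirming that the rounded point in the second case remains in the required disk, which is exactly where the slack built into the proof of Lemma \ref{ComplexAnalysis} (the smaller region with $L'\ge 3$) gets used.
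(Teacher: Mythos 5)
Your proof follows essentially the same approach as the paper: bound $|R'(z)| \le n^{O(\ell^2)}$ on the unit disk via the degree and coefficient bounds, round the witness from Lemma \ref{ComplexAnalysis} to the nearest grid point, and absorb the Lipschitz perturbation by choosing $C$ large. One small imprecision in the second case: the witness $z_0$ at distance $|z_0-(1-p/m)|$ actually has slack to the boundary of order $\Omega(p/(mL'))$, not $\Omega(p/m)$ (since $\sqrt{1-1/L'+1/L'^2}\approx 1-1/(2L')$ for large $L'$); this still vastly exceeds $\delta\sqrt2$, so your conclusion holds, but the stated constant is off. The paper sidesteps this geometric estimate entirely and simply says that if the rounded $z'$ exits the disk one shifts $\Re z'$ or $\Im z'$ by a constant multiple of $\delta$ to pull it back in, which is a little more robust and avoids having to quantify the slack; otherwise the two arguments coincide.
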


\begin{proof}
    Since $P(z; x^{(i)})-P(z; x^{(j)})$ has degree at most $n$ and all coefficients in $\{-1, 0, 1\},$ the polynomial $R(z) := \prod_{i > j} (P(z; x^{(i)})-P(z; x^{(j)}))$ has degree at most $n \cdot \frac{\ell(\ell-1)}{2}$ and all coefficients bounded by $n^{\ell^2}.$ Thus, for all $|z| \le 1,$ $\left|\frac{d}{dz} R(z)\right| \le n^{O(\ell^2)}.$
    
    Let $z$ be chosen as in Lemma \ref{ComplexAnalysis}. In the case of $|z| = 1, |\arg z| \le \frac{2 \pi}{L},$ we can replace $z$ with $z'$ such that $|z'| = 1$ and $\arg z'$ equals $\arg z$ rounded to the nearest multiple of $n^{-C \cdot L \cdot \ell^2}.$ (If $|\arg z'| \ge \frac{1}{L},$ we round in the opposite direction instead.) Then, $|R(z)-R(z')| \le n^{-C \cdot L \cdot \ell^2} \cdot n^{O(\ell^2)}$ and as $|R(z)| \ge n^{-\Theta(L \cdot \ell^2)}$ by Lemma \ref{ComplexAnalysis}, the first part of the proposition follows by the Triangle inequality for sufficiently large $C$.
    
    In the case of $|z - \left(1 - \frac{p}{m}\right)| \le \frac{p}{m},$ we can replace $z$ with $z'$ where $\Re z'$ is $\Re z$ rounded to the nearest multiple of $\delta$, and $\Im z'$ is $\Im z$ rounded to the nearest multiple of $\delta$. If $\left|z'-(1-\frac{p}{m})\right| > \frac{p}{m},$ then we can shift either $\Re z', \Im z'$ by a constant multiple of $\delta$ so that $\left|z'-(1-\frac{p}{m})\right| \le \frac{p}{m}.$ Then, $|R(z)-R(z')| \le \delta \cdot n^{O(\ell^2)}$, so if we choose $C$ to be sufficiently large, the second part of the proposition follows by the Triangle inequality for sufficiently large $C$.
\end{proof}

\begin{remark}
    We will use Lemma \ref{ComplexAnalysis} instead of Proposition \ref{Modification} in the next two results for simplicity. However, since we use Lemma \ref{ComplexAnalysis} as a black box, we can easily substitute in Proposition \ref{Modification}, which we do when finally proving Theorem \ref{main}.
\end{remark}

Now, we can prove the following result.

\begin{theorem} \label{Distinguish1}
    Fix $\eps > 0$ and $L \in \BN$, and let $\mathcal{D}$ and $\mathcal{D}'$ be $\ell$-sparse distributions over $\{0, 1\}^n$ with $d_{\text{TV}}(\mathcal{D}, \mathcal{D}') \ge \eps.$ Then, there exists some $1 \le k \le 2\ell-1$ and $z$ with $|z| = 1$ and $|\arg z| \le \frac{2 \pi}{L}$ such that
\[\left|\BE_{x \sim \mathcal{D}} \left[P(z; x)^k\right] - \BE_{x \sim \mathcal{D}'} \left[P(z; x)^k\right]\right| \ge \eps \cdot n^{-\Theta(L \cdot \ell^2)}.\]
    Moreover, for any $0 < p < 1$ and $m \ge 1,$ there exists some $1 \le k \le 2\ell-1$ and $z$ with $\left|z - \left(1 - \frac{p}{m}\right)\right| \le \frac{p}{m}$ such that
\[\left|\BE_{x \sim \mathcal{D}} \left[P(z; x)^k\right] - \BE_{x \sim \mathcal{D}'} \left[P(z; x)^k\right]\right| \ge \eps \cdot \exp\left(-\Theta\left(\ell^2 m^{1/3} p^{-1/3} n^{1/3} (\log n)^{2/3} + \ell^2 m p^{-1} \log n \right)\right).\]
\end{theorem}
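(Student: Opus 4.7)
The plan is to reduce Theorem \ref{Distinguish1} to a direct application of Lemmas \ref{Vandermonde} and \ref{ComplexAnalysis}. Let $z^{(1)}, \dots, z^{(s)}$ be the distinct strings appearing in the union of the supports of $\mathcal{D}$ and $\mathcal{D}'$, so that $s \le 2\ell$, and set $c_i := \BP_{y \sim \mathcal{D}}(y = z^{(i)}) - \BP_{y \sim \mathcal{D}'}(y = z^{(i)})$. The hypothesis $d_{\text{TV}}(\mathcal{D}, \mathcal{D}') \ge \eps$ translates into $\sum_{i=1}^{s} |c_i| \ge 2\eps$. For any $z \in \BC$ and any nonnegative integer $k$,
\[
\BE_{x \sim \mathcal{D}}[P(z;x)^k] - \BE_{x \sim \mathcal{D}'}[P(z;x)^k] \;=\; \sum_{i=1}^{s} c_i \cdot P(z; z^{(i)})^k,
\]
so writing $u_i := P(z; z^{(i)})$, the goal is to lower bound $\left|\sum_i c_i u_i^k\right|$ for some $k \le 2\ell - 1$.

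Next, I would apply Lemma \ref{ComplexAnalysis} to the $s$ distinct strings $z^{(1)}, \dots, z^{(s)}$ (with $s$ in place of $\ell$, which only changes implicit constants since $s \le 2\ell$) to select a $z$ in the desired region --- on the arc $|z| = 1$, $|\arg z| \le 2\pi/L$ in the first case, or in the disk $|z - (1 - p/m)| \le p/m$ in the second case --- for which $\prod_{i > j} |u_i - u_j|$ is lower bounded by the corresponding expression from that lemma. With this $z$ fixed, the $u_i$'s are pairwise distinct (since the product is nonzero), so Lemma \ref{Vandermonde} applied to $c_1, \dots, c_s$ and $u_1, \dots, u_s$ yields some $0 \le k \le s-1 \le 2\ell - 1$ with
\[
\left|\sum_{i=1}^{s} c_i u_i^k\right| \;\ge\; \frac{2\eps}{s} \cdot \frac{\prod_{i>j} |u_i - u_j|}{\left(\sum_{i=1}^{s} \sum_{k'=0}^{s-1} |u_i|^{2k'}\right)^{(s-1)/2}}.
\]
The case $k = 0$ gives $\sum_i c_i = 0$, since both $\mathcal{D}$ and $\mathcal{D}'$ are probability distributions, so the maximum over $k$ is necessarily attained at some $k \ge 1$.

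Finally, I would estimate the denominator. In both regimes considered above we have $|z| \le 1$, hence $|u_i| = |P(z; z^{(i)})| \le n$; thus the double sum is at most $(2\ell)^2 \cdot n^{2(2\ell - 1)}$, so the denominator factor is bounded by $n^{O(\ell^2)} = \exp\!\bigl(O(\ell^2 \log n)\bigr)$. Combining this with the two lower bounds from Lemma \ref{ComplexAnalysis} gives exactly the two inequalities claimed in the theorem: in the arc case, $n^{-\Theta(L \ell^2)} / n^{O(\ell^2)} = n^{-\Theta(L \ell^2)}$, and in the disk case the additive overhead $O(\ell^2 \log n)$ is dominated by the $\ell^2 m p^{-1} \log n$ term already present in the exponent, so it is absorbed. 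The only real obstacle, and it is a mild one, is verifying that the polynomial-in-$n$ growth of this denominator does not swamp the lower bound coming from Lemma \ref{ComplexAnalysis}; this is where having $s \le 2\ell$ distinct support points (and therefore only a $(2\ell-1)/2$-power in the denominator) is crucial.
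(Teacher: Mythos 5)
Your proposal is correct and follows essentially the same route as the paper's own proof: rewrite the difference of moments as $\sum_i c_i P(z;z^{(i)})^k$ over the at most $2\ell$ distinct support strings, pick $z$ via Lemma \ref{ComplexAnalysis}, apply the Vandermonde bound of Lemma \ref{Vandermonde}, rule out $k=0$ using $\sum c_i=0$, and absorb the $n^{O(\ell^2)}$ denominator into the exponent. The only cosmetic differences are that you track the factor $2\eps$ from the total variation definition (the paper rounds to $\eps$, which is harmless) and that you absorb the $O(\ell^2\log n)$ denominator cost into the $\ell^2 m p^{-1}\log n$ term rather than the $n^{1/3}$ term as the paper does; both are valid since $m\ge 1$ and $p<1$.
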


\begin{proof}
    Suppose that $x \sim \mathcal{D}$ equals $x^{(j)}$ with probability $a_j$ for $1 \le j \le \ell$ and $y \sim \mathcal{D}$ equals $y^{(j)}$ with probability $b_j$ for $1 \le j \le \ell.$ Then, we can write
\[\BE_{x \sim \mathcal{D}} \left[P(z; x)^k\right] - \BE_{x \sim \mathcal{D}'} \left[P(z; x)^k\right] = \sum\limits_{j = 1}^{\ell} a_j P(z; x^{(j)})^k - \sum\limits_{j = 1}^{\ell} b_j P(z; y^{(j)})^k.\]
    Since $d_{\text{TV}}(\mathcal{D}, \mathcal{D}') \ge \eps$, we can rewrite this as
\[\sum\limits_{j = 1}^{\ell'} c_j P(z; x^{(j)})^k\]
    for some $\ell' \le 2 \ell,$ the $x^{(j)}$'s distinct strings coming from the original $x^{(j)}$ and $y^{(j)}$ strings, and $\sum c_i = 0,$ $\sum |c_i| \ge \eps$. By Lemma \ref{Vandermonde}, for any $z$, there exists some $0 \le k < \ell'$ such that
\[\left|\sum\limits_{j = 1}^{\ell'} c_i P(z; x^{(j)})^k\right| \ge \frac{\eps}{\ell'} \cdot \frac{\prod_{i > j} |P(z; x^{(i)}) - P(z; x^{(j)})|}{\left(\sum_{i = 1}^{\ell'} \sum_{k = 0}^{\ell'-1} |P(z; x^{(i)})|^{2k}\right)^{(\ell'-1)/2}}.\]
    Note that $k$ cannot equal $0$, however, since $\sum c_i = 0.$ Thus, $1 \le k \le 2\ell-1.$ 
    
    For the first half of the result, choose $z$ based on Lemma \ref{ComplexAnalysis}, so that $|z| = 1,$ $|\arg z| \le \frac{2 \pi}{L},$ and $\prod_{i > j} |P(z; x^{(i)}) - P(z; x^{(j)})| \ge n^{-\Theta(L \cdot (\ell')^2)},$ and then choose $k$ accordingly. Since $|P(z; x)| \le n$ for all $|z| = 1$ and $x \in \{0, 1\}^n,$ we have that
\begin{align*}\frac{\eps}{\ell'} \cdot \frac{\prod_{i > j} |P(z; x^{(i)}) - P(z; x^{(j)})|}{\left(\sum_{i = 1}^{\ell'} \sum_{k = 0}^{\ell'-1} |P(z; x^{(i)})|^{2k}\right)^{(\ell'-1)/2}} &\ge \frac{\eps}{\ell'} \cdot \frac{n^{-\Theta(L \cdot (\ell')^2)}}{\left((\ell')^2 \cdot n^{2 \ell'}\right)^{(\ell'-1)/2}} \\
&\ge \eps \cdot n^{-\Theta(L \cdot \ell^2)} \cdot \ell^{-\Theta(\ell)} \\
&= \eps \cdot n^{-\Theta(L \cdot \ell^2)}
\end{align*}
    since $\ell' \le 2 \ell.$
    
    For the second half, we choose $z$ based on Lemma \ref{ComplexAnalysis} such that $\left|z - \left(1 - \frac{p}{m}\right)\right| \le \frac{p}{m},$ so that 
\[\prod_{i > j} |P(z; x^{(i)}) - P(z; x^{(j)})| \ge \exp\left(-\Theta\left((\ell')^2 m^{1/3} p^{-1/3} n^{1/3} (\log n)^{2/3} + (\ell')^2 m p^{-1} \log n\right)\right),\]
    and then we can choose $k$ accordingly. Noting that $\ell' \le \exp\left(\Theta\left(\ell^2 m^{1/3} p^{-1/3} n^{1/3} (\log n)^{2/3}\right)\right)$ and 
\begin{align*}
    \left(\sum_{i = 1}^{\ell'} \sum_{k = 0}^{\ell'-1} |P(z; x^{(i)})|^{2k}\right)^{(\ell'-1)/2} &\le \left((\ell')^2 \cdot n^{2 \ell'}\right)^{(\ell'-1)/2} \\
    &\le \exp\left(\Theta(\log n \cdot \ell^2)\right) \\
    &\le \exp\left(\Theta\left(\ell^2 m^{1/3} p^{-1/3} n^{1/3} (\log n)^{2/3}\right)\right),
\end{align*}
    as $\ell' \le 2\ell,$ the result is immediate.
\end{proof}

Previously, we had only been dealing with distinguishing between two fixed distributions, though in the population recovery problem, we need to recover the original distribution, which means there can be a very large number of distributions to choose from. We will use Lemma \ref{CreatingG} and Theorem \ref{Distinguish1} to prove the final result, but we will also use the modification of Proposition \ref{Modification} so that the set of $z$ we deal with is small enough that we can get good estimates of $\BE[g_k(\tilde{x}, z)]$ for all $z, k$.

\begin{lemma} \label{Algorithm}
    Fix $0 < \eps < 1$ and reals $M \ge \eps^{-1}, N \ge 1.$ Let $S$ be a finite set of complex numbers, such that $|g_k(\tilde{x}, z)| \le N$ for all $\tilde{x}$ and all $z \in S,$ but for any pair of $\ell$-sparse distributions $\mathcal{D}$ and $\mathcal{D}'$ over $\{0, 1\}^n$ with $d_{\text{TV}}(\mathcal{D}, \mathcal{D}') \ge \eps$, there exists $z \in S$ and $0 \le k \le 2\ell-1$ such that $|\BE_{\tilde{x}: x \sim \mathcal{D}}[g_k(\tilde{x}, z)] - \BE_{\tilde{x}: x \sim \mathcal{D}'}[g_k(\tilde{x}, z)]| \ge M^{-1}$. Then, the population recovery problem can be solved in $O(M^2 \cdot N^2 \cdot \log (\ell \cdot |S| \cdot \delta^{-1}))$ queries with at least $1-\delta$ probability.
\end{lemma}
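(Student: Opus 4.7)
The plan is a straightforward empirical-mean / hypothesis-testing argument. We draw $K = \Theta(M^2 N^2 \log(\ell |S| \delta^{-1}))$ independent traces from the unknown distribution $\mathcal{D}$, and for every pair $(z, k) \in S \times \{0, 1, \dots, 2\ell-1\}$ we form the empirical mean
\[\hat{\mu}_{z, k} := \frac{1}{K} \sum_{i = 1}^{K} g_k(\tilde{x}_i, z).\]
Since $g_k(\tilde{x}, z)$ takes complex values with $|g_k(\tilde{x}, z)| \le N$, the real and imaginary parts of $g_k$ are both real random variables bounded in absolute value by $N$. Applying Hoeffding's inequality separately to the real and imaginary parts, each coordinate of $\hat{\mu}_{z, k}$ differs from its expectation by more than $\frac{1}{6 M}$ with probability at most $2 \exp(-\Omega(K / (M^2 N^2)))$. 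Choosing the hidden constant in $K$ large enough and taking a union bound over the $4\ell \cdot |S|$ scalar events, we conclude that with probability at least $1 - \delta$, every $\hat{\mu}_{z, k}$ satisfies $|\hat{\mu}_{z, k} - \BE_{\tilde{x}: x \sim \mathcal{D}}[g_k(\tilde{x}, z)]| \le \frac{1}{3 M}$ simultaneously.

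Condition on this event. The algorithm now performs an (exhaustive) search over all $\ell$-sparse distributions $\mathcal{D}'$ on $\{0, 1\}^n$, where probabilities are discretized on a sufficiently fine additive grid (fine enough that every $\ell$-sparse distribution is within total variation distance $\eps / 2$ of some grid distribution). For each candidate $\mathcal{D}'$, compute the theoretical means $\mu'_{z, k} := \BE_{\tilde{x}: x \sim \mathcal{D}'}[g_k(\tilde{x}, z)]$ exactly, and accept $\mathcal{D}'$ iff $|\mu'_{z, k} - \hat{\mu}_{z, k}| \le \frac{1}{3 M}$ for every $(z, k)$. Output any accepted $\mathcal{D}'$.

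For correctness, first note that $\mathcal{D}$ itself (or a sufficiently close grid approximation) passes the test, so an output always exists. Conversely, suppose the algorithm outputs some $\mathcal{D}'$ with $d_{\text{TV}}(\mathcal{D}, \mathcal{D}') \ge \eps$. Then by hypothesis there is a pair $(z, k) \in S \times \{0, \dots, 2\ell-1\}$ with
\[\left|\BE_{\tilde{x}: x \sim \mathcal{D}}[g_k(\tilde{x}, z)] - \mu'_{z, k}\right| \ge M^{-1},\]
and combining with $|\hat{\mu}_{z, k} - \BE_{\tilde{x}: x \sim \mathcal{D}}[g_k(\tilde{x}, z)]| \le \frac{1}{3 M}$ via the triangle inequality yields $|\hat{\mu}_{z, k} - \mu'_{z, k}| \ge \frac{2}{3 M}$, contradicting acceptance. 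Hence any output satisfies $d_{\text{TV}}(\mathcal{D}, \mathcal{D}') < \eps$, as desired.

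The main (mild) obstacle is just choosing the right concentration inequality: Chebyshev would force a polynomial dependence on $|S|$ and $\delta^{-1}$, while the bound $|g_k| \le N$ together with Hoeffding gives only logarithmic dependence and hence the claimed $O(M^2 N^2 \log(\ell \cdot |S| \cdot \delta^{-1}))$ sample complexity. The enumeration of candidate distributions does not affect the sample count (only the runtime, which is not part of this lemma's statement), so no net size needs to be tracked through the query bound.
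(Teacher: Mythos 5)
Your proof is essentially identical to the paper's: estimate each $\BE_{\tilde{x}: x\sim\mathcal{D}}[g_k(\tilde{x},z)]$ by the empirical mean over $\Theta(M^2N^2\log(\ell|S|\delta^{-1}))$ traces, union-bound over all $(z,k)$ pairs, and then output any $\ell$-sparse distribution whose theoretical means agree with the empirical ones to within the threshold; correctness follows from the triangle inequality exactly as you wrote. One small caveat: the paper does not discretize at all (it just says ``output any distribution $\mathcal{D}_0$'' that passes the test, and existence follows from $\mathcal{D}_0 = \mathcal{D}$), which is the right abstraction since the lemma counts only queries; your discretization is fine in spirit but the stated grid fineness ($d_{\text{TV}}$ within $\eps/2$) is not enough to guarantee that the grid approximation $\tilde{\mathcal{D}}$ of the true $\mathcal{D}$ actually passes the $\frac{1}{3M}$ acceptance test, since $|\BE_{\tilde{\mathcal{D}}}[g_k] - \BE_{\mathcal{D}}[g_k]|$ can be as large as $2N\cdot d_{\text{TV}}(\mathcal{D},\tilde{\mathcal{D}})$; you would need spacing on the order of $1/(MN)$ rather than $\eps$, or simply drop the discretization as the paper does.
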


\begin{proof}
    For any fixed $s \in S$ if $x$ were drawn from from some distribution $\mathcal{D},$ the empirical mean of $\Theta\left(M^2 \cdot N^2 \cdot \log (\ell |S| \delta^{-1})\right)$ samples of $g_k(\tilde{x}, z)$ would be at most $\frac{1}{4M}$ away from $\BE_{\tilde{x}: x \sim \mathcal{D}}[g_k(\tilde{x}, z)]$ in magnitude with probability at least $1 - \frac{\delta}{2 \ell \cdot |S|}$ by the Chernoff bound. Thus, by the union bound, with probability at least $1-\delta,$ we will have an estimate of $\BE_{\tilde{x}:x \sim \mathcal{D}} P(z; x)^k$ within an additive error of $\frac{1}{4M}$ for all $z \in S$ and all $0 \le k \le 2 \ell-1$.
    
    Now, to recover $\mathcal{D},$ we simply output any distribution $\mathcal{D}_0$ such that $\BE_{\tilde{x}: x \sim \mathcal{D}_0}[g_k(\tilde{x}, z)] = \BE_{x \sim \mathcal{D}_0} P(z; x)^k$ is within a $\frac{1}{4M}$ additive factor from our empirical mean of $g_k(\tilde{x}, z)$ for all $z \in S$ and all $0 \le k \le 2 \ell-1$. Note that $\mathcal{D}_0 = \mathcal{D}$ works, so there exists a solution. However, if we output some $\mathcal{D}_0,$ note that $d_{TV}(\mathcal{D}, \mathcal{D}_0) \le \eps,$ or else there is some $z \in S,$ $0 \le k \le 2\ell-1$ such that $\left|\BE_{x \sim \mathcal{D}_0} P(z; x)^k - \BE_{x \sim \mathcal{D}} P(z; x)^k\right| \ge \frac{1}{M},$ which would mean that $\BE_{x \sim \mathcal{D}_0} P(z; x)^k$ differs from the empirical mean of $g_k(\tilde{x}, z)$ by at least $\frac{3}{4M}$ for some $z \in S, 0, \le k \le 2\ell-1.$ Thus, with probability at least $1-\delta$, we output $\mathcal{D}_0$ such that $d_{TV}(\mathcal{D}, \mathcal{D}_0) \le \eps.$
\end{proof}

We can now finish the proof of Theorem \ref{main}.

\begin{proof}[Proof of Theorem \ref{main}]
    Let $\mathcal{D}, \mathcal{D}'$ be distributions such that $d_{TV}(\mathcal{D}, \mathcal{D}') \ge \eps.$ By Theorem \ref{Distinguish1} and the modification of Proposition \ref{Modification}, for any integer $L \ge 1,$ there exists a universal constant $C$ and $0 \le k \le 2 \ell-1, z \in \BC$ such that $|z| = 1$, $|\arg z| \le \frac{2 \pi}{L}$, $\arg z$ is an integer multiple of $n^{-C \cdot L \cdot \ell^2},$ and
\[|\BE_{\tilde{x}: x \sim \mathcal{D}}[g_k(\tilde{x}, z)] - \BE_{\tilde{x}: x \sim \mathcal{D}'}[g_k(\tilde{x}, z)]| = \left|\BE_{x \sim \mathcal{D}} \left[P(z; x)^k\right] - \BE_{x \sim \mathcal{D}'} \left[P(z; x)^k\right]\right| \ge \eps \cdot n^{-C \cdot L \cdot \ell^2} := M^{-1}.\]

    Now, our set $S$ will be all $z$ such that $|z| = 1, |\arg z| \le \frac{2 \pi}{L}$, and $\arg z$ is an integer multiple of $n^{-C \cdot L \cdot \ell^2}$. Note that $|S| = n^{O(L \cdot \ell^2)}$, and for all $\tilde{x}$, $|g_k(\tilde{x}, z)| \le (p^{-1} \ell n)^{O(\ell)} \cdot e^{O(\ell^2 n/(p^2 L^2))} := N$ by Lemma \ref{CreatingG}. We will choose $L = \left\lfloor\left(\frac{n}{\log n \cdot p^2}\right)^{1/3} \right\rfloor,$ so that $L \cdot \ell^2 \cdot \log n = \Theta\left(\frac{\ell^2 n}{p^2 L^2}\right) = \Theta\left(n^{1/3} \cdot (\log n)^{2/3} \cdot \ell^{2} \cdot p^{-2/3}\right).$ Note that since $p < 1,$ we have that $L \ge 1.$ Therefore, the number of queries, by Lemma \ref{Algorithm}, is at most
\[\eps^{-2} \cdot \left(p^{-1} \ell n\right)^{\Theta(\ell)} \cdot e^{\Theta(n^{1/3} (\log n)^{2/3} \ell^{2} p^{-2/3})} = \eps^{-2} \cdot \exp\left(\Theta(n^{1/3} (\log n)^{2/3} \ell^{2} p^{-2/3})\right).\]
    This completes the proof of Equation \eqref{MainEq1} of Theorem \ref{main}.
    
    Next, by Theorem \ref{Distinguish1} and the modification of Proposition \ref{Modification}, there exist $0 \le k \le 2\ell-1$ and $z \in \BC$ with $\left|z - (1 - \frac{p}{2\ell})\right| \le \frac{p}{2 \ell}$, $\Re z, \Im z$ integer multiples of $\exp\left(-C\left(\ell^{7/3} p^{-1/3} n^{1/3} (\log n)^{2/3} + \ell^3 p^{-1} \log n \right)\right)$, and
\[\left|\BE_{\tilde{x}: x \sim \mathcal{D}} \left[g_k(\tilde{x}, z)\right] - \BE_{\tilde{x}: x \sim \mathcal{D}'} \left[g_k(\tilde{x}, z)\right]\right| \ge \eps \cdot \exp\left(-C\left(\ell^{7/3} p^{-1/3} n^{1/3} (\log n)^{2/3} + \ell^3 p^{-1} \log n \right)\right) := M^{-1}.\]
    Note that the corresponding set $S$ has size $\exp\left(C\left(\ell^{7/3} p^{-1/3} n^{1/3} (\log n)^{2/3} + \ell^3 p^{-1} \log n \right)\right)$. Now, by Lemma \ref{CreatingG}, for all $0 \le k \le 2\ell-1$ and all $\tilde{x},$ $|g_k(\tilde{x}, z)| \le (p^{-1} \ell n)^{\Theta(\ell)} := N.$ Thus, the number of queries, by Lemma \ref{Algorithm}, is at most
\[\eps^{-2} \cdot \exp\left(\Theta\left(\ell^{7/3} p^{-1/3} n^{1/3} (\log n)^{2/3} + \ell^3 p^{-1} \log n \right)\right) \cdot \exp\left(\Theta\left(\ell (\log p^{-1} + \log \ell + \log n)\right)\right)\]
\[= \eps^{-2} \cdot \exp\left(\Theta\left(\ell^{7/3} p^{-1/3} n^{1/3} (\log n)^{2/3} + \ell^3 p^{-1} \log n \right)\right).\]
    This completes the proof of Equation \eqref{MainEq2} of Theorem \ref{main}.
\end{proof}

\section{Faster Algorithm} \label{AlgSection}

In this section, we prove Theorem \ref{MainAlg}. Recall that our goal is to determine an unknown $\ell$-sparse distribution $\mathcal{D}$ that equals some $x^{(i)} \in \{0, 1\}^n$ with probability $a_i$ for all $1 \le i \le \ell.$ Like in Section \ref{PopulationRecoveryBounds}, we will choose a complex number $z$ and try to determine $P(z; x^{(i)}),$ where we recall that for $x \in \{0, 1\}^n,$ $P(z; x) = \sum_{i = 1}^{n} x_i z^i.$

To do this, we will first find good estimates for the symmetric polynomials of $P(z; x^{(i)}).$ First, for a distribution $\mathcal{D}$ with support size exactly $\ell$, and for any $1 \le k \le \ell,$ define 
\[\sigma_k(z; \mathcal{D}) := \sum\limits_{1 \le i_1 < \cdots < i_k \le \ell} P(z; x^{(i_1)}) \cdots P(z; x^{(i_k)}).\]
In other words, $\sigma_k(z; \mathcal{D})$ is the $k$th elementary symmetric polynomial of $P(z; x^{(1)}), \dots, P(z; x^{(\ell)}).$ Also, we will use $P_i$ to mean the polynomial $P(z; x^{(i)})$, and we define $\sigma_k(\mathcal{D})$ to be the polynomial $Q$ such that $Q(z) = \sigma_k(z; \mathcal{D})$, so $\sigma_k(\mathcal{D})$ is the $k$th symmetric polynomial of $P_1, \dots, P_\ell$.

\begin{remark}
    Note that $\sigma_k(z; \mathcal{D})$ and $\sigma_k(\mathcal{D})$ do not depend on the mixture weights of $\mathcal{D},$ but just depend on the elements $x^{(1)}, \dots, x^{(\ell)}$ in the support of $\mathcal{D}.$ 
\end{remark}

In Subsection \ref{SymmetricPoly1}, we will provide good estimates for $\sigma_k(z; \mathcal{D})$ for certain values of $z.$ In Subsection \ref{SymmetricPoly2}, we will use these estimates to actually determine the polynomials $\sigma_k(\mathcal{D}).$ Finally, in Subsection \ref{FinishAlgorithm}, we determine the strings $x^{(i)}$, and then we determine the probabilities $a_i$ of each $x^{(i)}$ being drawn from $\mathcal{D}$, and also deal with the case where $\mathcal{D}$ may have support size less than $\ell$.

\subsection{Estimates for the Symmetric Polynomials} \label{SymmetricPoly1}

In Section \ref{PopulationRecoveryBounds}, we were able to get estimates for $\sum a_i P(z; x^{(i)})^k$ for all $0 \le k \le 2\ell-1.$ First, assuming we know these values exactly, we show how to determine $\sigma_k(z; \mathcal{D})$ for all $1 \le k \le \ell.$ For simplicity, we assume the distribution is exactly $\ell$-sparse, i.e. $\mathcal{D}$ has support size exactly $\ell$, in this subsection.

The following proposition shows how to reconstruct elementary symmetric polynomials of some variables $u_1, \dots, u_{\ell}$ given weighted sums of $u_i^k$ for all $1 \le k \le 2 \ell - 1.$ We note this method is the same as Prony's method, commonly used in signal processing and other areas, though we include the proof here for completeness. The main challenge, however, is showing this method is sufficiently \emph{robust} to error in our setting.

\begin{proposition} \label{Recursion}
    Let $a_1, \dots, a_\ell \in \BR^+$ be positive reals and let $u_1, \dots, u_\ell \in \BC$ be distinct complex numbers. Let $b_k := \sum_{i = 1}^{\ell} a_i u_i^k$ for $0 \le k \le 2 \ell -1.$ Furthermore, let $r_k := \sigma_k(u_1, \dots, u_\ell)$ be the $k$th symmetric polynomial of $u_1, \dots, u_\ell,$ multiplied by $(-1)^{k-1},$ i.e.
\[r_k = (-1)^{k-1} \sum\limits_{1 \le i_1 < \dots < i_k \le \ell} u_{i_1} \cdots u_{i_k}.\]
    Then, for all $0 \le k \le \ell-1$,
\[b_{k + \ell} = \sum\limits_{j = 1}^{\ell} r_j b_{k + \ell-j}.\]
\end{proposition}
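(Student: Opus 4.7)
The plan is to use the characteristic polynomial of the $u_i$'s. Consider the monic polynomial $p(x) := \prod_{i=1}^{\ell} (x - u_i)$, whose roots are exactly $u_1, \dots, u_\ell$. Expanding this product yields
\[
p(x) = x^\ell + \sum_{k=1}^{\ell} (-1)^k \sigma_k(u_1,\dots,u_\ell) \, x^{\ell-k},
\]
where $\sigma_k$ denotes the usual (unsigned) $k$th elementary symmetric polynomial. Using the definition $r_k = (-1)^{k-1} \sigma_k(u_1,\dots,u_\ell)$, the coefficient $(-1)^k \sigma_k$ equals $-r_k$, so
\[
p(x) = x^\ell - \sum_{j=1}^{\ell} r_j \, x^{\ell - j}.
\]

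Since each $u_i$ is a root of $p$, I get $u_i^\ell = \sum_{j=1}^{\ell} r_j \, u_i^{\ell - j}$ for every $i$. For any nonnegative integer $k$, multiplying both sides by $a_i u_i^k$ and summing over $i = 1, \dots, \ell$ yields
\[
\sum_{i=1}^{\ell} a_i u_i^{k+\ell} = \sum_{j=1}^{\ell} r_j \sum_{i=1}^{\ell} a_i u_i^{k+\ell-j},
\]
which is exactly $b_{k+\ell} = \sum_{j=1}^{\ell} r_j \, b_{k+\ell - j}$, as desired.

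There is essentially no obstacle here — the whole argument is bookkeeping, and the only potentially confusing point is tracking the sign convention, since the $r_k$'s are signed symmetric polynomials rather than the classical $\sigma_k$'s. The restriction $0 \le k \le \ell - 1$ in the statement is simply the range on which this identity will later be used via Prony's method (to set up a linear system whose right-hand side involves only the known $b_0, \dots, b_{2\ell-1}$); the identity itself actually holds for every $k \ge 0$. The fact that the $a_i$'s are positive reals and the $u_i$'s distinct is not used in this lemma, but will be essential later to ensure the associated Hankel-type matrix is invertible so that the $r_j$'s can actually be recovered from the $b_k$'s.
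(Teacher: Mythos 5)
Your proof is correct and is essentially identical to the paper's: both expand $\prod_{i}(x-u_i)$ with the signed symmetric polynomials $r_j$ to obtain $u_i^\ell = \sum_{j=1}^{\ell} r_j u_i^{\ell-j}$, then multiply by $a_i u_i^k$ and sum over $i$. Your closing remarks about the range of $k$ and the roles of positivity and distinctness are also accurate.
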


\begin{proof}
    Note that for all $i$, $u_i$ is a root of $\prod_{j = 1}^{\ell} (x-u_j) = 0$, but $\prod (x-u_j)$ can be written as $x^\ell - r_1 x^{\ell-1} - r_2 x^{\ell-2} - \cdots - r_\ell.$ Therefore, we have that
\[u_i^\ell = \sum\limits_{j = 1}^{\ell} r_j u_i^{\ell-j}.\]
    Thus,
\[b_{k+\ell} = \sum\limits_{i = 1}^{\ell} a_i u_i^{k} u_i^\ell = \sum\limits_{i = 1}^{\ell} a_i u_i^{k} \left(\sum\limits_{j = 1}^{\ell} r_j u_i^{\ell-j}\right)\]
    and by swapping the sums, we get that this equals
\[\sum\limits_{j = 1}^{\ell} r_j \left(\sum\limits_{i = 1}^{\ell} a_i u_i^{k+\ell-j}\right) = \sum\limits_{j = 1}^{\ell} r_j b_{k+\ell-j}. \qedhere\]
\end{proof}

Note that the above proposition can be rewritten as the following matrix identity:

\[\left(\begin{matrix} b_0 & b_1 & \cdots & b_{\ell-1} \\ b_1 & b_2 & \cdots & b_{\ell} \\ \vdots & \vdots & \ddots & \vdots \\ b_{\ell-1} & b_\ell & \cdots & b_{2\ell-2} \end{matrix}\right) \cdot \left(\begin{matrix} r_\ell \\ r_{\ell-1} \\ \vdots \\ r_1 \end{matrix}\right) = \left(\begin{matrix} b_\ell \\ b_{\ell+1} \\ \vdots \\ b_{2\ell-1} \end{matrix}\right)\]

Thus, by treating $u_i$ as $P(z; x^{(i)}),$ we can determine the value of $r_i$ and therefore the value of $\sigma_i(z; \mathcal{D}).$ However, we are not actually given the $b_i$'s exactly, but with sufficiently many samples can get good estimates $\tilde{b}_i$ of each $b_i$. We will essentially show that determining the $r_i$'s from the $b_i$'s is \emph{robust}, meaning that we will get good estimates for the $r_i$'s assuming we have sufficiently good estimates for the $b_i$'s. First, we will need the following simple result.

\begin{proposition} \label{EasyMatrix}
    Let $\{a_i\}, \{u_i\}, \{b_k\}$ be as in Proposition \ref{Recursion}. Let $B$ be the matrix with $B_{ij} = b_{i+j-2},$ as above, and let $v \in \BC^\ell$ be the vector with $v_i = b_{\ell-1+i}.$ Also, let $w \in \BC^\ell$ be the vector with $w_i = r_{\ell+1-i}$. As noted above, $B \cdot w = v$.
    
    Then, $B = V^T \cdot A \cdot V$, where $V \in \BC^{\ell \times \ell}$ is the Vandermonde matrix $V_{i, k} = u_i^{k-1},$ and $A = \text{diag}(a_i)$ is the diagonal matrix in $\BR^{\ell \times \ell}$ with $a_i$ in the $i$th row, $i$th column. Also, $v = V^T \cdot U^\ell \cdot a,$ where $U = \text{diag}(u_i)$ is the diagonal matrix in $\BC^{\ell \times \ell}$ with $u_i$ in the $i$th row, $i$th column, and $a$ is the column vector with $i$th entry $a_i$.
\end{proposition}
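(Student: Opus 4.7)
The plan is to verify both matrix identities by direct entry-wise computation, using only the definition $b_k = \sum_{i=1}^\ell a_i u_i^k$ from Proposition \ref{Recursion}.

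For the factorization $B = V^T A V$, I would compute the $(i,j)$ entry of the right-hand side. Since $A$ is diagonal with $A_{kk} = a_k$ and $V_{k, j} = u_k^{j-1}$, one has
\[
(V^T A V)_{ij} = \sum_{k=1}^\ell (V^T)_{ik} A_{kk} V_{kj} = \sum_{k=1}^\ell u_k^{i-1} \cdot a_k \cdot u_k^{j-1} = \sum_{k=1}^\ell a_k u_k^{i+j-2} = b_{i+j-2},
\]
which matches $B_{ij}$ by definition.

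For the identity $v = V^T U^\ell a$, I would similarly compute the $i$th entry. Since $U^\ell$ is diagonal with entries $u_k^\ell$, one has
\[
(V^T U^\ell a)_i = \sum_{k=1}^\ell (V^T)_{ik} \cdot u_k^\ell \cdot a_k = \sum_{k=1}^\ell u_k^{i-1} u_k^\ell a_k = \sum_{k=1}^\ell a_k u_k^{\ell-1+i} = b_{\ell-1+i} = v_i,
\]
completing the verification.

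There is no real obstacle here; this proposition is purely a bookkeeping statement that rewrites $B$ and $v$ in a factored form that will be useful in subsequent arguments (presumably for analyzing singular values of $B$ via those of $V$ and $A$, and for bounding perturbations in $v$). I would simply present the two one-line computations above and conclude.
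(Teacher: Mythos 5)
Your proof is correct and is essentially the same as the paper's: the paper writes out the full matrices and says the identity "can essentially be seen by writing out the matrices," while you verify the same identities entry-by-entry via the definition of $b_k$.
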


\begin{remark}
    Note that $V^T$ is the transpose and not necessarily the conjugate transpose of $V$. 
\end{remark}

\begin{proof}
    The proof can essentially be seen by writing out the matrices. By definition of $b_k,$ we have
\[\left(\begin{matrix} b_0 & b_1 & \cdots & b_{\ell-1} \\ b_1 & b_2 & \cdots & b_{\ell} \\ \vdots & \vdots & \ddots & \vdots \\ b_{\ell-1} & b_\ell & \cdots & b_{2\ell-2} \end{matrix}\right) = \left(\begin{matrix} 1 & 1 & \cdots & 1 \\ u_1 & u_2 & \cdots & u_\ell \\ \vdots & \vdots & \ddots & \vdots \\ u_1^{\ell-1} & u_2^{\ell-1} & \cdots & u_\ell^{\ell-1} \end{matrix}\right) \cdot \left(\begin{matrix} a_1 & 0 & \cdots & 0 \\ 0 & a_2 & \cdots & 0 \\ \vdots & \vdots & \ddots & \vdots \\ 0 & 0 & \cdots & a_\ell \end{matrix}\right) \cdot \left(\begin{matrix} 1 & u_1 & \cdots & u_1^{\ell-1} \\ 1 & u_2 & \cdots & u_2^{\ell-1} \\ \vdots & \vdots & \ddots & \vdots \\ 1 & u_\ell & \cdots & u_\ell^{\ell-1} \end{matrix}\right),\]
    but the left hand side is $B$ and the right hand side is $V^T A V.$ Likewise, we have
\[\left(\begin{matrix} b_\ell \\ b_{\ell+1} \\ \vdots \\ b_{2\ell-1}\end{matrix}\right) = \left(\begin{matrix} 1 & 1 & \cdots & 1 \\ u_1 & u_2 & \cdots & u_\ell \\ \vdots & \vdots & \ddots & \vdots \\ u_1^{\ell-1} & u_2^{\ell-1} & \cdots & u_\ell^{\ell-1} \end{matrix}\right) \cdot \left(\begin{matrix} u_1^\ell & 0 & \cdots & 0 \\ 0 & u_2^\ell & \cdots & 0 \\ \vdots & \vdots & \ddots & \vdots \\ 0 & 0 & \cdots & u_\ell^\ell \end{matrix}\right) \cdot \left(\begin{matrix} a_1 \\ a_2 \\ \vdots \\ a_\ell \end{matrix}\right),\]
    but the left hand side is $v$ and the right hand side is $V^T \cdot U^\ell \cdot a.$
\end{proof}

We will only be able to obtain a good estimate of $r_1, \dots, r_\ell$ given our estimates $\tilde{b}_0, \dots, \tilde{b}_{2 \ell-1}$ assuming that the Vandermonde matrix $V$ is sufficiently ``nice'', i.e. we will need all singular values of both $V$ and $V^T A V$ to not be too small. One part of our algorithm, explained in the following lemma, will thus be to determine if $V$ is nice when we are only given the estimates $\tilde{b}_0, \dots, \tilde{b}_{2\ell-1}.$

\begin{lemma} \label{Alg1}
    Suppose $V \in \BC^{\ell \times \ell}$ is some unknown matrix, $A \in \BR^{\ell \times \ell}$ is an unknown diagonal matrix with all diagonal entries positive reals, $E \in \BC^{\ell \times \ell}$ is some unknown ``error'' matrix, and we are given $\tilde{B} = V^T A V + E$. Let $\alpha$ be known such that $\alpha \le \min A_{ii} \le 2 \alpha,$ and let $\beta$ be known such that $\beta \le \det A \le 2 \beta.$ Let $\delta < 1$ be some known parameter, and suppose we also know that $\|E\| \le \frac{\alpha \cdot \delta}{4 \ell}$. Finally, let $\sigma_{min}(V^T A V)$ be the least singular value of $V^T A V.$ Then, there exists a polynomial (in $\ell$) time algorithm that will always return YES if $|\det V| \ge \delta$ and $\sigma_{min}(V^T A V) \ge \alpha \cdot \delta,$ but will always return NO if either $|\det V| < \frac{\delta}{3}$ or $\sigma_{min}(V^T A V) < \frac{\alpha \cdot \delta}{2}.$
\end{lemma}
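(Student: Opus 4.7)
The plan is to compute two scalar quantities from $\tilde B$ -- the smallest singular value $\sigma_{\min}(\tilde B)$ and the modulus of the determinant $|\det \tilde B|$, both computable in $\mathrm{poly}(\ell)$ time via standard linear algebra -- and threshold each of them.

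By Weyl's inequality (Theorem \ref{Weyl}) applied with the two matrices $V^T A V$ and $E$, the singular values of $V^T A V$ and $\tilde B = V^T A V + E$ differ coordinatewise by at most $\|E\| \le \frac{\alpha\delta}{4\ell}$. In particular $\sigma_{\min}(\tilde B)$ agrees with $\sigma_{\min}(V^T A V)$ up to an additive error of $\frac{\alpha\delta}{4\ell} \le \frac{\alpha\delta}{4}$, so thresholding $\sigma_{\min}(\tilde B)$ at $\frac{3\alpha\delta}{4}$ cleanly separates $\sigma_{\min}(V^T A V) \ge \alpha\delta$ from $\sigma_{\min}(V^T A V) < \frac{\alpha\delta}{2}$. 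The first step of the algorithm is therefore: return NO whenever $\sigma_{\min}(\tilde B) < \frac{3\alpha\delta}{4}$.

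If this first step does not return NO, then every singular value $\sigma_i$ of $V^T A V$ satisfies $\sigma_i \ge \frac{\alpha\delta}{2} \ge 2\ell\|E\|$, so $\tilde\sigma_i / \sigma_i \in [1 - \tfrac{1}{2\ell},\, 1 + \tfrac{1}{2\ell}]$ for each $i$. Taking the product over all $\ell$ singular values,
\[
\frac{|\det \tilde B|}{|\det(V^T A V)|} \;=\; \prod_{i=1}^{\ell} \frac{\tilde\sigma_i}{\sigma_i} \;\in\; [\tfrac{1}{2},\, \sqrt{e}],
\]
using Bernoulli's inequality on the lower side and $(1 + \tfrac{1}{2\ell})^{2\ell} \le e$ on the upper side. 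Since $|\det(V^T A V)| = |\det V|^2 \det A$ and $\beta \le \det A \le 2\beta$, the YES hypothesis $|\det V| \ge \delta$ yields $|\det \tilde B| \ge \tfrac{1}{2}\delta^2\beta$, while in the only remaining NO sub-case, $|\det V| < \delta/3$ (the other NO sub-case having already been handled by step 1), we get $|\det \tilde B| \le \sqrt{e}\cdot \tfrac{2}{9}\delta^2\beta < 0.37\,\delta^2\beta$. The second step of the algorithm is therefore to compute $|\det \tilde B|$ and return YES iff it exceeds a threshold strictly between these two values, e.g.\ $\tfrac{2}{5}\delta^2\beta$; otherwise return NO.

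The main obstacle I foresee is exactly the second step: Weyl's inequality gives only additive control over singular values, so a priori there is no reason that $|\det \tilde B|$ should be a faithful proxy for $|\det(V^T A V)|$. The fix is to use step 1 as a gate, so that by the time we look at the determinant we already know every singular value of $V^T A V$ exceeds $\frac{\alpha\delta}{2} = 2\ell\|E\|$; only under this lower bound can the additive Weyl estimate be turned into a multiplicative estimate on each $\tilde\sigma_i/\sigma_i$, which in turn controls the ratio of the products, i.e.\ of the determinants.
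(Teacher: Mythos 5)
Your proposal is correct and follows essentially the same two-step strategy as the paper's own proof: first threshold $\sigma_{\min}(\tilde B)$ at $\tfrac{3}{4}\alpha\delta$ using Weyl's inequality, then use the resulting lower bound $\sigma_{\min}(V^TAV) \ge \tfrac{\alpha\delta}{2}$ to convert Weyl's additive control into a multiplicative bound on $|\det\tilde B|/|\det(V^TAV)|$ and threshold $|\det\tilde B|$. The only cosmetic difference is in the constants (the paper bounds the determinant ratio in $[\tfrac12,2]$ and thresholds at $\tfrac{\beta\delta^2}{2}$, while you use $[\tfrac12,\sqrt e]$ and $\tfrac{2}{5}\delta^2\beta$); both are valid.
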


\begin{proof}
    First, we compute the least singular value of $\tilde{B} = V^T A V + E$ and check if it is at least $\frac{3}{4} \alpha \delta.$ If the smallest singular value of $V^T A V + E$ is less than $\frac{3}{4} \alpha \delta,$ then by Theorem \ref{Weyl}, the smallest singular value of $V^T A V$ is less than $\alpha \delta,$ so our algorithm will return NO. Otherwise, we know that the smallest singular value of $V^T A V$ is at least $\frac{\alpha \cdot \delta}{2}.$
    
    Now, let $\sigma_1 \ge \cdots \ge \sigma_\ell \ge \frac{\alpha \cdot \delta}{2}$ be the singular values of $V^T A V$ and let $\tau_1 \ge \cdots \ge \tau_\ell$ be the singular values of $V^T A V + E.$ Then, by Theorem \ref{Weyl}, $|\sigma_i-\tau_i| \le \frac{\alpha \cdot \delta}{4 \ell} \le \frac{\sigma_i}{2 \ell},$ so $1 - \frac{1}{2 \ell} \le \frac{\sigma_i}{\tau_i} \le 1 + \frac{1}{2 \ell}.$ Therefore,
\[\frac{1}{2} \le \left(1 - \frac{1}{2 \ell}\right)^\ell \le \frac{\prod \sigma_i}{\prod \tau_i} = \frac{|\det (V^T A V)|}{|\det (V^T A V + E)|} \le \left(1 + \frac{1}{2 \ell}\right)^\ell \le 2.\]
    Now, if our algorithm hasn't returned NO already, we compute the determinant of $\tilde{B} = V^T A V + E$ and check if its magnitude is at least $\frac{\beta \cdot \delta^2}{2}.$ If not, then we know $|\det(V^T A V)| < \beta \cdot \delta^2,$ but since $\det(A) \ge \beta,$ this means that $|\det(V)| < \delta$, so we return NO. Otherwise, $|\det (V^T A V)| \ge \frac{\beta \cdot \delta^2}{4},$ but since $\det A \le 2 \beta,$ we have that $|\det V| \ge \frac{\delta}{\sqrt{8}} \ge \frac{\delta}{3}.$ Since we already know that $\sigma_{min}(V^T A V) \ge \frac{\alpha \cdot \delta}{2},$ we can return YES.
\end{proof}

Now, assuming $\sigma_{min}(V)$ and $\sigma_{min}(V^T A V)$ are not too small, we show how to get a good estimate for $w = B^{-1} v.$

\begin{lemma} \label{UsingWoodbury}
    Let $a_1, \dots, a_\ell \in \BR^+$ be positive reals that add up to $1$, and $\{u_i\}, \{b_k\}, \{r_k\}$ be defined as in Proposition \ref{Recursion}. Let the matrices $A, B, U, V$ and vectors $a, v, w$ be defined as in Proposition \ref{EasyMatrix}.
    
    Then, $B$ is invertible. Now, let $\alpha$ be a known parameter such that $\alpha \le \min a_i \le 2 \alpha.$ Also, set $0 < \gamma, \eta < 1$ to be known parameters such that $\sigma_{min}(V) \ge \gamma \cdot (\max |u_i|)^\ell$ and $\sigma_{min}(V^T A V) \ge \alpha \cdot \gamma$. 
    Suppose that $\tilde{b}_0, \dots, \tilde{b}_{2 \ell-1}$ are estimates such that $|\tilde{b}_i-b_i| \le \frac{\alpha \cdot \gamma^2 \cdot \eta}{4\ell^2}$ for all $i$, and define $\tilde{B}, \tilde{v}$ analogously to $B, v.$ Then, $\tilde{B}$ is invertible, and if we define $\tilde{w} := \tilde{B}^{-1} \tilde{v}$, then $\|\tilde{w}-w\|_2 \le \eta$.
\end{lemma}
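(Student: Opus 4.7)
The plan is to set $E := \tilde B - B$, regard $\tilde B$ as a small perturbation of $B$, and propagate the error carefully from $(B, v)$ to $(\tilde B, \tilde v)$. Since every entry of $E$ is bounded in magnitude by $\delta := \frac{\alpha\gamma^2\eta}{4\ell^2}$, the operator norm satisfies $\|E\| \le \|E\|_F \le \ell\delta = \frac{\alpha\gamma^2\eta}{4\ell}$, and similarly $\|\tilde v - v\|_2 \le \sqrt{\ell}\,\delta$. Invertibility of $B$ is immediate from $\det V \neq 0$ (distinct $u_i$'s) and $\det A > 0$. Weyl's inequality (Theorem~\ref{Weyl}) combined with $\sigma_{\min}(B) \ge \alpha\gamma$ yields $\sigma_{\min}(\tilde B) \ge \alpha\gamma - \|E\| \ge \alpha\gamma/2$, so $\tilde B$ is invertible with $\|\tilde B^{-1}\| \le 2/(\alpha\gamma)$. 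Subtracting $\tilde B w = (B+E)w$ from $\tilde B \tilde w = \tilde v$ (equivalent to applying Sherman--Morrison--Woodbury, Theorem~\ref{Woodbury}, to $\tilde B^{-1}$) gives the identity
\[
\tilde w - w \;=\; \tilde B^{-1}\bigl((\tilde v - v) - Ew\bigr),
\]
hence the skeleton bound $\|\tilde w - w\|_2 \le \|\tilde B^{-1}\|\bigl(\|\tilde v - v\|_2 + \|E\|\,\|w\|_2\bigr)$.

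The key technical challenge is bounding $\|w\|_2$ by a quantity independent of $M := \max_i|u_i|$. Naive estimates such as $\|w\|_2 \le \|B^{-1}\|\,\|v\|_2$ or the elementary-symmetric bound $|r_k| \le \binom{\ell}{k}M^k$ pick up a factor of $M^\ell$ (or worse), which is fatal since $M$ is unconstrained. The cancellation I would exploit is the algebraic identity
\[
w \;=\; B^{-1}v \;=\; V^{-1}A^{-1}V^{-T}\cdot V^T U^\ell a \;=\; V^{-1}\bigl(A^{-1}U^\ell a\bigr),
\]
where $A^{-1}U^\ell a$ is just the vector $(u_1^\ell, \dots, u_\ell^\ell)^T$, so $\|A^{-1}U^\ell a\|_2 \le \sqrt\ell\,M^\ell$. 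Combined with the hypothesis $\sigma_{\min}(V) \ge \gamma M^\ell$, i.e.\ $\|V^{-1}\| \le 1/(\gamma M^\ell)$, the $M^\ell$ factors cancel exactly to give $\|w\|_2 \le \sqrt\ell/\gamma$.

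Plugging back, the first term in the skeleton bound is at most $\frac{2}{\alpha\gamma}\cdot\sqrt\ell\,\delta = \frac{\gamma\eta}{2\ell^{3/2}} \le \eta/2$ and the second is at most $\frac{2}{\alpha\gamma}\cdot\ell\delta\cdot\frac{\sqrt\ell}{\gamma} = \frac{\eta}{2\sqrt\ell} \le \eta/2$, giving the desired $\|\tilde w - w\|_2 \le \eta$. The main obstacle is thus the $M$-independent control of $\|w\|_2$: the two hypotheses on $V$ (namely $\sigma_{\min}(V) \ge \gamma M^\ell$ and $\sigma_{\min}(V^T A V) \ge \alpha\gamma$) are designed so that the $M^\ell$ normalization inside $V^{-1}$ is exactly absorbed by the growth of the $u_i^\ell$, yielding an $M$-free bound on $w$.
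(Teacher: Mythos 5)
Your proof is correct and shares the paper's essential insight, but organizes the perturbation argument differently in a way worth noting. The paper decomposes $\tilde w - w$ as $\tilde B^{-1}(\tilde v - v) + (\tilde B^{-1} - B^{-1})v$ and then expands $(\tilde B^{-1} - B^{-1})$ via the Sherman--Morrison--Woodbury identity, which forces an auxiliary bound on $\|(I + (V^TAV)^{-1}E)^{-1}\|$; you instead subtract $\tilde B w = (B+E)w = v + Ew$ from $\tilde B \tilde w = \tilde v$ to get the cleaner identity $\tilde B(\tilde w - w) = (\tilde v - v) - Ew$, which needs only $\|\tilde B^{-1}\| \le 2/(\alpha\gamma)$ and sidesteps Woodbury entirely (the parenthetical ``equivalent to Woodbury'' is unnecessary --- your identity is elementary algebra). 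Both proofs hinge on the same crucial cancellation: $w = V^{-1}U^\ell \mathbf{1}$ so that $\|V^{-1}\| \le 1/(\gamma M^\ell)$ absorbs the $M^\ell$ growth of $U^\ell$, yielding the $M$-free bound $\|w\|_2 \le \sqrt\ell/\gamma$. You isolate this as an explicit lemma-style step (``bound $\|w\|_2$''), whereas the paper reaches the same quantity buried inside the Woodbury factorization; your version makes the cancellation mechanism more visible. A minor point: you correctly observe $\|\tilde v - v\|_2 \le \sqrt\ell\,\delta$, which is slightly tighter than the paper's $\ell\delta$, though it makes no difference to the final bound.
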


\begin{proof}
    First, since the $u_i$'s are all distinct and $a_i$'s are all positive, this means $V, A$ are both invertible, so $V^T A V = B$ is invertible also.
    
    Now, let $E = \tilde{B}-B$, and let $e = \tilde{v}-v$. Then, $\tilde{B} = V^T A V + E$ and $\tilde{v} = V^T U^\ell a + e$ by Proposition \ref{EasyMatrix}. Since all entries in $E$ and all entries in $e$ are at most $\frac{\alpha \cdot \gamma^2 \cdot \eta}{4 \ell^2}$ in magnitude, we have that $\|E\| \le \|E\|_F \le \frac{\alpha \cdot \gamma^2 \cdot \eta}{4 \ell},$ and $\|e\|_2 \le \frac{\alpha \cdot \gamma^2 \cdot \eta}{4 \ell}$ also. Therefore, by Theorem \ref{Weyl}, $\sigma_{min}(V^T A V + E) \ge \alpha \cdot \gamma - \frac{\alpha \cdot \gamma^2 \cdot \eta}{4 \ell} \ge \frac{\alpha \cdot \gamma}{2},$ and thus $\tilde{B}$ is invertible.
    
    Next, note that
\[\|\tilde{w}-w\|_2 = \|\tilde{B}^{-1} \tilde{v} - B^{-1} v\|_2 \le \|\tilde{B}^{-1} (\tilde{v}-v)\|_2 + \|(\tilde{B}^{-1} - B^{-1}) v\|_2 \le \|\tilde{B}^{-1}\| \cdot \|e\|_2 + \|(\tilde{B}^{-1} - B^{-1}) v\|_2.\]
    We can bound $\|\tilde{B}^{-1}\| \cdot \|e\|_2$ since $\|\tilde{B}^{-1}\| = \sigma_{min}(\tilde{B})^{-1} \le \frac{2}{\alpha \cdot \gamma}$ and $\|e\|_2 \le \frac{\alpha \cdot \gamma^2 \cdot \eta}{4 \ell},$ so $\|\tilde{B}^{-1}\| \cdot \|e\|_2 \le \frac{\gamma \cdot \eta}{2 \ell} \le \frac{\eta}{2}.$ To bound $\|(\tilde{B}^{-1} - B^{-1}) v\|_2$, we note that  by Theorem \ref{Woodbury},
\begin{align*}
    (\tilde{B}^{-1}-B^{-1})v &= \left((V^T A V + E)^{-1} - (V^T A V)^{-1}\right) (V^T U^\ell a) \\
    &= -(V^T A V)^{-1} E (I + (V^T A V)^{-1} E)^{-1} (V^T A V)^{-1} V^T U^\ell a.
\end{align*}
    Note that the right part of the last line, $(V^T A V)^{-1} V^T U^\ell a,$ can be expanded as $V^{-1} A^{-1} (V^T)^{-1} V^T U^\ell a = V^{-1} A^{-1} U^\ell a.$ Since $A^{-1}, U$ are diagonal, they commute, so this equals $V^{-1} U^\ell A^{-1} a = V^{-1} U^\ell \textbf{1},$ where $\textbf{1}$ is the $\ell$-dimensional vector of all $1$s. Thus,
\[\|(\tilde{B}^{-1}-B^{-1})v\|_2 \le \|(V^T A V)^{-1}\| \cdot \|E\| \cdot \|(I + (V^T A V)^{-1} E)^{-1}\| \cdot \|V^{-1}\| \cdot (\max |u_i|)^\ell \cdot \sqrt{\ell},\]
    where we used $\|U\| = \max |u_i|$ since $U$ is diagonal. Now, since $\|(V^T A V)^{-1}\| = \sigma_{min}(V^T A V)^{-1} \le \frac{1}{\alpha \cdot \gamma}$ and $\|E\| \le \frac{\alpha \cdot \gamma^2 \cdot \eta}{4 \ell},$ we have that all singular values of $I + (V^T A V)^{-1} E$ are between $1 - \frac{\gamma \cdot \eta}{4 \ell}$ and $1 + \frac{\gamma \cdot \eta}{4 \ell}.$ Thus, $\|(I + (V^T A V)^{-1} E)^{-1}\| \le 2.$ Therefore, 
\[\|(\tilde{B}^{-1}-B^{-1})v\|_2 \le \frac{1}{\alpha \cdot \gamma} \cdot \frac{\alpha \cdot \gamma^2 \cdot \eta}{4 \ell} \cdot 2 \cdot \frac{1}{\gamma \cdot (\max |u_i|)^\ell} \cdot (\max |u_i|)^\ell \cdot \sqrt{\ell} = \frac{\eta}{2 \sqrt{\ell}} \le \frac{\eta}{2}.\]
    Adding the errors gives us $\|\tilde{w}-w\|_2 \le \frac{\eta}{2} + \frac{\eta}{2} \le \eta.$
\end{proof}

Using Lemma \ref{Alg1} and Lemma \ref{UsingWoodbury}, we show how to approximate $r_1, \dots, r_\ell$ using approximations for $b_0, \dots, b_{2\ell-2}.$

\begin{lemma} \label{Alg2}
    Let $L = \left\lfloor\left(\frac{n}{\log n \cdot p^2}\right)^{1/3} \right\rfloor$, and let $z \in \BC$ be a known complex number with $|z| = 1$ and $|\arg z| \le \frac{1}{L}$. Suppose we are given sample access to traces from an unknown $\ell$-sparse distribution $\mathcal{D}$ that equals $x^{(i)} \in \{0, 1\}^n$ with probability $a_i$ for distinct $x^{(1)}, \dots, x^{(\ell)},$ where $a_i$'s are positive reals that add to $1$. Now, let $u_i := P(z; x^{(i)}),$ and let $\{b_k\}, \{r_k\}, A, B, U, V, a, v, w$ be as in Proposition \ref{UsingWoodbury}. Finally, suppose we are also given constants $\alpha, \beta < 1$ such that $\alpha \le \min a_i \le 2 \alpha$ and $\beta \le \prod a_i \le 2 \beta.$
    
    Then, for any fixed constant $C$, there exists an algorithm taking $\alpha^{-2} \cdot \exp\left(O\left(n^{1/3} (\log n)^{2/3} \ell^2 p^{-2/3}\right)\right)$ time and queries that either outputs nothing or, for all $1 \le j \le \ell$, outputs an estimate for $\sigma_{j}(z; \mathcal{D})$ that is correct up to an additive error of $\exp\left(-C n^{1/3} (\log n)^{2/3} \ell^2 p^{-2/3}\right).$ Moreover, if the algorithm returns nothing, then $\prod_{i > j} |P(z; x^{(i)})-P(z; x^{(j)})| \le \exp\left(-C n^{1/3} (\log n)^{2/3} \ell^2 p^{-2/3}\right).$
\end{lemma}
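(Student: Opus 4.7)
The plan is to proceed in three phases: (i) use the unbiased estimators $g_k$ from Lemma \ref{CreatingG} to obtain sample estimates $\tilde{b}_k$ of the power sums $b_k = \BE_{\tilde{x}: x \sim \mathcal{D}}[g_k(\tilde{x}, z)] = \sum_{i = 1}^{\ell} a_i P(z; x^{(i)})^k$ for $0 \le k \le 2\ell-1$; (ii) test, via Lemma \ref{Alg1}, whether the hidden Vandermonde matrix $V$ with $V_{i,k} = P(z; x^{(i)})^{k-1}$ is sufficiently well-conditioned; and (iii) if so, solve the linear system of Proposition \ref{Recursion} using the perturbation analysis of Lemma \ref{UsingWoodbury}, read off approximate values of the $r_j$, and return $\sigma_j(z; \mathcal{D}) = (-1)^{j-1} r_j$.

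For phase (i), Lemma \ref{CreatingG} bounds $|g_k(\tilde{x}, z)| \le N := \exp\!\left(O(n^{1/3} (\log n)^{2/3} \ell^2 p^{-2/3})\right)$ uniformly in $\tilde{x}$ for our choice of $L$ and all $k \le 2\ell-1$, and each $g_k$ is an unbiased estimator of $b_k$. Thus, by a Chernoff/Chebyshev bound for bounded complex variables, $O(N^2 \tau^{-2} \log \ell)$ samples suffice to simultaneously estimate all $b_k$ to accuracy $\tau$ with high probability. We will ultimately choose $\tau = \alpha \cdot \exp\!\left(-\Theta(n^{1/3}(\log n)^{2/3}\ell^2 p^{-2/3})\right)$, so the overall query count becomes $\alpha^{-2} \cdot \exp\!\left(O(n^{1/3}(\log n)^{2/3}\ell^2 p^{-2/3})\right)$, as required.

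For phase (ii), pick $\delta := n^{-O(\ell^2)} \cdot \alpha^{(\ell-1)/2} \cdot \exp\!\left(-2C n^{1/3}(\log n)^{2/3}\ell^2 p^{-2/3}\right)$ and invoke Lemma \ref{Alg1} on $\tilde{B}$ with inputs $\alpha, \beta, \delta$. If it returns \textsc{no}, output nothing. It remains to verify that in this case $\prod_{i > j}|P(z;x^{(i)}) - P(z;x^{(j)})| = |\det V|$ meets the target $\exp(-Cn^{1/3}(\log n)^{2/3}\ell^2 p^{-2/3})$. The sub-case $|\det V| < \delta/3$ gives this directly. For the sub-case $\sigma_{\min}(V^T A V) < \alpha\delta/2$, use
\[|\det V|^2 \cdot \det A = |\det(V^T A V)| \le \sigma_{\min}(V^T A V) \cdot \|V^T A V\|^{\ell-1},\]
together with the crude bounds $\|V^T A V\| \le \|V\|_F^2 \le (\ell n^{\ell-1})^2$ (since $|u_i| = |P(z;x^{(i)})| \le n$) and $\det A \ge \alpha^\ell/2 \ge \beta/2$, to conclude $|\det V|^2 \le (\alpha\delta/\beta) \cdot n^{O(\ell^2)}$; plugging in our choice of $\delta$ yields the target after taking square roots.

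For phase (iii), in the \textsc{yes} case we have $|\det V| \ge \delta/3$ and $\sigma_{\min}(V^T A V) \ge \alpha\delta/2$. Choose $\gamma := \delta / n^{O(\ell^2)}$, small enough that both $\sigma_{\min}(V) \ge |\det V|/\|V\|_F^{\ell-1} \ge \gamma (\max|u_i|)^\ell$ and $\sigma_{\min}(V^T A V) \ge \alpha \gamma$ are satisfied. Set the target accuracy $\eta := \exp(-Cn^{1/3}(\log n)^{2/3}\ell^2 p^{-2/3})$ and take $\tau := \alpha \gamma^2 \eta/(4\ell^2)$, matching the hypothesis of Lemma \ref{UsingWoodbury}; this $\tau$ is still of the form $\alpha \cdot \exp(-\Theta(n^{1/3}(\log n)^{2/3}\ell^2 p^{-2/3}))$ since $\gamma, \eta$ are each only that size up to polynomial factors in $n,\ell$, so the bound on the number of queries from phase (i) is unchanged. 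Lemma \ref{UsingWoodbury} then produces $\tilde{w} := \tilde{B}^{-1}\tilde{v}$ with $\|\tilde w - w\|_2 \le \eta$; since $w_i = r_{\ell+1-i}$ and $\sigma_j(z;\mathcal{D}) = (-1)^{j-1} r_j$, each coordinate of $\tilde w$ gives a valid approximation of the corresponding $\sigma_j(z;\mathcal{D})$. The main obstacle is the coupled choice of $\delta, \gamma, \eta, \tau$ and the auxiliary singular-value bound in the \textsc{no} sub-case of phase (ii); the rest is essentially bookkeeping of the constants entering the $O(\cdot)$ in the exponent.
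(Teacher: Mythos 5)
Your high-level strategy is the same as the paper's: estimate the power sums $b_k$ from the unbiased estimators $g_k$ of Lemma \ref{CreatingG}, test conditioning via Lemma \ref{Alg1}, and then apply Lemma \ref{UsingWoodbury} to recover $w$ and read off $\sigma_j(z;\mathcal{D})$. However, there is a genuine gap in how you calibrate $\delta$ and $\gamma$, and it affects both the ``returns nothing'' guarantee and the claimed sample complexity.

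For the \textsc{no} sub-case $\sigma_{\min}(V^T A V) < \alpha\delta/2$, you route through $|\det(V^T A V)| = |\det V|^2\det A \le \sigma_{\min}(V^T A V)\|V^T A V\|^{\ell-1}$, obtaining $|\det V|^2 \le (\alpha\delta/\beta)\,n^{O(\ell^2)}$ after dividing by $\det A \ge \beta$. (Your intermediate chain $\det A \ge \alpha^\ell/2 \ge \beta/2$ has the second inequality reversed: since $\beta \ge \prod a_i /2 \ge \alpha^\ell/2$, we have $\alpha^\ell/2 \le \beta$, not $\ge$.) To conclude $|\det V| \le \exp(-Cn^{1/3}(\log n)^{2/3}\ell^2p^{-2/3})$ you need $\delta \lesssim (\beta/\alpha)\,n^{-O(\ell^2)}\exp(-2C\cdots)$, but $\beta/\alpha$ can be as small as roughly $\alpha^{\ell-1}/2$, whereas your choice $\delta = n^{-O(\ell^2)}\alpha^{(\ell-1)/2}\exp(-2C\cdots)$ has $\alpha^{(\ell-1)/2}$, which is \emph{larger} than $\alpha^{\ell-1}$ for $\alpha<1$ and $\ell>1$. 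So the argument does not close in the regime where $\prod a_i$ is near its minimum; you would need $\alpha^{\ell-1}$ in place of $\alpha^{(\ell-1)/2}$, or use the known value of $\beta$ directly in $\delta$.

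More importantly, introducing any power of $\alpha$ into $\delta$ propagates into $\gamma = \delta/n^{O(\ell^2)}$ and then into $\tau = \alpha\gamma^2\eta/(4\ell^2)$, which becomes $\sim \alpha\cdot\alpha^{\ell-1}\cdot\exp(-\Theta(\cdots)) = \alpha^\ell\exp(-\Theta(\cdots))$ rather than $\alpha\exp(-\Theta(\cdots))$. Your remark that $\gamma$ is ``only that size up to polynomial factors in $n,\ell$'' overlooks the $\alpha^{(\ell-1)/2}$ factor, which is a dependence on $\alpha$, not on $n$ or $\ell$. With this $\tau$, the sample count $O(N^2\tau^{-2}\log\ell)$ becomes $\alpha^{-2\ell}\exp(O(\cdots))$, not the required $\alpha^{-2}\exp(O(\cdots))$. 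Both problems are avoided by the paper's much simpler observation that $\sigma_{\min}(V^T A V) \ge \sigma_{\min}(V)^2\sigma_{\min}(A) \ge \alpha\sigma_{\min}(V)^2$: then $\sigma_{\min}(V^T A V) < \alpha\delta$ immediately gives $\sigma_{\min}(V) < \sqrt{\delta}$, so one may take $\delta = \exp(-4Cn^{1/3}(\log n)^{2/3}\ell^2p^{-2/3})$ with no dependence on $\alpha$ or $\beta$ at all, take $\gamma=\delta^2$, and never invoke $\beta$ or the determinant of $A$. Replacing your determinant detour by this singular-value product bound repairs both issues.
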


\begin{proof}
    By Lemma \ref{CreatingG}, from one sample trace $\tilde{x}$ drawn from $x \sim \mathcal{D},$ for all $0 \le k \le 2 \ell-1$ we can create an unbiased estimator $g_k(\tilde{x}, z)$ for $b_k = \sum a_i P(z; x^{(i)})^k$, which is bounded in magnitude by 
\[(p^{-1} \ell n)^{O(\ell)} \cdot \exp\left(O\left(\ell^2 n/(p^2 L^2)\right)\right) = \exp\left(O\left(n^{1/3} (\log n)^{2/3} \ell^2 p^{-2/3}\right)\right).\]
    Moreover, $g_k(\tilde{x}, z)$ can be computed in $n^{O(\ell)}$ time. We take $\alpha^{-2} \exp\left(C' n^{1/3} (\log n)^{2/3} \ell^2 p^{-2/3}\right)$ samples and compute $\tilde{b}_k$ as the sample mean of the $g_k(\tilde{x}, z)$'s over all sampled traces $\tilde{x}$. By a Chernoff bound argument, we have that with probability at least $1 - \exp\left(-\exp\left(\Theta\left(n^{1/3} (\log n)^{2/3} \ell^2 p^{-2/3}\right)\right)\right),$ if $C'$ is sufficiently large, then $|\tilde{b}_k-b_k| \le \alpha \cdot \exp\left(-20 C n^{1/3} (\log n)^{2/3} \ell^2 p^{-2/3}\right)$ for all $k$.
    
    Now, $\tilde{B}$ will be the matrix with $\tilde{B}_{ij} = \tilde{b}_{i+j-2}$ as before, and we define $\delta := \exp\left(-4 C n^{1/3} (\log n)^{2/3} \ell^2 p^{-2/3}\right).$ Note that $|\tilde{b}_k-b_k| \le \frac{\alpha \cdot \delta}{4 \ell^2},$ so $E := \tilde{B}-V^T A V$ will satisfy $\|E\| \le \|E\|_F \le \frac{\alpha \cdot \delta}{4 \ell}.$ Thus, we can run the algorithm of Lemma \ref{Alg1}. If we return NO, then either $|\det V| = \prod_{i > j} |u_i-u_j| \le \delta,$ or $\sigma_{min}(V^T A V) \le \alpha \cdot \delta.$ In the latter case, $\sigma_{min}(V^T A V) \ge \sigma_{min}(V)^2 \cdot \sigma_{min}(A) \ge \alpha \cdot \sigma_{min}(V)^2,$ so $\sigma_{min}(V) \le \exp\left(-2C n^{1/3} (\log n)^{2/3} \ell^2 p^{-2/3}\right).$ However,
\[\sigma_{min}(V) \ge \frac{|\det V|}{\|V\|^{\ell-1}} \ge \frac{|\det V|}{\|V\|_F^{\ell-1}} = \frac{\prod_{i > j} |u_i-u_j|}{\left(\sum_{i = 1}^{\ell} \sum_{k = 0}^{\ell-1} |u_i|^{2k}\right)^{(\ell-1)/2}} \ge n^{-\Theta(\ell^2)} \cdot \prod_{i > j} |u_i-u_j|,\]
    where the last inequality is true since $|u_i| = |P(z; x^{(i)})| \le n$ when $|z| = 1$. However, we have that $n^{\ell^2} = \exp\left(o\left(n^{1/3} (\log n)^{2/3} \ell^2 p^{-2/3}\right)\right)$, so in either case, $\prod_{i > j} |u_i-u_j| \le \exp\left(-C n^{1/3} (\log n)^{2/3} \ell^2 p^{-2/3}\right)$. Thus, if the subroutine of Lemma \ref{Alg1} returns NO, our algorithm can return nothing.
    
    Otherwise, if our subroutine returns YES, we know that $\sigma_{min}(V^T A V) \ge \frac{\alpha \cdot \delta}{2}$ and $|\det(V)| \ge \frac{\delta}{3},$ so $\sigma_{min}(V) \ge |\det V|/\|V\|_F^{n-1} \ge \delta \cdot n^{-\Theta(\ell^2)}.$ Thus, if we set $\gamma = \delta^2,$ we will have that $\sigma_{min}(V) \ge \gamma \cdot (\max |u_i|)^\ell$ and $\sigma_{min}(V^T A V) \ge \alpha \cdot \gamma.$ Finally, if we set $\eta = \exp\left(-C n^{1/3} (\log n)^{2/3} \ell^2 p^{-2/3}\right),$ we will have $|\tilde{b}_k-b_k| \le \frac{\alpha \cdot \gamma^2 \cdot \eta}{4 \ell^2}$ for all $0 \le k \le 2\ell-1$. Thus, by creating $\tilde{w}$ as done in Lemma \ref{UsingWoodbury}, we will return $\tilde{w}$ so that $\|\tilde{w}-w\|_2 \le \eta = \exp\left(-C n^{1/3} (\log n)^{2/3} \ell^2 p^{-2/3}\right).$ Finally, recalling that $w_i = r_{\ell+1-i} = (-1)^{\ell-i} \sigma_{\ell+1-i}(z; \mathcal{D}),$ if we return $(-1)^{j-1} \tilde{w}_{\ell+1-j}$ as our estimate of $\sigma_j(z; \mathcal{D}),$ then $|(-1)^{j-1} \tilde{w}_{\ell+1-j}-\sigma_j(z; \mathcal{D})| \le \|\tilde{w}-w\|_2 \le \eta = \exp\left(-C n^{1/3} (\log n)^{2/3} \ell^2 p^{-2/3}\right)$ for all $1 \le j \le \ell.$
\end{proof}

Finally, we have the following modification of Lemma \ref{Alg2}, which essentially shows that the above algorithm still works even if with some small probability, we are given ``incorrect'' traces. This will prove useful in converting the proof of Equation \eqref{MainEq3} to a proof of Equation \eqref{MainEq4} in Theorem \ref{MainAlg}.

\begin{proposition} \label{Modification2}
    Let $\mathcal{D}$ be exactly $\ell$-sparse, with notation as in Lemma \ref{Alg2}, and let $\mathcal{D}'$ be any distribution over $\{0, 1\}^n$. Suppose we are given sample access to traces from $x \sim \mathcal{D}'',$ where $\mathcal{D}$ draws a string $x$ from $\mathcal{D}'$ with probability $\kappa$ and a string $x$ from $\mathcal{D}$ with probability $1-\kappa.$ Then, for some sufficiently large constant $C',$ if $\kappa \le \alpha \cdot \exp\left(-C' n^{1/3} (\log n)^{2/3} \ell^2 p^{-2/3}\right)$, the Algorithm of Lemma \ref{Alg2} will still work. In other words, it uses $\alpha^{-2} \cdot \exp\left(O\left(n^{1/3} (\log n)^{2/3} \ell^2 p^{-2/3}\right)\right)$ time and queries and either outputs nothing or, for all $1 \le j \le \ell$, outputs an estimate for $\sigma_{j}(z; \mathcal{D})$ that is correct up to an additive error of $\exp\left(-C n^{1/3} (\log n)^{2/3} \ell^2 p^{-2/3}\right).$ Moreover, if the algorithm returns nothing, then $\prod_{i > j} |P(z; x^{(i)})-P(z; x^{(j)})| \le \exp\left(-C n^{1/3} (\log n)^{2/3} \ell^2 p^{-2/3}\right).$
\end{proposition}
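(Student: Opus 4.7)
The plan is to reuse the algorithm of Lemma \ref{Alg2} verbatim and only re-analyze the accuracy of the sample means $\tilde{b}_k$. The key observation is that the uniform bound from Lemma \ref{CreatingG}, namely $|g_k(\tilde{x}, z)| \le N := \exp\!\left(O\!\left(n^{1/3}(\log n)^{2/3} \ell^2 p^{-2/3}\right)\right)$ for $|z|=1, |\arg z| \le 2\pi/L$ and for all $k \le 2\ell-1$, holds for \emph{every} $\tilde{x} \in \{0,1\}^n$ regardless of the distribution producing $x$. Hence when we sample from $\mathcal{D}''$, each realization of $g_k(\tilde{x}, z)$ is still bounded by $N$ in magnitude.

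Writing $\mathcal{D}'' = (1-\kappa)\mathcal{D} + \kappa \mathcal{D}'$, we have
\[
\BE_{x \sim \mathcal{D}''}[g_k(\tilde{x}, z)] = (1-\kappa) \, b_k + \kappa \cdot \BE_{x \sim \mathcal{D}'}[g_k(\tilde{x}, z)],
\]
so the bias with respect to the target $b_k = \BE_{x\sim \mathcal{D}}[g_k(\tilde{x}, z)]$ is at most $\kappa(|b_k| + N) \le 2 \kappa N$. Under the hypothesis $\kappa \le \alpha \cdot \exp\!\left(-C' n^{1/3} (\log n)^{2/3} \ell^2 p^{-2/3}\right)$, picking $C'$ large enough relative to the hidden constant in $N$ and the constant $20 C$ from the proof of Lemma \ref{Alg2} gives
\[
2 \kappa N \le \tfrac{1}{2}\, \alpha \cdot \exp\!\left(-20 C n^{1/3} (\log n)^{2/3} \ell^2 p^{-2/3}\right).
\]

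For the statistical error, we draw the same number $K = \alpha^{-2}\exp\!\left(O\!\left(n^{1/3}(\log n)^{2/3}\ell^2 p^{-2/3}\right)\right)$ of traces from $\mathcal{D}''$ and form the empirical mean $\tilde b_k$. Since $|g_k| \le N$ uniformly, a Hoeffding/Chernoff bound gives, with probability $1 - \exp\!\left(-\exp\!\left(\Theta(n^{1/3}(\log n)^{2/3}\ell^2 p^{-2/3})\right)\right)$, that the empirical mean is within $\tfrac{1}{2}\alpha \cdot \exp\!\left(-20 C n^{1/3}(\log n)^{2/3}\ell^2 p^{-2/3}\right)$ of its expectation $\BE_{x\sim \mathcal{D}''}[g_k(\tilde{x},z)]$. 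Combining with the bias bound via the triangle inequality yields $|\tilde b_k - b_k| \le \alpha \cdot \exp\!\left(-20 C n^{1/3}(\log n)^{2/3}\ell^2 p^{-2/3}\right)$ for every $0 \le k \le 2\ell-1$, which is exactly the accuracy needed to invoke the remainder of the proof of Lemma \ref{Alg2} (the determinantal test via Lemma \ref{Alg1} and the perturbation analysis via Lemma \ref{UsingWoodbury}) unchanged.

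The only obstacle is choosing $C'$ correctly; this is routine because the uniform bound $N$ has a fixed exponent (call it $C_0$) depending only on constants arising from Lemma \ref{CreatingG}, so setting $C' := C_0 + 20C + 1$ (or any sufficiently large constant) makes the bias negligible compared to the statistical error budget. After this adjustment the algorithm, its runtime, its query complexity, and its output guarantees are identical to those of Lemma \ref{Alg2}.
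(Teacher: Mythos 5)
Your proof is correct and uses essentially the same ingredients as the paper (the uniform bound $|g_k|\le N$ from Lemma \ref{CreatingG}, the smallness of $\kappa$, and a Chernoff/Hoeffding concentration bound), but organizes them via a slightly different and arguably cleaner decomposition. The paper conditions on the event that only a small fraction of sampled traces come from $\mathcal{D}'$, introduces the auxiliary sample mean $\tilde b_k'$ over the $\mathcal{D}$-drawn traces alone, and compares $\tilde b_k$ to $\tilde b_k'$ to $b_k$ (a sample-splitting argument that re-invokes the Chernoff analysis of Lemma \ref{Alg2} on a random-sized subsample). You instead compute $\BE_{\mathcal{D}''}[g_k]=(1-\kappa)b_k+\kappa\,\BE_{\mathcal{D}'}[g_k]$ exactly, bound the bias $|\BE_{\mathcal{D}''}[g_k]-b_k|\le 2\kappa N$ directly, and apply Hoeffding once to the full $\mathcal{D}''$ sample mean. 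Your bias-plus-concentration route avoids the need to separately concentrate the fraction of contaminated samples and sidesteps the minor awkwardness of conditioning on a random subsample size; it reaches the identical conclusion $|\tilde b_k-b_k|\le\alpha\cdot\exp(-\Theta(C)\cdot n^{1/3}(\log n)^{2/3}\ell^2 p^{-2/3})$ after which, as you note, the remainder of Lemma \ref{Alg2}'s proof (Lemma \ref{Alg1} and Lemma \ref{UsingWoodbury}) carries over unchanged.
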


\begin{proof}
    If $C'$ is sufficiently large, then with probability at least $1 - \exp\left(-\exp\left(\Theta\left(n^{1/3} (\log n)^{2/3} \ell^2 p^{-2/3}\right)\right)\right),$ at most $\exp\left(- (C'/2) \cdot n^{1/3} (\log n)^{2/3} \ell^2 p^{-2/3}\right)$ fraction of the traces $\tilde{x}$ that we see come from strings drawn from $\mathcal{D}'$. Now, if we let $\tilde{b}_k'$ denote the sample mean of all $g_k(\tilde{x}, z)$ values over the sampled $\tilde{x}$ drawn from $\mathcal{D},$ we saw that from Lemma \ref{Alg2}, $|\tilde{b}_k'-b_k| \le \alpha \cdot \exp\left(-20 C n^{1/3} (\log n)^{2/3} \ell^2 p^{-2/3}\right)$ for all $k$. Also, if we let $\tilde{b}_k$ denote the sample mean of all $g_k(\tilde{x}, z)$ values over all sampled $\tilde{x}$ drawn from $D'',$ we have $|\tilde{b}_k-\tilde{b}_k'| \le \alpha \cdot \exp\left(-20 C n^{1/3} (\log n)^{2/3} \ell^2 p^{-2/3}\right)$ if $C'$ is sufficiently large. This is because we only draw from $D'$ at most $\exp\left(- (C'/2) \cdot n^{1/3} (\log n)^{2/3} \ell^2 p^{-2/3}\right)$ fraction of the time, and $|g(\tilde{x}, z)| \le \exp\left(O\left(n^{1/3} (\log n)^{2/3} \ell^2 p^{-2/3}\right)\right)$ for all possible $\tilde{x}.$
    
    Thus, 
\[|\tilde{b}_k-b_k| \le 2 \alpha \cdot \exp\left(-20 C n^{1/3} (\log n)^{2/3} \ell^2 p^{-2/3}\right) \le \alpha \cdot \exp\left(-19 C n^{1/3} (\log n)^{2/3} \ell^2 p^{-2/3}\right).\]
    Since the estimates $\tilde{b}_k$ are the only statistics we use in Lemma \ref{Alg2}, the rest of the proof of Lemma \ref{Alg2} goes through.
\end{proof}

    We will ignore the modification of Proposition \ref{Modification2} until it becomes necessary, which will be in proving Equation \eqref{MainEq4} of Theorem \ref{MainAlg}. However, since we will only use Lemma \ref{Alg2} as a black box, we can substitute in Proposition \ref{Modification2} whenever necessary.

\subsection{Determining the Elementary Symmetric Polynomials} \label{SymmetricPoly2}

In Lemma \ref{Alg2} and Proposition \ref{Modification2}, we provide estimates for $\sigma_j(z; \mathcal{D})$, assuming that $\prod |P(z; x^{(i)})-P(z; x^{(j)})|$ is sufficiently large. As a result, we will need a lemma similar to Proposition \ref{Modification}, that shows there is some $z$ in a reasonably sized set such that $\prod |P(z; x^{(i)})-P(z; x^{(j)})|$ is sufficiently large. For the purposes of the linear program that allow us to determine the exact polynomials $\sigma_k(\mathcal{D})$, we will roughly need to show that both $\prod |P(z; x^{(i)})-P(z; x^{(j)})|$ and $\sigma_k(z; \mathcal{D}) - T'(z)$ are sufficiently large for any polynomial $T'(z)$ that could be ``confused'' with $\sigma_k(z; \mathcal{D})$.

\begin{lemma} \label{LowerBoundPoly}
    Let $P$ be a nonzero polynomial of degree at most $\ell^2 \cdot n$ and with all coefficients integers bounded in magnitude by $n^{\ell^2}.$ Also, let $Q$ be a polynomial with real coefficients and degree at most $\ell \cdot n$. Also, suppose that all coefficients of $Q$ are bounded in magnitude by $2n^{2\ell}$ and that $|Q(0)| \ge 1$. Then, for $L = \left\lfloor\left(\frac{n}{\log n \cdot p^2}\right)^{1/3}\right\rfloor,$ there exists some universal constant $C$ and some complex number $z$ with $|z| = 1$, $|\arg z| \le \frac{1}{L}$, and $\arg z$ an integer multiple of $\exp\left(-C n^{1/3} (\log n)^{2/3} \ell^2 p^{-2/3}\right)$ such that
\[|P(z)|, |Q(z)| \ge 3 \exp\left(-C n^{1/3} (\log n)^{2/3} \ell^2 p^{-2/3}\right).\]
\end{lemma}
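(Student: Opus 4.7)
The plan is to combine the Littlewood bound of Theorem \ref{Littlewood} applied to the product $P \cdot Q$ with trivial magnitude upper bounds for $P$ and $Q$ separately, and then to discretize in essentially the same way as Proposition \ref{Modification}.

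First, I would reduce to a polynomial with nonzero constant term. Since $P$ has integer coefficients, write $P(z) = z^r \tilde{P}(z)$ where $\tilde{P}(0)$ is a nonzero integer; on the unit circle $|P(z)| = |\tilde{P}(z)|$, so it suffices to lower-bound $|\tilde{P}(z)|$. The product $R(z) := \tilde{P}(z) Q(z)$ has constant coefficient $\tilde{P}(0) Q(0)$, which is a real number of magnitude at least $|Q(0)| \ge 1$, and all of its coefficients are bounded in magnitude by $n^{O(\ell^2)}$ because $\tilde{P}$ has $O(\ell^2 n)$ coefficients of magnitude at most $n^{\ell^2}$ and $Q$ has $O(\ell n)$ coefficients of magnitude at most $2n^{2\ell}$. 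After dividing $R$ through by its constant coefficient, Theorem \ref{Littlewood} applied on the arc $A = \{|z|=1, |\arg z| \le 1/L\}$ of length $2/L$ yields
\[
\sup_{z \in A} |P(z) Q(z)| \;=\; \sup_{z \in A} |\tilde{P}(z) Q(z)| \;\ge\; \exp\bigl(-O(L \cdot \ell^2 \log n)\bigr).
\]

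Next, at any $z_0 \in A$ realizing (say, within a factor of $2$ of) this supremum, I use the trivial bounds $|P(z_0)| \le (\ell^2 n + 1) n^{\ell^2} = n^{O(\ell^2)}$ and $|Q(z_0)| \le (\ell n + 1) \cdot 2 n^{2\ell} = n^{O(\ell)}$ to \emph{separate} the product into individual lower bounds:
\[
|P(z_0)| \;\ge\; \frac{|P(z_0) Q(z_0)|}{n^{O(\ell)}} \;\ge\; \exp\bigl(-O(L \ell^2 \log n)\bigr),
\qquad
|Q(z_0)| \;\ge\; \frac{|P(z_0) Q(z_0)|}{n^{O(\ell^2)}} \;\ge\; \exp\bigl(-O(L \ell^2 \log n)\bigr),
\]
since the logarithms of the polynomial factors $n^{O(\ell)}, n^{O(\ell^2)}$ are dominated by $L \ell^2 \log n$. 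Plugging in $L = \lfloor (n/(p^2 \log n))^{1/3} \rfloor$ gives $L \cdot \ell^2 \log n = O\bigl(n^{1/3} (\log n)^{2/3} \ell^2 p^{-2/3}\bigr)$, which matches the target exponent up to a constant.

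Finally, to enforce that $\arg z$ lies on the prescribed discrete grid, I round $\arg z_0$ to the nearest integer multiple of $\delta := \exp\bigl(-C n^{1/3} (\log n)^{2/3} \ell^2 p^{-2/3}\bigr)$, obtaining $z_0'$ still with $|z_0'| = 1$ (and, by choosing the rounding direction, still with $|\arg z_0'| \le 1/L$). Since $P, Q$ have degree at most $\ell^2 n$ and coefficients of magnitude at most $n^{O(\ell^2)}$, their derivatives on the closed unit disk are bounded by $n^{O(\ell^2)}$, so $|P(z_0) - P(z_0')|, |Q(z_0) - Q(z_0')| \le \delta \cdot n^{O(\ell^2)}$. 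Choosing $C$ sufficiently large relative to the implicit constants from Theorem \ref{Littlewood} and from the derivative bound makes this perturbation far smaller than $\exp(-O(L \ell^2 \log n))$, and a triangle inequality then yields the required lower bounds of $3\exp\bigl(-C n^{1/3} (\log n)^{2/3} \ell^2 p^{-2/3}\bigr)$ at $z_0'$.

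The main subtlety is the last step of the separation argument: the Littlewood bound alone only controls the product $|PQ|$, and one has to use the (much smaller) trivial upper bounds on $|P|$ and $|Q|$ to pull out individual lower bounds; this works precisely because $L \ell^2 \log n$ swamps the polynomial-in-$n$ upper bounds when $L \to \infty$. Handling the discretization is routine once the derivative bound is in hand, analogous to the proof of Proposition \ref{Modification}.
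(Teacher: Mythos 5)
Your proposal follows the paper's proof essentially verbatim: reduce to $P(0)\neq 0$, apply Theorem \ref{Littlewood} to the normalized product $PQ$ on the arc of length $\Theta(1/L)$, divide out the trivial polynomial-size upper bounds on $|P|$ and $|Q|$ to separate the factors, and then round $\arg z$ to the grid using the derivative bound. The reasoning is correct and matches the paper at every step, including the observation that the separation works because $L\ell^2\log n$ dominates the $n^{O(\ell^2)}$ upper bounds.
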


\begin{proof}
    Throughout the proof, we assume WLOG that $P(0) \neq 0,$ or equivalently, $z \nmid P(z).$ This is allowed because $|P(z)/z| = |P(z)|$ for all $|z| = 1,$ so we can divide $P(z)$ by sufficiently many powers of $z$ until $P(0) \neq 0$.

    First, we prove this result if $\arg z$ does not necessarily have to be an integer multiple of $\exp\left(-C n^{1/3} (\log n)^{2/3} \ell^2 p^{-2/3}\right)$.  Now, let $R(z)$ be the polynomial $P(z) \cdot Q(z).$ Since $P(0)$ is nonzero and integral, and since $|Q(0)| \ge 1,$ we have that $|R(0)| \ge 1.$ Moreover, by our bounds on the degrees and coefficients of both $P$ and $Q$, we have that all coefficients of $R(z)$ are bounded in magnitude by $\ell n \cdot n^{\ell^2} \cdot 2n^{2 \ell} \le 2 n^{4 \ell^2}.$ Finally, letting $S(z) = R(z)/R(0),$ we have that $S(0) = 1,$ all of $S$'s coefficients are bounded in magnitude by $2 n^{4 \ell^2},$ and $|S(z)| \le |R(z)|$ for all $z$, since $|R(0)| \ge 1$.
    
    Now by Theorem \ref{Littlewood}, we have that there is some $z$ with $|z| = 1$ and $|\arg z| \le \frac{1}{L}$ such that
\[|R(z)| \ge |S(z)| \ge \exp\left(-c_1 \cdot L \cdot (1 + \log 2 n^{4 \ell^2})\right) = \exp\left(-c_2 \cdot n^{1/3} (\log n)^{2/3} \ell^2 p^{-2/3}\right)\]
    for some constants $c_1, c_2.$ However, note that by the bound on the degrees and coefficients of $P, Q$, we have that $|P(z)| \le \ell^2 n \cdot n^{\ell^2} \le n^{2 \ell^2}$ and $|Q(z)| \le 2 \ell n \cdot n^{2 \ell} \le n^{4 \ell^2}.$ Therefore, for some constant $c_3 > c_2,$
\[|P(z)|, |Q(z)| \ge \frac{|R(z)|}{\max(|P(z)|, |Q(z)|)} \ge 3 \exp\left(-c_3 n^{1/3} (\log n)^{2/3} \ell^2 p^{-2/3}\right).\]

    Now, we prove the full lemma where $\arg z$ is an integer multiple of $\exp\left(-C n^{1/3} (\log n)^{2/3} \ell^2 p^{-2/3}\right)$ for some $C > c_3.$ To do this, note that by the bound on the degrees and coefficients of $P$ and $Q,$ we have that $\left|\frac{d}{dz} P(z)\right|, \left|\frac{d}{dz} (Q(z))\right| \le (\ell^2 n)^2 \cdot 2n^{2\ell^2} = \exp\left(o\left(n^{1/3} (\log n)^{2/3} \ell^2 p^{-2/3}\right)\right)$ for all $|z| = 1.$ Thus, for some $C > c_3,$ if $|w| = 1$ and $\arg w$ is the closest integer multiple of $\exp\left(-C n^{1/3} (\log n)^{2/3} \ell^2 p^{-2/3}\right)$ to $\arg z$ for the $z$ chosen in the previous paragraph, we will have that 
\[|P(w)|, |Q(w)| \ge 3 \exp\left(-C n^{1/3} (\log n)^{2/3} \ell^2 p^{-2/3}\right). \qedhere\]
\end{proof}

Using the above lemma, we show how to determine the polynomials $\sigma_k(\mathcal{D})$ given sample access to traces from $x \sim \mathcal{D}$.

\begin{lemma} \label{Alg3}
    Suppose we are given sample access to traces from an unknown distribution $\mathcal{D}$ with support size exactly $\ell$, and let all notation be as in Lemma \ref{Alg2}. Also, suppose we know $\alpha, \beta < 1$ such that $\alpha \le \min a_i \le 2 \alpha$ and $\beta \le \prod a_i \le 2 \beta.$ Then, we can determine all coefficients of $\sigma_k(\mathcal{D})$ for all $1 \le k \le \ell$ using $\alpha^{-2} \exp\left(O\left(n^{1/3} (\log n)^{2/3} \ell^2 p^{-2/3}\right)\right)$ queries and time.
\end{lemma}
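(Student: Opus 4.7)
The plan is to recover each polynomial $\sigma_k(\mathcal{D})$, of degree at most $\ell n$ with nonnegative integer coefficients bounded by $n^{O(\ell)}$, by first obtaining accurate point-estimates at a grid of $z$-values via Lemma~\ref{Alg2} and then running a Holenstein--Mossel--Peres--Welling style linear program that reads off the integer coefficients one at a time; correctness of each round is certified via Lemma~\ref{LowerBoundPoly}.

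First I would take $S$ to be the grid of $z\in\BC$ with $|z|=1$, $|\arg z|\le 1/L$, and $\arg z$ an integer multiple of $\exp(-C n^{1/3}(\log n)^{2/3}\ell^{2} p^{-2/3})$ for a sufficiently large constant $C$, so that $|S|=\exp(O(n^{1/3}(\log n)^{2/3}\ell^{2} p^{-2/3}))$. Draw $\alpha^{-2}\exp(O(n^{1/3}(\log n)^{2/3}\ell^{2} p^{-2/3}))$ traces once and reuse them to invoke Lemma~\ref{Alg2} at every $z\in S$, writing $G\subseteq S$ for the subset on which the routine succeeds; for each $z\in G$ and every $1\le k\le\ell$ we thereby obtain an estimate $\tilde{\sigma}_{k}(z;\mathcal{D})$ accurate to within $\eta:=\exp(-C n^{1/3}(\log n)^{2/3}\ell^{2} p^{-2/3})$. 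Reusing the same traces for every $z$ keeps the query budget, multiplying the per-point time from Lemma~\ref{Alg2} by $|S|$ keeps the total runtime within the stated bound, and a union bound over $|S|$ absorbs the Chernoff failure probabilities inside Lemma~\ref{Alg2}.

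To reconstruct $T:=\sigma_{k}(\mathcal{D})$ from the point-estimates, I would extract its coefficients $t_0,t_1,\ldots,t_d$ in increasing order of degree. Suppose $t_0,\ldots,t_{m-1}$ have already been recovered. For each integer candidate $c\in\{0,1,\ldots,n^{O(\ell)}\}$ for $t_m$, solve the linear program in real variables $t_{m+1},\ldots,t_d\in[0,n^{O(\ell)}]$ whose constraints, for every $z\in G$, impose $|\Re(T^{\mathrm{cand}}(z)-\tilde{\sigma}_{k}(z;\mathcal{D}))|\le\eta$ and $|\Im(T^{\mathrm{cand}}(z)-\tilde{\sigma}_{k}(z;\mathcal{D}))|\le\eta$. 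The true coefficient clearly makes the LP feasible. To rule out any wrong $T'\ne T$ from surviving, form $Q(z):=(T(z)-T'(z))/z^{m}$, which has nonzero integer constant term $t_m-c$, real coefficients bounded by $n^{O(\ell)}$, and degree at most $\ell n$; applying Lemma~\ref{LowerBoundPoly} to $Q$ paired with the integer polynomial $\Pi(z):=\prod_{a>b}(P(z;x^{(a)})-P(z;x^{(b)}))$ (nonzero, degree at most $\ell^{2}n$, coefficients bounded by $n^{\ell^{2}}$) produces some $z^{\ast}\in S$ at which both $|\Pi(z^{\ast})|\ge 3\eta$ and $|Q(z^{\ast})|\ge 3\eta$. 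The first inequality forces $z^{\ast}\in G$ by the contrapositive of Lemma~\ref{Alg2}, while the second contradicts simultaneous feasibility of $T$ and $T'$ in the LP once $C$ is taken large enough to absorb the constant factor from the real/imaginary splitting.

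The main obstacle I anticipate is the careful bookkeeping to verify the hypotheses of Lemma~\ref{LowerBoundPoly}: in particular, that division by $z^{m}$ leaves an integer nonzero constant term (which is exactly why $t_m$ is restricted to integer candidates rather than relaxed continuously, giving $|t_m-c|\ge 1$), and that the coefficient and degree bounds on $\Pi$ and $Q$ indeed match the lemma's hypotheses. The total runtime, bounded by $\ell$ choices of $k$ times $O(\ell n)$ rounds times $n^{O(\ell)}$ integer candidates times $\mathrm{poly}(n,\ell,|S|)$ for each LP solve, is absorbed into $\alpha^{-2}\exp(O(n^{1/3}(\log n)^{2/3}\ell^{2} p^{-2/3}))$.
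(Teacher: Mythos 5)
Your proposal is correct and follows essentially the same strategy as the paper: recover each $\sigma_k(\mathcal{D})$ coefficient by coefficient via a linear program over a grid of $z$-values, using Lemma~\ref{Alg2} to obtain the point-estimates and Lemma~\ref{LowerBoundPoly} (applied to $\Pi(z)=\prod_{a>b}(P(z;x^{(a)})-P(z;x^{(b)}))$ and $Q(z)=(T(z)-T'(z))/z^m$) to certify that any incorrect integer candidate $t'_m$ is rejected. The only cosmetic difference is that you invoke ``taking $C$ large enough'' to handle the real/imaginary split, whereas the paper relies on the explicit factor $3$ built into the conclusion of Lemma~\ref{LowerBoundPoly}; both work for the same reason.
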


\begin{proof}
    Let $1 \le k \le \ell$ be fixed. We will attempt to determine the coefficients of $\sigma_k(\mathcal{D})$ one at a time. First, note that $\sigma_k(\mathcal{D})$ is the $k$th symmetric polynomial of $P_1 = P(z; x^{(1)}), \dots, P_\ell = P(z; x^{(\ell)})$, where each polynomial has all coefficients either $0$ or $1$ and has degree at most $n$. Thus, we can write $\sigma_k(z; \mathcal{D}) = T(z) = \sum_{i = 0}^{\ell \cdot n} t_i z^i$, where $t_i$ is a nonnegative integer at most ${\ell \choose k} \cdot n^k \le (2n)^{\ell} \le n^{2 \ell}$.
    
    Let $0 \le i \le \ell \cdot n$ and suppose we know $t_0, \dots, t_{i-1}$ but not $t_i$. Let $T'(z) = \sum_{i = 0}^{\ell \cdot n} t_i' z^i$ be any polynomial with $t_j = t_j'$ for all $1 \le j \le i-1,$ $t_i \neq t_i'$, $0 \le t_j' \le n^{2 \ell}$ for all $j \ge i$, and $t_i'$ is an integer but $t_j'$ for $j > i$ may just be real numbers. Then, if we set $Q(z) := (T(z)-T'(z))/z^i$ and $P(z) = \prod_{i' > j'} (P(z; x^{(i')})-P(z; x^{(j')}))$, it is clear that $P, Q$ satisfy the conditions in Lemma \ref{LowerBoundPoly}. Therefore, for $L = \left\lfloor\left(\frac{n}{\log n \cdot p^2}\right)^{1/3} \right\rfloor,$ there exists some $z$ such that $|z| = 1,$ $|\arg z| \le \frac{1}{L}$, and $\arg z$ an integer multiple of $\exp\left(-C n^{1/3} (\log n)^{2/3} \ell^2 p^{-2/3}\right)$, such that
\[|P(z)|, |T(z)-T'(z)| \ge 3 \exp\left(-C n^{1/3} (\log n)^{2/3} \ell^2 p^{-2/3}\right).\]
    
    Now, we can determine $t_i$ as follows. First, for all $z$ with $|z| = 1,$ $|\arg z| \le \frac{1}{L}$, and $\arg z$ an integer multiple of $\exp\left(-C n^{1/3} (\log n)^{2/3} \ell^2 p^{-2/3}\right)$, we run the algorithm of Lemma \ref{Alg2}, either returning nothing, so $|P(z)| \le \exp\left(-C n^{1/3} (\log n)^{2/3} \ell^2 p^{-2/3}\right)$, or returning some value $h(z)$ such that $|h(z)-\sigma_k(z; \mathcal{D})| = |h(z) - T(z)| \le \exp\left(-C n^{1/3} (\log n)^{2/3} \ell^2 p^{-2/3}\right)$. Let $z_1, \dots, z_R$ be an enumeration of all $z$ such that the algorithm of Lemma \ref{Alg2} returns some $h(z).$ Now, we run the linear program with variables $t_0, \dots, t_{\ell \cdot n}$ and constraints
\begin{alignat*}{3}
\left|\Re (h(z_r))-\sum\limits_{j = 0}^{\ell \cdot n} t_j' \Re(z_r^j)\right| &\le \exp\left(-C n^{1/3} (\log n)^{2/3} \ell^2 p^{-2/3}\right) && \hspace{0.5cm} \forall 1 \le r \le R \\
\left|\Im (h(z_r))-\sum\limits_{j = 0}^{\ell \cdot n} t_j' \Im(z_r^j)\right| &\le \exp\left(-C n^{1/3} (\log n)^{2/3} \ell^2 p^{-2/3}\right) && \hspace{0.5cm} \forall 1 \le r \le R \\
t_j' &= t_j && \hspace{0.5cm} \forall 0 \le j \le i-1 \\
t_i' &\in \{0, 1, \dots, n^{2 \ell}\} \\
0 \le t_j' &\le n^{2 \ell} && \hspace{0.5cm} \forall i+1 \le j \le \ell \cdot n. 
\end{alignat*}
    Note that $t_j' = t_j$ for all $j \le \ell \cdot n$ is a solution, since $|h(z_r)-T(z_r)| \le \exp\left(-C n^{1/3} (\log n)^{2/3} \ell^2 p^{-2/3}\right)$ for all $z_r$. However, if $t_i' \neq t_i$ but $0 \le t_j' \le n^{2 \ell}$ for all $j > i,$ then there exists some $z_r$ such that 
\[\left|\sum_{j = 1}^{\ell \cdot n} t_j' z_r^j - \sum_{j = 1}^{\ell \cdot n} t_j z_r^j\right| = \left|T'(z_r)-T(z_r)\right| \ge 3 \exp\left(-C n^{1/3} (\log n)^{2/3} \ell^2 p^{-2/3}\right).\]
    Thus, by the Triangle Inequality, we have $|T'(z_r)-h(z_r)| \ge 2 \exp\left(-C n^{1/3} (\log n)^{2/3} \ell^2 p^{-2/3}\right),$ so either $|\Re(h(z_r)) - \sum t_j' \Re(z_r^j)| > \exp\left(-C n^{1/3} (\log n)^{2/3} \ell^2 p^{-2/3}\right)$ or $|\Im(h(z_r)) - \sum t_j' \Im(z_r^j)| > \exp\left(-C n^{1/3} (\log n)^{2/3} \ell^2 p^{-2/3}\right)$. Thus, any solution to the linear program must have $t_i' = t_i,$ so we can inductively solve for $t_i$.
    
    Overall, the linear program has $\ell \cdot n + 1$ variables and $O(R+\ell n) = \exp\left(O\left(n^{1/3} (\log n)^{2/3} \ell^2 p^{-2/3}\right)\right)$ constraints. Since only $t_i'$ is integrally constrained, by running $n^{2 \ell}+1$ separate linear programs for all possible values of $t_i',$ one can find a solution to the linear program in $\exp\left(O\left(n^{1/3} (\log n)^{2/3} \ell^2 p^{-2/3}\right)\right)$ time, which gives us $t_i$ assuming we know $t_1, \dots, t_{i-1}.$ Running this for all $1 \le k \le \ell$ and then all $0 \le i \le \ell \cdot n$ completes the proof.
\end{proof}

\subsection{Finishing the Algorithm} \label{FinishAlgorithm}

First, we show that given the elementary symmetric polynomials $\sigma_j(\mathcal{D}),$ we can recover the strings $x^{(1)}, \dots, x^{(\ell)}.$

\begin{lemma} \label{Factor}
    Let $x^{(1)}, \dots, x^{(\ell)}$ be distinct strings in $\{0, 1\}^n$ and let $P_i(z) := P(z; x^{(i)})$ for all $1 \le i \le \ell.$ Suppose we are given $Q_j := \sigma_j(P_1, \dots, P_\ell)$, i.e. the $j$th symmetric polynomial of $P_1, \dots, P_\ell$, for all $1 \le j \le \ell.$ Then, in time polynomial in $n$ and $\ell,$ we can recover the strings $x^{(1)}, \dots, x^{(\ell)}$ in some order.
\end{lemma}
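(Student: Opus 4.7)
The plan is to recognize that by Vieta's formulas, the $\sigma_j(\mathcal{D})$'s are exactly the (up to sign) coefficients, viewed in the variable $w$, of the polynomial
\[R(w, z) := \prod_{i=1}^{\ell} \bigl(w - P_i(z)\bigr) = w^\ell + \sum_{j=1}^{\ell} (-1)^j \sigma_j(\mathcal{D})(z)\, w^{\ell-j},\]
so the task reduces to factoring $R(w, z)$ into its $\ell$ distinct linear factors $w - P_i(z)$ and then reading the bits of $x^{(i)}$ off from $P_i(z) = \sum_{j=1}^n x^{(i)}_j z^j$.

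Rather than performing a full bivariate factorization, I would specialize $z$ to a convenient integer value where the problem becomes a trivial univariate one. The natural choice is $z = 2$: each $P_i(2) = \sum_{j=1}^{n} x^{(i)}_j \, 2^j$ is a nonnegative integer in $[0, 2^{n+1})$, and the map $x \mapsto P(2; x)$ is injective on $\{0, 1\}^n$, since $P(2; x)/2 = \sum_{j=1}^n x_j 2^{j-1}$ is simply the integer whose binary representation has $x_1, \dots, x_n$ as its bits (from least to most significant). Hence $R(w, 2) \in \BZ[w]$ is a degree-$\ell$ polynomial whose roots $v_1, \dots, v_\ell$ are known a priori to be distinct nonnegative integers bounded by $2^{n+1}$. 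We compute these roots using any polynomial-time univariate integer root-finder---concretely, running a standard numerical root-finding method to additive accuracy less than $1/2$ and rounding to the nearest integer suffices, which takes time polynomial in $n$ and $\ell$ given that the coefficients of $R(w, 2)$ have bit length $O(n \ell)$. Finally, reading the binary digits of $v_i / 2$ yields the string $x^{(i)}$.

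The only step requiring mild care is the root-finding itself, but it is a routine application of standard numerical procedures once one notes the a priori integrality and separation of the roots; no bivariate factorization or Hensel lifting is needed. The unordered tuple $\{x^{(1)}, \dots, x^{(\ell)}\}$ is uniquely determined by the $\sigma_j(\mathcal{D})$'s via Vieta's formulas together with the distinctness of the $P_i$'s. The ordering of the $x^{(i)}$'s is of course not canonical, but this is harmless since the mixture weights $a_i$ will be matched up to their corresponding strings in a separate subsequent step.
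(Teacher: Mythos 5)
Your proposal is correct and takes essentially the same approach as the paper: both specialize to $z = 2$, observe that the $P_i(2)$ are distinct nonnegative integers below $2^{n+1}$ that can be recovered as the roots of the monic degree-$\ell$ integer polynomial with coefficients $(-1)^j Q_j(2)$, and then read off each string from the binary expansion. The only cosmetic difference is the root-finding subroutine---the paper invokes LLL or Rabin's algorithm for an exact algebraic factorization, while you propose a provably polynomial-time numerical root-finder followed by rounding (valid here since the roots are integers separated by at least $1$)---but the underlying idea is identical.
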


\begin{proof}
    Consider the map $f: \{0, 1\}^n \to \BZ$ sending a string $x \in \{0, 1\}^n$ to $P(2; x) = \sum 2^i x_i.$ Then, note that
\[\prod_{i = 1}^{\ell} \left(z-f(x^{(i)})\right) = z^\ell - Q_1(2) z^{\ell-1} +  \cdots + (-1)^\ell Q_\ell(2).\]
    Since we know the polynomials $Q_1, \dots, Q_\ell,$ we also know the values of $Q_i(2)$ for all $i$, and therefore know the polynomial $\prod (z-f(x^{(i)})).$ Then, if we define $y^{(i)} = f(x^{(i)}),$ note that $y^{(i)} \in \BZ$, $0 \le y^{(i)} \le 2^{n+1},$ and the $y^{(i)}$ values are distinct. It is well known that we can factor $\prod_{i = 1}^{\ell} (z-y^{(i)})$ in time polynomial in $\ell$ and $n$, such as by using the LLL algorithm for factoring polynomials \cite{LLL}, or Rabin's algorithm for finding roots of polynomials modulo $p$ (where we can choose $p$ to be a prime greater than $2^{n+1}$) \cite{Rabin80}. This gives us $y^{(1)}, \dots, y^{(\ell)}$ in some order. But then, by writing out the binary expansion of $y^{(i)}$, we recover the strings $x^{(1)}, \dots, x^{(\ell)}$ in some order.
\end{proof}

As an immediate corollary of Lemma \ref{Alg3} and Lemma \ref{Factor}, we can recover the strings original strings $x^{(1)}, \dots, x^{(\ell)},$ which we now formally state.

\begin{corollary} \label{RecoverStrings}
    Suppose $\mathcal{D}$ is an unknown distribution over $\{0, 1\}^n$ with support size exactly $\ell$, such that $x$ drawn from $\mathcal{D}$ equals $x^{(i)}$ with probability $a_i$ for $a_1, \dots, a_\ell$ positive reals adding to $1$ and $x^{(1)}, \dots, x^{(\ell)}$ distinct strings in $\{0, 1\}^n$. Then, if we are given $0 < \alpha, \beta < 1$ such that $\alpha \le \min a_i \le 2 \alpha$ and $\beta \le \prod a_i \le 2 \beta,$ we can recover the strings $x^{(1)}, \dots, x^{(\ell)}$ in some order, using $\alpha^{-2} \exp\left(O\left(n^{1/3} (\log n)^{2/3} \ell^2 p^{-2/3}\right)\right)$ queries and time.
\end{corollary}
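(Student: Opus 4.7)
The plan is to simply chain together Lemma \ref{Alg3} and Lemma \ref{Factor}, verifying that the hypotheses of each match what is available (or produced by the previous step).

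First, I would invoke Lemma \ref{Alg3} directly on the distribution $\mathcal{D}$ using the given bounds $\alpha, \beta$. Its hypotheses match exactly: $\mathcal{D}$ has support size exactly $\ell$, the weights $a_i$ are positive reals summing to $1$, and $\alpha \le \min a_i \le 2\alpha$, $\beta \le \prod a_i \le 2 \beta$. This step outputs all coefficients of the polynomials $\sigma_k(\mathcal{D})$ for every $1 \le k \le \ell$, using exactly $\alpha^{-2} \exp\bigl(O(n^{1/3} (\log n)^{2/3} \ell^2 p^{-2/3})\bigr)$ samples and time. Crucially, the coefficients returned are exact integers (not approximate), as Lemma \ref{Alg3} recovers the polynomials fully.

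Next, I would feed the polynomials $\sigma_k(\mathcal{D}) = \sigma_k(P_1, \dots, P_\ell)$ into Lemma \ref{Factor}, with $P_i(z) = P(z; x^{(i)})$. Since the $x^{(i)}$ are distinct, the hypotheses of Lemma \ref{Factor} are satisfied, and it returns the strings $x^{(1)}, \dots, x^{(\ell)}$ in some order, in time polynomial in $n$ and $\ell$. Since polynomial in $n$ and $\ell$ is absorbed into $\exp\bigl(O(n^{1/3} (\log n)^{2/3} \ell^2 p^{-2/3})\bigr)$, the overall runtime and query complexity are dominated by the invocation of Lemma \ref{Alg3}, giving the claimed bound.

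There is no real obstacle here since both lemmas have already been established; the only thing to verify is that the conditions of Lemma \ref{Factor} are met (distinctness of the $x^{(i)}$ follows from the assumption that $\mathcal{D}$ has support size exactly $\ell$) and that the running time of factoring is negligible compared to the symmetric-polynomial recovery step. The proof is essentially a one-line composition of the two preceding results.
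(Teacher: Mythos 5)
Your proposal is correct and is exactly the paper's own approach: the paper explicitly introduces Corollary \ref{RecoverStrings} as ``an immediate corollary of Lemma \ref{Alg3} and Lemma \ref{Factor},'' and your argument carries out precisely that composition, with the same bookkeeping on hypotheses and runtime.
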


We are now ready to prove the first half of Theorem \ref{MainAlg}. Importantly, we no longer know the values of $\min a_i$ and $\prod a_i$ up to a factor of $2$, but just have some lower bound $\alpha \le \min a_i$. In the second half of Theorem \ref{MainAlg}, we will not even be given any lower bound!

\begin{proof}[Proof of Equation \eqref{MainEq3} of Theorem \ref{MainAlg}]
    Assume WLOG that $\alpha^{-1} \le \min a_i$ is a power of $2$, and let $m \in \BN$ be so that $\alpha = 2^{-m}$. Then, if $\mathcal{D}$ has support size exactly $\ell'$ for some $\ell' \le \ell,$ note that $2^{-m \cdot \ell'} \le (\min a_i)^{\ell'} \le \prod a_i.$ Now, for all $1 \le \ell' \le \ell,$ $1 \le m_1 \le m$, and $1 \le m_2 \le m \cdot \ell'$, we assume that $\mathcal{D}$ has support size exactly $\ell'$, $2^{-m_1} \le \min a_i \le 2^{1-m_1}$, and $2^{-m_2} \le \prod a_i \le 2^{1-m_2}.$ By applying Corollary \ref{RecoverStrings}, for each $\ell', m_1, m_2,$ we get some $\ell'$ strings $x^{(1)}_{\ell', m_1, m_2}, \dots, x^{(\ell')}_{\ell', m_1, m_2}$ using $2^{2 m_1} \cdot \exp\left(O\left(n^{1/3} (\log n)^{2/3} \ell^2 p^{-2/3}\right)\right)$ queries and time.
    
    Thus, by running this over all $\ell', m_1, m_2,$ in $m \cdot 2^{2m} \cdot \exp\left(O\left(n^{1/3} (\log n)^{2/3} \ell^2 p^{-2/3}\right)\right)$ queries and time, we obtain $O(\ell^2 \cdot m^2)$ candidate tuples of strings $\left(x^{(1)}_{\ell', m_1, m_2}, \dots, x^{(\ell')}_{\ell', m_1, m_2}\right)$ representing the support of $\mathcal{D}$. At least one of these tuples will be correct, since there is some $\ell' \le \ell, m_1 \le m, m_2 \le \ell' \cdot m$ such that $\mathcal{D}$ has support exactly $\ell'$, $2^{-m_1} \le \min a_i \le 2^{1-m_1}$, and $2^{-m_2} \le \prod a_i \le 2^{1-m_2}$.
    
    We finish similarly to how we proved Equation \eqref{MainEq1} of Theorem \ref{main}, done at the end of Section \ref{PopulationRecoveryBounds}. Namely, for $L = \left\lfloor\left(\frac{n}{\log n \cdot p^2}\right)^{1/3} \right\rfloor,$ we let $S$ be the set of $z \in \BC$ such that $|z| = 1, |\arg z| \le \frac{2\pi}{L},$ and $\arg z$ is an integer multiple of $n^{-C \cdot L \cdot \ell^2}$ for some constant $C$. Then, using $\eps^{-2} \cdot \exp\left(O\left(n^{1/3} (\log n)^{2/3} \ell^2 p^{-2/3}\right)\right)$ queries, we obtain estimates $h_k(z)$ for $\BE_{x \sim \mathcal{D}} P(z; x)^k$ such that $\left|h_k(z) - \BE_{x \sim \mathcal{D}}P(z; x)^k\right| \le \frac{\eps}{4} n^{-C \cdot L \cdot \ell^2}$ for all $k \le 2\ell-1$ and all $z \in S.$ However, for all $\ell$-sparse distributions $\mathcal{D}'$ with $d_{TV}(\mathcal{D}, \mathcal{D}') \ge \eps,$ there is some $k \le 2\ell-1, z \in S$ such that $\left|\BE_{x \sim \mathcal{D}}P(z; x)^k - \BE_{x \sim \mathcal{D}'}P(z; x)^k\right| \ge \eps \cdot n^{-C \cdot L \cdot \ell^2}$ by Theorem \ref{Distinguish1} (modified by Proposition \ref{Modification}). Therefore, any $\ell$-sparse distribution $\mathcal{D}'$ with 
\[\left|\Re h_k(z) - \BE_{x \sim \mathcal{D}'} (\Re P(z; x)^k)\right| \le \frac{\eps}{4} \cdot n^{-C \cdot L \cdot \ell^2}\]
\[\left|\Im h_k(z) - \BE_{x \sim \mathcal{D}'} (\Im P(z; x)^k)\right| \le \frac{\eps}{4} \cdot n^{-C \cdot L \cdot \ell^2}\]
    for all $z \in S$ and all $k \le 2\ell-1$ must satisfy $d_{TV}(\mathcal{D}, \mathcal{D}') \le \eps.$ By looking at all $O(\ell^2 \cdot m^2)$ candidate tuples of strings $(x^{(1)}, \dots, x^{(\ell')})$ and running a linear program to determine probability values $a_1, \dots, a_{\ell'}$, we will return an appropriate $\mathcal{D}$ with $d_{TV}(\mathcal{D}, \mathcal{D}') \le \eps.$ The linear program takes time 
\[O(\ell^2 \cdot m^2) \cdot \text{poly}(\ell \cdot |S|) \le m^2 \cdot \exp\left(O\left(n^{1/3} (\log n)^{2/3} \ell^2 p^{-2/3}\right)\right).\]
    Thus, writing $m = \log \alpha^{-1},$ the total queries and time are both bounded by $(\alpha^{-2} \log \alpha^{-1} + \eps^{-2}) \cdot \exp\left(O\left(n^{1/3} (\log n)^{2/3} \ell^2 p^{-2/3}\right)\right)$.
\end{proof}

Now, using Proposition \ref{Modification2} instead of Lemma \ref{Alg2}, we can prove Equation \eqref{MainEq4} of Theorem \ref{MainAlg}.

\begin{proof}[Proof of Equation \eqref{MainEq4} of Theorem \ref{MainAlg}]
    Assume that $\mathcal{D}$ has support size exactly $\ell'$ for some $\ell' \le \ell$, with $\BP_{x \sim \mathcal{D}}(x = x^{(i)}) = a_i$ for all $1 \le i \le \ell'.$ Also, assume WLOG that the $x^{(i)}$'s are ordered so that $a_1 \ge a_2 \ge \cdots \ge a_{\ell'}.$ Note that for the constant $C'$ in Proposition \ref{Modification2}, either $a_{\ell'} \ge \eps \cdot \exp\left(-2C' n^{1/3} (\log n)^{2/3} \ell^3 p^{-2/3}\right)$ or there is some $1 \le k \le \ell'-1$ such that $a_{k+1} + \dots + a_{\ell'} \le \frac{1}{4} \cdot \min\left(\eps, a_{k} \cdot \exp\left(-C' n^{1/3} (\log n)^{2/3} \ell^2 p^{-2/3}\right)\right)$. In the former case, define $k := \ell'.$ In the latter case, if we choose $k$ to be as small as possible, then $a_k \ge \eps \cdot \exp\left(-2C' n^{1/3} (\log n)^{2/3} \ell^3 p^{-2/3}\right)$.
    
    Now, let $\mathcal{D}_0$ be the distribution $\mathcal{D}$ conditioned on $x \sim \mathcal{D}$ being $x^{(i)}$ for some $i \le k,$ and let $\mathcal{D}_1$ be the distribution $\mathcal{D}$ conditioned on $x \sim \mathcal{D}$ being $x^{(i)}$ for some $i > k$. Now, letting $\alpha = 2^{-m}$ be so that $\alpha \le a_k < 2 \alpha,$ we draw from $\mathcal{D}_1$ with probability at most $1 - \alpha \cdot \exp\left(-C' n^{1/3} (\log n)^{2/3} \ell^2 p^{-2/3}\right).$ Therefore, assuming we know $\alpha$ and $2^{-m_2} = \beta \le \prod_{i = 1}^{k} a_i \le 2 \beta,$ we can recover the strings $x^{(1)}, \dots, x^{(k)}$ in some order using $\alpha^{-2} \exp\left(O\left(n^{1/3} (\log n)^{2/3} \ell^2 p^{-2/3}\right)\right)$ queries and time.
    
    The rest of the algorithm is nearly identical to the proof of Equation \eqref{MainEq3}: we run over all values of $k \le \ell' \le \ell,$ all values of $m = \log \alpha^{-1} \le \log \eps^{-1} + O(n^{1/3} (\log n)^{2/3} \ell^3 p^{-2/3})$ (using our lower bound for $a_k$), and all values of $m_2 \le k \cdot m$ to get candidates for $x^{(1)}, \dots, x^{(k)}.$ We then run a linear program as in the proof of Equation \eqref{MainEq3} to get a $\ell$-sparse distribution $\mathcal{D}_0'$ such that $d_{TV}(\mathcal{D}_0', \mathcal{D}_0) \le \frac{\eps}{2}.$ Since $a_{k+1} + \dots + a_{\ell'} \le \frac{\eps}{4},$ we have that $\mathcal{D}_0$ is the distribution $\mathcal{D}$ conditioned on an event occurring with probability at least $1 - \frac{\eps}{4},$ so $d_{TV}(\mathcal{D}_0, \mathcal{D}) \le \frac{\eps}{2}.$ Therefore, our final distribution $\mathcal{D}_0'$ satisfies $d_{TV}(\mathcal{D}_0, \mathcal{D}) \le \eps.$ It is clear that the total queries and runtime are both bounded by
\[\eps^{-2} \log \eps \cdot \exp\left(O\left(n^{1/3} (\log n)^{2/3} \ell^3 p^{-2/3}\right)\right),\]
    using the same analysis of the proof of Equation \eqref{MainEq3}.
\end{proof}

\section{Improved Sample Bound for Small $p$} \label{SmallpSection}

In this section, we prove Theorem \ref{Smallp}, thereby getting better sample complexity upper bounds than both Theorem \ref{old} and Theorem \ref{main} when $p \le n^{-1/2}$, assuming that $\eps$ is not too small (i.e. $\eps \ge e^{-\ell^3 \sqrt{n} \log n}$). We note that the proof only requires a few small technical changes to the proof of Theorem \ref{main}, so we will focus on these changes.

First, note that when $p = n^{-1/2},$ by Equation \eqref{MainEq2} of Theorem \ref{main}, we can solve the population recovery problem with $\eps^{-2} \log \eps^{-1} \cdot \exp\left(\Theta\left(\ell^{3} \cdot \sqrt{n} \log n\right)\right)$ samples.

Now, for $p < \frac{1}{2} \cdot n^{1/2},$ consider sampling traces as follows. We will choose some value $t \le n$, and each time we sample a trace $\tilde{x}$, we discard the trace unless the length is at least $t$. If the trace has length at least $t$, let $X = Bin(n, n^{-1/2})|(X \le t),$ i.e. sample $X$ as a random Binomial variable $Bin(n, n^{-1/2})$, conditioned on $X \le t.$ Finally, replace $\tilde{x}$ with $\tilde{x}',$ which is a randomly chosen subsequence of $\tilde{x}$ of length $X$. Then, note that $\tilde{x}'$ has the same distribution as a trace sampled with $p = n^{1/2},$ conditioned on the length of the trace being at most $t$. Now, repeat this process until we have accumulated $\eps^{-2} \log \eps^{-1} \cdot \exp\left(\Theta(\ell^3 \cdot \sqrt{n} \log n)\right)$ samples.

First, we prove the following result.

\begin{lemma} \label{Binomial}
    For any $2\sqrt{n} \le t \le n$ and $0 < p < n^{-1/2},$ $\BP(Bin(n, p) = t) \ge \left(\BP(Bin(n, n^{-1/2}) = t)\right)^{O(\log p^{-1})}.$
\end{lemma}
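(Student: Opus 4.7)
The plan is to compare the two binomial probabilities directly by taking their ratio, and then to reduce the claim to a Chernoff-type tail estimate for the $n^{-1/2}$ binomial.

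First, I would observe that
\[
\frac{\BP(Bin(n,p)=t)}{\BP(Bin(n,n^{-1/2})=t)} = (p\sqrt{n})^t \cdot \left(\frac{1-p}{1-n^{-1/2}}\right)^{n-t}.
\]
Since $p < n^{-1/2}$, the second factor is $\ge 1$, so dropping it yields the clean lower bound $\BP(Bin(n,p)=t) \ge (p\sqrt{n})^t \cdot \BP(Bin(n,n^{-1/2})=t)$. Setting $L := -\log\BP(Bin(n,n^{-1/2})=t) > 0$ and taking negative logarithms, the lemma would follow from showing
\[
-\log\BP(Bin(n,p)=t) \le L + t\log\frac{1}{p\sqrt{n}} \le C L \log p^{-1}
\]
for an absolute constant $C$. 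Because $\log(1/(p\sqrt{n})) \le \log p^{-1}$, this reduces to the purely quantitative statement $t = O(L)$; i.e., the probability of landing exactly at $t$ is at most $e^{-\Omega(t)}$ for the $Bin(n,n^{-1/2})$ distribution whenever $t \ge 2\sqrt{n}$.

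To prove this bound on $L$, I would combine Stirling's estimate $\binom{n}{t} \le (en/t)^t$ with $(1-n^{-1/2})^{n-t} \le \exp(-(n-t)/\sqrt{n})$ to get
\[
\BP(Bin(n,n^{-1/2})=t) \le \left(\frac{e\sqrt{n}}{t}\right)^t \exp\!\left(\frac{t}{\sqrt{n}} - \sqrt{n}\right).
\]
Writing $t = c\sqrt{n}$ with $c \ge 2$, the resulting lower bound on $L$ becomes $\sqrt{n}(c\log c - c + 1) - c$. Since the function $c \mapsto c\log c - c + 1$ is positive and increasing on $[2,\infty)$ (with value $2\log 2 - 1 > 0.38$ at $c=2$), a routine check gives $L \ge \Omega(c\sqrt{n}) = \Omega(t)$ uniformly in $c$, which completes the argument.

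The main obstacle is the ``borderline'' regime $2\sqrt{n} \le t \le 2e\sqrt{n}$: in that range the pure Chernoff factor $(e\sqrt{n}/t)^t$ is useless (it can exceed $1$), and it is essential to retain the $e^{-\sqrt{n}}$ contribution coming from $(1-n^{-1/2})^{n-t}$. Once that factor is kept, the positivity of $c\log c - c + 1$ on $[2,\infty)$ gives the desired linear-in-$t$ bound uniformly, and plugging back in yields $\BP(Bin(n,p)=t) \ge \BP(Bin(n,n^{-1/2})=t)^{O(\log p^{-1})}$ as claimed.
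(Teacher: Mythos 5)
Your proof is correct, and it takes a genuinely different — and in fact more careful — route than the paper's. The paper decomposes each probability as $p^t\binom{n}{t}\cdot(1-p)^{n-t}$ and tries to bound the two factors separately, invoking in particular the inequality $\left(\tfrac{1}{\sqrt{n}}\right)^t\binom{n}{t} \le (\sqrt{n}/t)^t$. That inequality is actually backwards: the standard bounds are $(n/t)^t \le \binom{n}{t} \le (en/t)^t$, so the correct upper bound carries an extra $e^t$, and in the borderline range $2\sqrt{n} \le t \le 2e\sqrt{n}$ the quantity $\left(\tfrac{1}{\sqrt{n}}\right)^t\binom{n}{t}$ can exceed $1$ (indeed at $t=2\sqrt{n}$ it is $\exp(\Theta(\sqrt{n}))$). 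Once that factor exceeds $1$, raising it to a power $O(\log p^{-1}) \ge 1$ only increases it, while $p^t\binom{n}{t}$ is strictly smaller (since $p < n^{-1/2}$), so the paper's intermediate claim $p^t\binom{n}{t} \ge \bigl(\left(\tfrac{1}{\sqrt{n}}\right)^t\binom{n}{t}\bigr)^{O(\log p^{-1})}$ fails there; the lemma's conclusion is still true, but the factor-by-factor decomposition does not establish it. Your approach avoids this exactly as you identify: you take the full ratio $(p\sqrt{n})^t \cdot \bigl((1-p)/(1-n^{-1/2})\bigr)^{n-t}$ without splitting, reduce everything to the single cleaner inequality $t = O(L)$ with $L = -\log\BP(Bin(n,n^{-1/2})=t)$, and — crucially — retain the $(1-n^{-1/2})^{n-t}$ factor when lower-bounding $L$, which is what makes $L = \Omega(t)$ hold even when $\left(\tfrac{1}{\sqrt{n}}\right)^t\binom{n}{t} > 1$. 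Your estimate $L \ge \sqrt{n}(c\log c - c + 1) - c$ together with the observation that $\log c - 1 + 1/c$ is increasing on $[2,\infty)$ with value $\approx 0.19$ at $c=2$ does give $L = \Omega(c\sqrt{n}) = \Omega(t)$; the residual $-c$ term is absorbed for $n$ sufficiently large, which is the regime in which the paper uses this lemma. In short: a different decomposition of the problem, and yours is the one that actually closes on the stated range $t \ge 2\sqrt{n}$.
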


\begin{proof}
    Note that $\BP(Bin(n, p) = t) = p^t (1-p)^{n-t} {n \choose t}$ whereas $\BP(Bin(n, p) = t) = (\frac{1}{\sqrt{n}})^t (1-\frac{1}{\sqrt{n}})^{n-t} {n \choose t}.$ Note that $p^t \cdot {n \choose t} \ge (pn/(et))^t$ whereas $(\frac{1}{\sqrt{n}})^t \cdot {n \choose t} \le (\sqrt{n}/t)^t.$ Then, $\sqrt{n}/t \le \frac{1}{2}$ but $pn/(et) \ge p/e,$ so 
\begin{equation} \label{a}
p^t {n \choose t} \ge (pn/(et))^t \ge (\sqrt{n}/t)^{O(\log p^{-1})} \ge \left(\left(\frac{1}{\sqrt{n}}\right)^t \cdot {n \choose t}\right)^{O(\log p^{-1})}.
\end{equation}
Finally, we have that $1 \ge 1-p \ge 1-\frac{1}{\sqrt{n}},$ so 
\begin{equation} \label{b}
(1-p)^{n-t} \ge \left(1 - \frac{1}{\sqrt{n}}\right)^{(n-t) \cdot O(\log p^{-1})}.
\end{equation}
    Multiplying Equations \eqref{a} and \eqref{b} finishes the proof.
\end{proof}

We finish as follows. First, choose $t$ as large as possible so that $\BP(Bin(n, n^{-1/2}) \ge t) \ge \eps^{2} (\log \eps^{-1})^{-1} \cdot \exp\left(-\Theta(\ell^3 \sqrt{n} \log n)\right).$ For the constant in the $\Theta$ sufficiently large, we will have that $t \ge 2 \sqrt{n}$. Then, this also means that $\BP(Bin(n, n^{-1/2}) = t) \ge \eps \cdot \exp\left(-\Theta(\ell^3 \sqrt{n} \log n)\right),$ so by Lemma \ref{Binomial}, 
\[\BP(Bin(n, p) \ge t) \ge \BP(Bin(n, p) = t) \ge \eps^{\Theta(\log p^{-1})} \cdot \exp\left(-\Theta(\ell^3 \sqrt{n} \log n \log p^{-1})\right).\]

Now, since we have chosen $t$ as large as possible, the probability that $X > t$ is at most $\eps^2 (\log \eps^{-1})^{-1} \cdot \exp\left(-\Theta(\ell^3 \sqrt{n} \log n)\right),$ which means that the distribution of a trace $\tilde{x}'$ will differ in total variation distance from the distribution of a trace $\tilde{x}$ with $p = n^{-1/2}$ by at most $\eps^{2} (\log \eps^{-1})^{-1} \cdot \exp\left(-\Theta(\ell^3 \sqrt{n} \log n)\right).$ Therefore, even if we accumulate $\eps^{-2} \log \eps^{-1} \cdot \exp\left(\Theta(\ell^3 \cdot \sqrt{n} \log n)\right)$ samples, the total variation distance between the joint distributions of the traces will not differ by more than $0.01,$ since the new samples $\tilde{x}'$ are independent and identically distributed.

Since it takes an expected $\eps^{-\Theta(\log p^{-1})} \cdot \exp\left(\Theta(\ell^3 \sqrt{n} \log n \log p^{-1})\right)$ steps until we get a trace of length $t$, the total number of samples needed is
\[\eps^{-\Theta(\log p^{-1})} \cdot \exp\left(-\Theta(\ell^3 \sqrt{n} \log n \log p^{-1})\right) \cdot \eps^{-2} \log \eps^{-1} \cdot \exp\left(\Theta(\ell^3 \sqrt{n} \log n)\right)\]
\[ = \eps^{-\Theta(\log p^{-1})} \cdot \exp\left(-\Theta(\ell^3 \sqrt{n} \log n \log p^{-1})\right)\]
    samples. Finally, we can simply run the algorithm implied by Equation \eqref{MainEq2} of Theorem \ref{main}.

\section*{Acknowledgments}

I would like to thank Prof. Piotr Indyk for many helpful discussions, as well as for reading and providing helpful feedback on this paper. I would also like to thank Savvy Raghuvanshi for reading a draft of this paper. I would finally like to thank Cameron Musco for informing me about Prony's method and its relation to Proposition 5.1.


\newcommand{\etalchar}[1]{$^{#1}$}

\end{document}